\documentclass{article} 
\usepackage{iclr2026_conference,times}

\usepackage{amsmath,amsfonts,bm}

\def\eqref#1{equation~\ref{#1}}

\def\1{\bm{1}}

\DeclareMathAlphabet{\mathsfit}{\encodingdefault}{\sfdefault}{m}{sl}
\SetMathAlphabet{\mathsfit}{bold}{\encodingdefault}{\sfdefault}{bx}{n}

\DeclareMathOperator*{\argmax}{arg\,max}

\usepackage{hyperref}
\usepackage{url}
\usepackage[utf8]{inputenc} 
\usepackage[T1]{fontenc}    
\usepackage{hyperref}       
\usepackage{url}            
\usepackage{booktabs}       
\usepackage{amsfonts}       
\usepackage{nicefrac}       
\usepackage{microtype}      
\usepackage[dvipsnames, svgnames, x11names]{xcolor}
\usepackage{amsmath}
\usepackage{filecontents}

\usepackage{amsfonts}
\usepackage{amssymb}
\usepackage{algorithmic}
\usepackage{graphicx}
\usepackage{textcomp}
\usepackage{xcolor}

\usepackage{mathrsfs}
\usepackage{amsthm}
\usepackage{epstopdf}
\usepackage{balance}
\usepackage{multirow}

\usepackage{epsfig,endnotes}
\usepackage{grffile}
\usepackage[font=bf]{caption}
\usepackage{color, url}
\usepackage{xspace} 
\usepackage[ruled,linesnumbered]{algorithm2e}
\usepackage{balance}
\usepackage{bm}
\usepackage{rotating}

\usepackage{enumitem}
\usepackage{makecell}
\usepackage{pdfx}
\usepackage{bm}
\usepackage{subcaption}
\usepackage{float}
\usepackage{fancyhdr}
\newcommand{\myparatight}[1]{\noindent{\bf {#1}.}~}
\newcommand{\name}{\text{EnsembleSHAP}}

\newcommand\WARN[1]{\textcolor{red}{#1}}

\newtheorem{theorem}{Theorem}

\newtheorem{property}{Property}
\renewcommand{\mathbf}[1]{\bm{#1}}

 \newenvironment{proofsketch}
{\begin{proof}[Proof sketch]}
{\end{proof}}

\title{\Large \bf {\name}: Faithful and Certifiably Robust Attribution for Random Subspace Method}

\author{Yanting Wang \& Jinyuan Jia \\
Pennsylvania State University\\
\texttt{\{yanting,jinyuan\}@psu.edu} \\
}

\iclrfinalcopy
\begin{document}

\maketitle

\begin{abstract}
Random subspace method has wide security applications such as providing certified defenses against adversarial and backdoor attacks, and building robustly aligned LLM against jailbreaking attacks. However, the explanation of random subspace method lacks sufficient exploration. Existing state-of-the-art feature attribution methods, such as Shapley value and LIME, are computationally impractical and lacks security guarantee when applied to random subspace method. In this work, we propose {\name}, an intrinsically faithful and secure feature attribution for random subspace method that reuses its computational byproducts. Specifically, our feature attribution method is 1) computationally efficient, 2) maintains essential properties of effective feature attribution (such as local accuracy), and 3) offers guaranteed protection against privacy-preserving attacks on feature attribution methods. To the best of our knowledge, this is the first work to establish provable robustness against explanation-preserving attacks. We also perform comprehensive evaluations for our explanation's effectiveness when faced with different empirical attacks, including backdoor attacks, adversarial attacks, and jailbreak attacks. The code is at \url{https://github.com/Wang-Yanting/EnsembleSHAP}. \WARN{WARNING: This document may include content that could be considered harmful.}
\end{abstract}

\section{Introduction}
Random subspace method~\citep{ho1998random_subspace_method}, also referred to as attribute bagging~\citep{bryll2003attribute_bagging}, is an ensemble learning method that combines the prediction results on random subsets of features to obtain the final prediction. While it was initially proposed to enhance decision trees~\citep{ho1998random_subspace_method}, this method gained widespread adaptation recently in security applications~\citep{jia2021bagging,levine2020sparse_cert,robey2023smoothllm,zhang2023pointcert,wang2021certify_graph,zeng2023ranmask,cao2023rallm}, such as providing certified defenses~\citep{levine2020sparse_cert,zeng2023ranmask,zhang2023pointcert,wang2021certify_graph,jia2020certified,zhang2021backdoor,cao2021provably,jia2022almost} against adversarial attacks, and enhancing the robustness of large language models against jailbreaking attacks~\citep{cao2023rallm,robey2023smoothllm}. This method begins by generating predictions for multiple sub-sampled versions of a given input sample using a base model. It then creates an ensemble model that aggregates these predictions using a majority vote to determine the final prediction. As this approach only requires black-box access to the base model, it can be applied across different base model architectures~\citep{levine2020sparse_cert,zeng2023ranmask,zhang2023pointcert,wang2021certify_graph,robey2023smoothllm}. Understanding the output of the random subspace method is crucial. For instance, in defending against jailbreaking attacks, it's essential for users to pinpoint the specific elements of the input prompt that lead to its classification as `harmful' (or `non-harmful'). Additionally, when a certified defense is compromised by strong empirical attacks, it becomes important for users to determine the specific adversarial words that caused the misclassification.

  However, existing leading feature attribution methods for black-box models~\citep{lundberg2017SHAP, chen2023estimate_shapley,ribeiro2016lime,enouen2023textgenshap,paes2024multi_level,amara2024syntaxshap,mosca2022shap_based,lopardo2023faithful,wang2025tracllm} such as Shapley values~\citep{lundberg2017SHAP, chen2023estimate_shapley} and LIME~\citep{ribeiro2016lime} have following disadvantages when applied to random subspace method.
 Firstly, they incur prohibitively high computational costs. Specifically, these methods involve randomly perturbing the input sample numerous times, denoted by $M$. The explained model then generates a prediction for each perturbed version of the input. In the case of the random subspace method, the objective is to explain the behavior of the ensemble model. Therefore, for each perturbed input version, $N$ sub-sampled variants of it are created and each is evaluated to generate the ensemble model's prediction, resulting in $M\times N$ total queries to the base model for just one input sample. In practical applications, both $N$ and $M$ can exceed $1,000$~\citep{enouen2023textgenshap,zeng2023ranmask,levine2020sparse_cert}, which significantly increases the computational cost. 
Secondly, current feature attribution methods do not provide security guarantees against the recently proposed \emph{explanation-preserving attack}~\citep{nadeem2023sok,noppel2023sok}. In this type of attack, an adversary can perturb certain features of the input sample to cause misclassifications, and at the same time conceal these changes by preserving the original explanation.
  
  \myparatight{Our contributions}
  In this work, we propose a computationally efficient feature attribution method for random subspace method that is inherently faithful and secure. We conduct a theoretical analysis to show that our method maintains key properties of Shapley value and is provably robust against explanation-perserving attacks on feature attribution methods. We note that this is the first work to establish provable robustness against explanation-preserving attacks. We carry out empirical evaluations to assess the effectiveness of our explanations across various security applications.

\section{Background and Related Work}
\label{background}
\vspace{-2mm}
\subsection{Random Subspace Method}
\vspace{-2mm}
\label{sec:subspace-based-method}
Random subspace-based method~\citep{ho1998random_subspace_method} is a versatile approach that is agnostic to model architecture and scalable to large neural networks. This method employs a base model to generate predictions for random feature subsets, which are then aggregated to produce the ensemble prediction. Next, we summarize a general framework for the random subspace method and introduce its security applications.

\myparatight{Building an ensemble model}
Suppose we have a testing input $\mathbf{x}=\{x_1, x_2, \cdots, x_d\}$ that consists of $d$ elements, where each element represents a feature of the input. For instance, when $\mathbf{x}$ is a text, each $x_i$ represents a word. We use $f: X\rightarrow [C]$ to represent the base model (e.g., an off-the-shelf text classifier), where $X$ is the space of model input and $[C]$ represents the set of unique labels $\{1,2, ..., C\}$ that base model can output. For example, in a binary classification problem,$[C] = \{1,2\}$. 
 For the simplicity of notation, we define a simplified base model $h$ that can take subsets of $\mathbf{x}$ (denoted by $\mathbf{z}$) as input. Specifically, we define $h: \mathcal{P}(\mathbf{x}) \rightarrow [C]$ as $h(\mathbf{z}) = f(\text{ABLATE}(\mathbf{x},\mathbf{z}))$, where $\mathcal{P}(\mathbf{x})$ is the power set of $\mathbf{x}$, $\mathbf{z} \in \mathcal{P}(\mathbf{x})$ is a subset of $\mathbf{x}$ and ABLATE replaces all features of $\mathbf{x}$ not in $\mathbf{z}$ by a special value (e.g., the `[MASK]' token). That is, $x_i = x_i \text{ if } x_i \in \mathbf{z}$, and $x_i = SpecialValue \text{ if } x_i \notin \mathbf{z}$.

The random subspace method first uses the simplified base model $h$ to make predictions for random subsets of $\mathbf{x}$. Formally, we define the probability that a label $c \in [C]$ is predicted by the base model as:
\begin{align}p_c(\mathbf{x},h,k) = \mathbb{E}_{\mathbf{z}\sim \mathcal{U}(\mathbf{x},k)}[\mathbb{I}(h(\mathbf{z})=c)],\end{align} where $\mathbb{I}$ is an indicator function whose output is 1 if the condition is satisfied and 0 otherwise, and $\mathbf{z} \sim \mathcal{U}(\mathbf{x},k)$ is a subset of $\mathbf{x}$ with size $k$ that is randomly sampled from the uniform distribution. i.e., $\Pr(\mathbf{z} = \mathbf{z}' \mid \mathbf{z} \sim \mathcal{U}(\mathbf{x},k)) = \frac{1}{\binom{d}{k}}$ for any $\mathbf{z}'\subseteq \mathbf{x}$ satisfying $|\mathbf{z}'| = k$. Then the label with the largest probability is viewed as the predicted label of the ensemble classifier $H$ for the testing input $\mathbf{x}$, i.e., 
\begin{align}H(\mathbf{x},h,k) = \argmax_c p_c(\mathbf{x},h,k).\label{eqn-ensemble}\end{align} 
In practice, the Random Subspace Method approximates the probability $p_c$ through Monte Carlo sampling. Initially, it generates $N$ groups of features from the original set $\mathbf{x}$ by sampling without replacement according to a uniform distribution $\mathcal{U}(\mathbf{x},k)$. These subsets are represented as a collection of feature groups $G = \{\mathbf{z}_1, \ldots, \mathbf{z}_N\}$. For each of these feature groups $\mathbf{z}_j$, the method employs a base classifier to predict a label. It then counts the occurrences $n_c$ of each possible label $c$ within a predetermined set of labels ${1,2,\ldots, C}$, where $C$ represents the total number of unique labels. The calculation of $n_c$ is formally described by:
\begin{align}
    n_c(\mathbf{x},h,k) = \sum_{j=1}^{N} \mathbb{I}(h(\mathbf{z}_j)=c), c=1,2,\cdots, C,
\end{align}
where $\mathbb{I}$ denotes the indicator function, returning 1 when its condition is met and 0 otherwise. Consequently, the label probability $p_c(\mathbf{x},h,k)$ is estimated by $\frac{n_c(\mathbf{x},h,k)}{N}$. 

\myparatight{Security applications of random subspace method} Random subspace method is recently used to build state-of-the-art certified defenses~\citep{levine2020sparse_cert,zeng2023ranmask,wang2021certify_graph,zhang2023pointcert}. Many previous studies~\citep{levine2020sparse_cert,zhang2023pointcert} showed that the ensemble model built by a random subspace method is certifiably robust against adversarial attacks, i.e., its prediction for a testing input remains unchanged once the $\ell_0$-norm perturbation to the testing input is bounded. Another strand of research~\citep{robey2023smoothllm,cao2023rallm} uses the random subspace method to build robust LLM against jailbreaking attacks, leveraging the fragility of adversarially-generated jailbreaking prompts to perturbations. These methods first use the LLM to generate responses for each of the perturbed input prompts, and these responses are then labeled as either `harmful' or `non-harmful' by checking keywords. Lastly, these labels are aggregated to determine whether the input prompt should be approved or rejected. Existing studies mainly focus on robustness, leaving the explanation of the random subspace method unexplored. Next, we introduce feature attribution for random subspace method.

\vspace{-2mm}
\subsection{Feature Attribution}
\vspace{-2mm}
Feature attribution aims to explain why a model makes a certain prediction for an input by attributing the prediction to the most important features in the input. Existing feature attribution techniques~\citep{lundberg2017SHAP, chen2023estimate_shapley,ribeiro2016lime,paes2024multi_level,mosca2022shap_based,petsiuk2018rise,sundararajan2017integrated_gradient,shrikumar2017deeplift, smilkov2017smoothgrad} fall into two main categories: 1) white-box methods, exemplified by integrated gradients~\citep{sundararajan2017integrated_gradient} and DeepLIFT~\citep{shrikumar2017deeplift}, and 2) black-box methods, including LIME~\citep{ribeiro2016lime} and Shapley values~\citep{lundberg2017SHAP,mosca2022shap_based,chen2023estimate_shapley,sundararajan2020many_shapley}. White-box methods require knowledge of the explained model's architecture, parameters, and gradients, whereas black-box methods do not rely on such detailed knowledge. This study concentrates on black-box feature attribution methods due to their general applicability across various model architectures. 

\myparatight{Attacks to feature attribute methods}
Recent studies~\citep{noppel2023sok,nadeem2023sok} have proposed the \emph{explanation-preserving attack} to feature attribution methods. This attack involves adversarially perturbing the input sample in a manner that induces misclassifications while retaining the original explanation. This attack could be employed to conceal ongoing input manipulation~\citep{zhang2020attack_explanation}. For instance, an attacker could replace certain words in a clean sentence with adversarial alternatives, leading to misclassification, while those words still maintain low relevance in the resulting explanation.

\myparatight{Limitations of existing feature attribute methods}
Existing state-of-the-art black-box feature attribution methods~\citep{lundberg2017SHAP, chen2023estimate_shapley,ribeiro2016lime,enouen2023textgenshap,paes2024multi_level,amara2024syntaxshap,mosca2022shap_based,lopardo2023faithful,petsiuk2018rise} have following limitations. Firstly, they are computationally inefficient when applied to the random subspace method. Techniques such as LIME~\citep{ribeiro2016lime} and Shapley values~\citep{lundberg2017SHAP,enouen2023textgenshap,chen2023estimate_shapley} necessitate a large number of queries (e.g., 1,000) to the black-box model using perturbed versions of the test input. As detailed in Section~\ref{sec:subspace-based-method}, each query to the ensemble classifier requires aggregating the prediction outcomes from all sub-sampled versions of the perturbed test input, leading to prohibitively high computation costs. Secondly, these methods lack theoretical guarantees regarding their performance when subjected to explanation-preserving attacks. Theoretical studies on robust feature attribution~\citep{anani2025certified_explanation1,wang2024certified_explanation2,lin2023certified_explanation3,li2023certified_explanation4, chen2019certified_explanation5, wang2020certified_explanation6} are largely restricted to prediction-preserving attacks, where the adversary seeks to substantially alter the explanation while keeping the classifier’s prediction unchanged. In the next section, we design a feature attribution method that overcomes these limitations.

\section{Algorithm Design}
\label{method}
\subsection{Problem Formulation}
\vspace{-2mm}
 Consider an ensemble model $H$ with base model $h$ and sub-sampling size $k$. For a given test input $\mathbf{x}$, consisting of $d$ features (denoted by $\mathbf{x} = \{x_1, x_2, \cdots, x_d\}$), let $\hat{y}$ represent the predicted label for $\mathbf{x}$, such that $H(\mathbf{x}, h, k) = \hat{y}$. The goal of feature attribution~\citep{sundararajan2017integrated_gradient, paes2024multi_level,amara2024syntaxshap,lopardo2023faithful,chuang2024Large} is to assign an importance score $\alpha^{\hat{y}}_i$ to each element $x_i \in \mathbf{x}$, indicating its contribution to the ensemble model’s prediction of $\hat{y}$. The user can later consider the top-$e$ features with the highest importance scores to be the most important. For instance, if $\mathbf{x}$ is a text consisting of $d$ words, each word would receive an importance score. By ranking these scores, users can easily identify the most influential words leading to the ensemble model's prediction. 

\subsection{Design Goal}

Our approach is guided by three primary design goals. First, the feature attribution method should be computationally efficient, as predictions from an ensemble model are already resource-intensive, so the method must avoid repeatedly using the ensemble model for predictions. Second, it should adhere to key properties of effective feature attribution~\citep{lundberg2017SHAP}, such as local accuracy. Third, the method must be certifiably robust against explanation-preserving attacks. Specifically, if an adversary modifies a small number of features in the input to change the model's prediction, the most important features reported by the attribution method should include these adversarial features.

\subsection{Our Design}
\vspace{-2mm}
Next, we introduce our {\name}. Following existing feature attribution works~\citep{sundararajan2017integrated_gradient, lopardo2023faithful,chuang2024Large}, we design an importance score to measure the contribution of each feature to the model's output label (denoted as ${\hat{y}}$). 
Specifically, we define the important score of the $i$-th feature for the predicted label $\hat{y}$ as:
\begin{align}
    \alpha_i^{\hat{y}}(\mathbf{x},h,k)  = \frac{1}{k}\mathbb{E}_{\mathbf{z} \sim \mathcal{U}(\mathbf{x},k)} [\mathbb{I}(x_i \in \mathbf{z}) \cdot \mathbb{I}(h(\mathbf{z})={\hat{y}} )]. \label{eqn_importance_score}
\end{align}
This importance score of a feature $x_i$ can be seen as the probability that a randomly sampled feature group contains $x_i$ and predicts for $\hat{y}$. The intuition behind this importance value is that the output generated by the ensemble model reflects the aggregated impact of all feature groups. For any given feature group $\mathbf{z}_j\in G$ having size $k$, the contribution of each feature to this group's result is equally divided, amounting to $\frac{1}{k}$ of the group's outcome. If a feature is not in a given group, then the contribution of this feature to this group's result is 0. Consequently, the contribution of a single feature is the aggregate of its contributions across all groups. This intuition leads to the property of local accuracy, which will be discussed in Section~\ref{theory}.

In practice, we use Monte Carlo sampling to approximate the importance scores. We first sub-sample $N$ times to get a set of feature groups, denoted by $G = \{\mathbf{z}_1, \ldots, \mathbf{z}_N\}$, and get the base model's prediction for each of these feature groups. Then the importance score can be naively approximated as $\frac{1}{k\cdot N} \sum_{j=1}^{N}[\mathbb{I}(x_i \in \mathbf{z}_j)\cdot \mathbb{I}(h(\mathbf{z}_j)=\hat{y})]$. When the number of sub-sampled groups (denoted as 
$N$) is large, each feature is likely to appear in a similar number of groups. However, with a smaller $N$, variations in the appearance frequency can result in an unfair assessment of their importance. Features that appear more frequently in sub-sampled feature groups are likely to have greater importance. To solve this problem, we observe that the important score of feature $i$ for the predicted label $\hat{y}$ can be rewritten as:
\begin{align}
    &\alpha_i^{\hat{y}}(\mathbf{x},h,k) = \frac{1}{k} \Pr(x_i \in \mathbf{z})\cdot \Pr(h(\mathbf{z})=\hat{y}|x_i \in \mathbf{z})= \frac{1}{d} \Pr(h(\mathbf{z})=\hat{y}|x_i \in \mathbf{z}),
\end{align}
where $\mathbf{z}$ represents the randomly sub-sampled feature group. Then the importance score can be approximated by:
\begin{align}
    &\alpha^{\hat{y}}_i (\mathbf{x},h,k) \approx \frac{1}{d \cdot \sum_{j=1}^{N} \mathbb{I}(x_i \in \mathbf{z}_j)} \sum_{j=1}^{N}\mathbb{I}(x_i \in \mathbf{z}_j)\cdot \mathbb{I}(h(\mathbf{z}_j)=\hat{y}),\label{eqn_frequency_normalization}
\end{align} 
where $d$ is the total number of features. The introduction of the new normalization term, $\sum_{j=1}^{N} \mathbb{I}(x_i \in \mathbf{z}_j)$, helps to mitigate the issue of unbalanced frequency. This is shown by experiments in Appendix~\ref{appendix_frequency_normalization}.

\myparatight{Computation cost}
Our method utilizes the predictions from the base model for each feature group $z_j \in G$, which are already computed for producing the prediction of the ensemble model. Therefore, given that random subspace method has already been deployed, our method adds negligible additional computational time (around 0.03\,\text{s}). Experimental results can be found in Appendix~\ref{appendix_computation_time}.
\section{Theoretical Analysis}\label{theory}
In this section, we begin by establishing the predicted label probability $p_{\hat{y}}$ on perturbed testing inputs to support subsequent theoretical analysis. Then we demonstrate that our method adheres to fundamental properties for effective feature attribution. Finally, we provide theoretical guarantees regarding our method's performance under attacks to feature attribution. For simplicity, we abuse the notation and use $i$ to represent the feature $x_i$ for theoretical analysis.

\vspace{-2mm}
\subsection{Define $p_{\hat{y}}$ on Feature Subsets}\label{def-subset}
\vspace{-2mm}
Before theoretical analysis, we first define the predicted label probability $p_{\hat{y}}$ when a subset of features $S\subseteq \mathbf{x}$ is present. In this case, random subspace method sub-samples feature groups with size $k$ from $S$. Particularly, given any feature subset $S$, we define the probability that the label $\hat{y}$ is predicted by the base model (when features not in $S$ are removed) as:
\begin{align}p_{\hat{y}}(S,h,k) = \mathbb{E}_{\mathbf{z}\sim \mathcal{U}(S,k)}[\mathbb{I}(h(\mathbf{z})=\hat{y})] ,\end{align} where $\mathbf{z} \sim \mathcal{U}(S,k)$ is a subset of $S$ with size $k$ that is randomly sampled from the uniform distribution. i.e., $\Pr(\mathbf{z} = \mathbf{z}' \mid \mathbf{z} \sim \mathcal{U}(S,k)) = \frac{1}{\binom{|S|}{k}}$ for any $\mathbf{z}'\subseteq S$. We note that there is a special case when $|S|<k$, which means that intended random sub-sampling of $k$ features from 
$S$ cannot proceed as usual. To address this, we let $p_{\hat{y}}(S,h,k) = \frac{1}{C}$, which means that the base model randomly guesses the label. We note that this assumption is necessary if we want to define Shapley value for random subspace method, because it is impossible to sub-sample $k$
 features from less than 
 $k$ features. In the following section, we utilize this definition to establish a Shapley value for random subspace method.
\vspace{-2mm}
\subsection{Shapley Value Based Explanation for Random Subspace Method}\label{theory-define-shapley}
\vspace{-2mm}
Derived from game theory~\citep{shapley1953value}, Shapley values are intended for credit assignment among players in cooperative games. A game is represented by a set of players $D$ and a value function $v(S): \mathcal{P}(D) \rightarrow \mathbb{R}$, where $\mathcal{P}(D)$ means the power set of $D$. The Shapley value for player $i$ is defined as: \begin{align}\phi_i(v) = &\sum_{S \subseteq D\setminus\{i\}} \frac{|S|! (d - |S| - 1)!}{d!} (v(S \cup \{i\}) - v(S)).
\end{align}
Shapley value has long been regarded as the gold standard for feature attribution~\citep{lundberg2017SHAP,mosca2022shap_based,chen2023estimate_shapley, sundararajan2020many_shapley, paes2024multi_level,amara2024syntaxshap,sundararajan2017integrated_gradient}. In order to explain the output of a machine learning model, many existing works~\citep{paes2024multi_level,amara2024syntaxshap,sundararajan2017integrated_gradient} use the probability of the model's output as the value function. Similarly, we can define a Shapley value for random subspace method. Specifically, we let the label probability $p_{\hat{y}}$ be the value function $v$ and let the input feature set $\mathbf{x}$ be the set of players $D$. Then the Shapley value for feature $i$, denoted as $\phi_i(p_{\hat{y}})$, can be written as:
\begin{align}\sum_{S \subseteq \mathbf{x}\setminus\{i\}} \frac{|S|! (d - |S| - 1)!}{d!} (p_{\hat{y}}(S \cup \{i\},h,k) - p_{\hat{y}}(S,h,k)).
\end{align} This value is empirically challenging to compute because $p_{\hat{y}}$ should be evaluated on all feature subsets, while evaluating $p_{\hat{y}}$ on a single feature subset requires $N$ forward passes of the base model. In the next part, we demonstrate that our computationally efficient importance score maintains the key properties of Shapley value.
\vspace{-1mm}
\subsection{Properties of {\name}}
\vspace{-1mm}
{\name} possesses \emph{local accuracy} and \emph{symmetry} as derived from Shapley values~\citep{lundberg2017unified, chen2023algorithms}, whilst substituting the remaining two properties inherent to Shapley values, specifically \emph{dummy} and \emph{linearity}, with \emph{order consistency} (with Shapley value). The linearity property is omitted because its application is not straightforward in the context of subspace methods. Furthermore, the relaxation of the dummy property is from the observation that in many cases, people are more interested in the comparative importance of features over their absolute importance scores~\citep{lopardo2023faithful, xue2024stability}. We introduce these properties below.
 
The first property is \emph{local accuracy}. This property ensures that the explanation accurately reflects the behavior of the ensemble model for the testing input $\mathbf{x}$. It can be formally stated as follows.
\begin{property} (Local accuracy). 
\label{property-local}
{For any $\mathbf{x}$, $h$, and $k$, the importance score of all features sum up to $p_{\hat{y}}(\mathbf{x}, h, k)$, i.e., $\sum_{i\in \mathbf{x}} \alpha^{\hat{y}}_i(\mathbf{x},h,k) = p_{\hat{y}}(\mathbf{x},h,k)$.}
\end{property}
\vspace{-2mm}

The second property is \emph{symmetry}. The symmetry property states that if two features contribute equally to all possible feature subsets $S\subseteq \mathbf{x}$, then feature $i$ and $j$ should receive the same importance score.
\begin{property} \label{property-symmetry}(Symmetry). Given a pair of features $(i,j)$, if for any $S\subseteq \mathbf{x}\setminus \{i,j\}$, $p_{\hat{y}}(S\cup \{i\}, h, k) = p_{\hat{y}}(S \cup \{j\}, h, k)$, then $\alpha^{\hat{y}}_i(\mathbf{x},h,k) = \alpha^{\hat{y}}_j(\mathbf{x},h,k)$.
\end{property}

The third property is \emph{order consistency} (with Shapley value). This property ensures that if Shapley value ranks a feature as more significant, our attribution approach will also give it a higher importance. The Shapley value for random subspace method is defined in Section~\ref{theory-define-shapley}.
\begin{property} \label{property-order-consistency}(Order consistency with Shapley value). Given a pair of features $(i,j)$, $\alpha^{\hat{y}}_i(\mathbf{x},h,k)\geq \alpha^{\hat{y}}_j(\mathbf{x},h,k)$ if and only if $\phi_i(p_{\hat{y}})\geq \phi_j(p_{\hat{y}})$, where $\phi_i(p_{\hat{y}})$ and $\phi_j(p_{\hat{y}})$ respectively represent Shapley values of $i$ and $j$.
\end{property}

We provide the proof details in Appendix~\ref{proof_properties}. Our method essentially relaxes the dummy property of Shapley value to simplify its complex computation. Despite this alteration, the utility of the Shapley value is preserved in most scenarios due to the property of order consistency. This observation is supported by these commonly used metrics for feature attribution, such as the fidelity score~\citep{miro2024comprehensive,chuang2024Large}, perturbation curves~\citep{paes2024multi_level,chen2020hierarchical} and faithfulness~\citep{lopardo2023faithful}. These metrics rely on the relative order of importance scores rather than their absolute values.
\vspace{-1mm}
\subsection{Certified Detection of Adversarial Features}\label{theory-detect}
\vspace{-1mm}
In this part, we demonstrate that our explanation method provably detects adversarial features that causes model misclassifiation, therefore is provably secure against \emph{explanation-preserving attacks}. We suppose the attacker can modify at most $T$ features of the original testing input $\mathbf{x}$ to change the predicted label of the ensemble classifier. We denote the set of all possible perturbed test inputs $\mathbf{x}'$ as $\mathcal{B}(\mathbf{x},T)$, and we use $\mathbf{x}\ominus\mathbf{x}'$ to denote the set of modified features. Here, we focus on top-$e$ most important features reported by our method. i.e., $e$ features with highest importance scores for the predicted label. We denote this set of features before attack as $E(\mathbf{x})$, and use $E(\mathbf{x}')$ to represent the new set of top-$e$ most important features for $\mathbf{x}'$. Our goal is to derive the \emph{certified detection size} $\mathcal{D}(\mathbf{x},T)$, which is the intersection size lower bound between the set of modified features and the set of reported important features, which is formally defined as:
\begin{align}\mathcal{D}(\mathbf{x},T) =& \argmax_r, s.t. |(\mathbf{x}'\ominus \mathbf{x}) \cap E(\mathbf{x}')|\geq r, \\
&\forall \mathbf{x}' \in \mathcal{B}(\mathbf{x},T), H(\mathbf{x}')\neq H(\mathbf{x}).
\end{align}
We have the following result:
\begin{theorem} Given a testing input $\mathbf{x}$ which is originally predicted as $\hat{y}$. We suppose there exists $\mathbf{x}'\in \mathcal{B}(\mathbf{x},T)$ such that $H(\mathbf{x}')\neq \hat{y}$. Then $\mathcal{D}(\mathbf{x}, T)$ is the solution of the following optimization problem:
\begin{align}
&\mathcal{D}(\mathbf{x},T) = \argmax_r r,\text{ } s.t.\text{ }\forall \hat{y}'\neq \hat{y}:\text{ } \label{eqn_certified_detection}\\
  &\frac{1}{T-r+1}\cdot [\frac{\Delta}{2k} -\frac{r-1}{d} +\sum_{i = {d-T+r}}^{d} \underline{\alpha}^{\hat{y}'}_{q_i}(\mathbf{x},h,k)]\geq \overline{\alpha}^{\hat{y}'}_{w_{e-r+1}}(\mathbf{x},h,k)+\frac{1}{d}-\frac{1}{k}\frac{\binom{d-1-T}{k-1}}{\binom{d}{k}}\\
&\text{or}\\
&\frac{\Delta}{2k}(\frac{1}{T-r+1}-\frac{k-1}{e-r+1}) \\
&\geq \frac{1}{e-r+1}\sum_{i =1}^{e-r+1} \overline{\alpha}^{\hat{y}'}_{w_i}(\mathbf{x},h,k)+\frac{r-1}{d\cdot(T-r+1)}  -\frac{1}{T-r+1} \sum_{i = d-T+r}^{d}\underline{\alpha}^{\hat{y}'}_{q_i}(\mathbf{x},h,k) 
\end{align}
where $\Delta = {\underline{p}_{\hat{y}}(\mathbf{x},h,k)-\overline{p}_{\hat{y}'}(\mathbf{x},h,k)}$, $\underline{p}_{c}$ (or $\overline{p}_{c}$) represents the probability lower (or upper) bound of some label $c\in [C]$, $\underline{\alpha}^{\hat{y}'}_{i}(\mathbf{x},h,k)$ (or $\overline{\alpha}^{\hat{y}'}_{i}(\mathbf{x},h,k)$) represents the lower (or upper) bound of the feature $i$'s importance score for some label $\hat{y}' \neq \hat{y}$, $\{w_1, \cdots , w_{d}\}$ denotes the set of all features in descending order of the important value upper bound $\overline{\alpha}^{\hat{y}'}(\mathbf{x},h,k)$, i.e., $\overline{\alpha}_{w_1}^{\hat{y}'}(\mathbf{x},h,k)\geq \overline{\alpha}_{w_2}^{\hat{y}'}(\mathbf{x},h,k)\geq \cdots \geq \overline{\alpha}_{w_d}^{\hat{y}'}(\mathbf{x},h,k)$, and $\{q_1, \cdots , q_{d}\}$ denotes the set of all features in descending order of the important value lower bound $\underline{\alpha}^{\hat{y}'}(\mathbf{x},h,k)$, i.e., $\underline{\alpha}_{q_1}^{\hat{y}'}(\mathbf{x},h,k)\geq \underline{\alpha}_{q_2}^{\hat{y}'}(\mathbf{x},h,k)\geq \cdots \geq \underline{\alpha}_{q_d}^{\hat{y}'}(\mathbf{x},h,k)$. 
\end{theorem}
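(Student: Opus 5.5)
The approach is by contradiction. Fix any $\mathbf{x}'\in\mathcal{B}(\mathbf{x},T)$ with $H(\mathbf{x}')=\hat{y}'\neq\hat{y}$, and let $A=\mathbf{x}'\ominus\mathbf{x}$ with $|A|\le T$. Suppose at most $r-1$ modified features lie in $E(\mathbf{x}')$. We will show that either of the two displayed conditions for $\hat{y}'$ is then violated. Since the conditions must hold for every $\hat{y}'\neq\hat{y}$, this forces $|A\cap E(\mathbf{x}')|\ge r$, so the largest $r$ satisfying them is a valid lower bound.

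\textbf{Step 1 (attack gain).} The label flip yields a quantitative lower bound on how much importance for $\hat{y}'$ the modified features must carry in $\mathbf{x}'$. By the standard coupling argument of random subspace certification, a group $\mathbf{z}$ sampled from $\mathbf{x}$ and one sampled from $\mathbf{x}'$ coincide whenever $\mathbf{z}\cap A=\emptyset$. Flipping the vote needs $p_{\hat{y}'}(\mathbf{x}')\ge p_{\hat{y}}(\mathbf{x}')$, so at least $\Delta/2$ of probability mass must move. This mass comes only from groups intersecting $A$. Each such group contributes $\frac1k$ to the score of each of its members, and in particular at least $\frac1k$ to some feature of $A$. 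Summing over $A$ gives
\[
\sum_{i\in A}\alpha^{\hat{y}'}_i(\mathbf{x}')\;\ge\;\frac{\Delta}{2k}+\sum_{i\in A}\alpha^{\hat{y}'}_i(\mathbf{x})-\text{(mass lost)}.
\]
The correction terms $\frac1d$ and $\frac1k\binom{d-1-T}{k-1}/\binom{d}{k}$ come from bounding each feature's change by the probability that it co-occurs with $A$. A feature's score changes only on groups containing it and meeting $A$, and $\frac1d-\frac1k\binom{d-1-T}{k-1}/\binom dk$ is exactly $\frac1k\Pr(i\in\mathbf{z},\mathbf{z}\cap A\neq\emptyset)$. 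This is the key lemma: for $i\notin A$, $|\alpha_i(\mathbf{x}')-\alpha_i(\mathbf{x})|\le \frac1d-\frac1k\binom{d-1-T}{k-1}/\binom dk$, while for $i\in A$ the score is at most $\frac1d$.

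\textbf{Step 2 (averaging).} At least $T-r+1$ modified features lie outside $E(\mathbf{x}')$, and each of the $r-1$ inside has score at most $\frac1d$. Subtracting these $r-1$ gives a lower bound on the average score of the $\ge T-r+1$ modified features outside the top-$e$. Their original scores are worst-cased by the $T-r+1$ smallest lower bounds $\underline{\alpha}_{q_{d-T+r}},\dots,\underline{\alpha}_{q_d}$. This produces the left side $\frac{1}{T-r+1}[\frac{\Delta}{2k}-\frac{r-1}{d}+\sum\underline{\alpha}_{q_i}]$. The maximum among these features is at least this average.

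\textbf{Step 3 (two ways to contradict).} Since those features are excluded from $E(\mathbf{x}')$, every one of the $e-r+1$ unmodified features in $E(\mathbf{x}')$ must score at least as high in $\mathbf{x}'$. The contradiction comes in one of two forms.
\begin{itemize}
\item \emph{First condition.} The $(e-r+1)$-th largest unmodified score is at most $\overline{\alpha}_{w_{e-r+1}}$ plus the drift from Step 1. If this is strictly below the average, some excluded modified feature outranks a top-$e$ feature, which is a contradiction.
\item \emph{Second condition.} Use total mass instead. The unmodified top features jointly gain at most what the attack adds, and each of the $\Delta/2$ added groups can raise at most $k-1$ non-$A$ members by $\frac1k$. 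Summing the top $e-r+1$ upper bounds against $(e-r+1)$ times the average yields the second inequality, with the coefficient $\frac{k-1}{e-r+1}$.
\end{itemize}
Either violation contradicts $E(\mathbf{x}')$ being the top-$e$.

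The main obstacle is Step 1: making the per-feature drift bound and the global $\Delta/2k$ gain bound rigorous simultaneously under the worst-case coupling, and confirming that the exact constants match the statement. I would first prove the two bounds as separate lemmas via the coupling, and then combine them.
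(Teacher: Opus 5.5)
Your overall structure matches the paper's argument. That covers the contraposition on $|A\cap E(\mathbf{x}')|<r$, the $\frac{\Delta}{2k}$ gain on the modified set, and the drift bounds $\frac1d$ and $\frac1d-\frac1k\binom{d-1-T}{k-1}/\binom dk$. It also covers averaging over $T-r+1$ unreported modified features against the smallest lower bounds, and the first condition via $\overline{\alpha}^{\hat{y}'}_{w_{e-r+1}}$ plus drift. The gap is in your second condition. There you cap the joint gain of the unmodified top features by letting ``each of the $\Delta/2$ added groups'' raise at most $k-1$ non-$A$ members by $\frac1k$, i.e.\ by $(k-1)\frac{\Delta}{2k}$. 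But $\frac{\Delta}{2}$ is only a \emph{lower} bound on the mass the attacker must flip. Nothing stops the attacker from flipping many more groups, and then the unmodified features gain more than $(k-1)\frac{\Delta}{2k}$. So your upper bound on $\min_{v\in V}\alpha^{\hat{y}'}_v(\mathbf{x}',h,k)$ does not hold as argued, even though the formula you end up writing coincides with the theorem's.

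The missing idea is to couple both sides through one free quantity. Put $\delta_i=\alpha^{\hat{y}'}_i(\mathbf{x}',h,k)-\alpha^{\hat{y}'}_i(\mathbf{x},h,k)$ and $S=\sum_{i\in A}\delta_i$. Step 1 only gives $S\ge\frac{\Delta}{2k}$; otherwise $S$ is under the attacker's control. Every affected group has at least one modified and at most $k-1$ unmodified members, so $\sum_{i\notin A}\delta_i\le (k-1)S$. This gives
\[
\min_{v\in V}\alpha^{\hat{y}'}_v(\mathbf{x}',h,k)\le\frac{1}{e-r+1}\Bigl[\sum_{i=1}^{e-r+1}\overline{\alpha}^{\hat{y}'}_{w_i}(\mathbf{x},h,k)+(k-1)S\Bigr],
\]
while
\[
\max_{u\in U}\alpha^{\hat{y}'}_u(\mathbf{x}',h,k)\ge\frac{1}{T-r+1}\Bigl[S-\frac{r-1}{d}+\sum_{i=d-T+r}^{d}\underline{\alpha}^{\hat{y}'}_{q_i}(\mathbf{x},h,k)\Bigr].
\]
The difference of these bounds depends on $S$ only through $S\bigl(\frac{1}{T-r+1}-\frac{k-1}{e-r+1}\bigr)$. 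The paper explicitly assumes this coefficient is positive; otherwise extra flips only help the attacker and this branch certifies nothing. Under that assumption the worst case is $S=\frac{\Delta}{2k}$, which yields exactly the second inequality. You need this monotonicity step and the sign assumption.

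Separately, your Step 1 carries a ``(mass lost)'' term that Step 2 silently drops. To get the stated constants you must assert $S\ge\frac{\Delta}{2k}$ outright. The paper gets this by counting the at least $\frac{\Delta}{2}\binom{d}{k}$ groups flipped to $\hat{y}'$, each containing a modified feature. That count implicitly treats changed groups as moving toward $\hat{y}'$ only.
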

In practice, the maximum $r$ is found by binary search. The specifics for computing $\underline{p}_{\hat{y}}$, $\overline{p}_{\hat{y}'}$, $\underline{\alpha}^{\hat{y}'}_{i}(\mathbf{x},h,k)$, and $\overline{\alpha}^{\hat{y}'}_{i}(\mathbf{x},h,k)$ can be found in Appendix~\ref{theory-confidence-bounds}, and the proof is available in Appendix~\ref{proof_detection}. The proof intuition is that to change the label from $\hat{y}$ to $\hat{y}'$, the attacker must ensure that more feature groups predict for $\hat{y}'$. However, the attacker can only alter the predicted labels of feature groups that include at least one feature in $\mathbf{x}\ominus\mathbf{x}'$. Consequently, the importance values of features within $\mathbf{x}\ominus\mathbf{x}'$ are likely to increase, making them more detectable. We provide more discussion in Appendix \ref{appendix_theory_discussion}.

\section{Evaluation on Security Applications}
\vspace{-2mm}
\label{evaluation}
We evaluate the effectiveness of our method for certified defense and defense against jailbreaking attacks. For certified defense, we employ a backdoor attack (BadNets~\citep{gu2017badnets}) and an adversarial attack (TextFooler~\citep{jin2020textfooler}) to challenge the random subspace method (more details are provided in Appendix~\ref{sup-adversarial-attack}). We show that our method successfully identifies the exact words responsible for the failure of certified defense. 
For defense against jailbreaking attacks, we evaluate three types of such attacks: GCG~\citep{zou2023GCG}, AutoDAN~\citep{liu2023autodan}, and DAN~\citep{liu2023autodan}. We show that our method is capable of identifying the harmful query embedded within the jailbreaking prompt. 
\vspace{-1mm}
\subsection{Experimental Setup}
\vspace{-1mm}
\myparatight{Random Subspace Method Implementation}
For certified defense, we follow RanMASK~\citep{zeng2023ranmask} for constructing the ensemble classifier. For defense against jailbreaking attacks, we adopt the RA-LLM~\citep{cao2023rallm} framework. More details are provided in Appendix~\ref{sup-jailbreaking}.

\myparatight{Datasets} We use classification datasets such as SST-2~\citep{socher2013sst2}, IMDB~\citep{maas2011imdb}, and AGNews~\citep{zhang2015agnews} for the study on certified defense mechanisms, and use harmful behaviors dataset~\citep{zou2023GCG} for defense against jailbreaking attacks. More details can be found in Appendix~\ref{sup-dataset}.

\myparatight{Models}
For certified defense, we use a pretrained BERT model~\citep{devlin2018bert} as our base model and fine-tune it using AdamW optimizer for $10$ epochs on masked training samples to improve the certification performance. 
The learning rate is set to $1\times 10^{-5}$. For defense against jailbreaking attacks, we directly use Vicuna-7B~\citep{chiang2023vicuna} as our base model.

\myparatight{Hyper-parameters}
Unless specifically mentioned, we use following hyperparameters by default. For certified defense, the dropping rate (expressed as $\rho = 1- \frac{k}{d}$) is set to $0.8$, and $N$ is set to $1,000$. For defense against jailbreaking attacks, we set the dropping rate to $0.4$, $N$ to $500$, and the threshold $\tau$ to $0.1$. The impact of these hyperparameters will be explored in an ablation study.

\myparatight{Evaluation Metrics}
We use the following metrics. The faithfulness metric is reported across all our experiments. Furthermore, in instances where there is ground-truth information regarding the key words that significantly influence the prediction of the ensemble model (e.g., during empirical attacks such as backdoor attacks), we implement extra metrics for predicting these key words. 
We denote the test dataset by $\mathcal{D}_{test}$, the base model by $h$ and the the prediction of the ensemble for some test sample $\mathbf{x}$ by $H(\mathbf{x})$.
     
     $\bullet$\myparatight{Faithfulness~\citep{lopardo2023faithful}}
    We define the faithfulness of the feature attribution as the percentage of label filps when the $e$ features with highest importance scores are deleted. We use $E(\mathbf{x})$ to denote the $e$-most important features reported by the feature attribution method. Then faithfulness is represented by: $\frac{1}{|\mathcal{D}_{test}|}\sum_{\mathbf{x} \in \mathcal{D}_{test}} \mathbb{I}[H(\mathbf{x})\neq H(\mathbf{x}\setminus E(\mathbf{x}))]$.

    $\bullet$ \myparatight{Key word prediction}
    We define a set of ground-truth important words denoted by $L(\mathbf{x})$. We let the feature attribution method identify the top $e$ most crucial words and measure the intersection of these words with the set of ground-truth important words. Specifically, we have \emph{top-$e$ precision}$ = \frac{|E(\mathbf{x})\cap L(\mathbf{x})|}{e}$, and \emph{top-$e$ recall}$ = \frac{|E(\mathbf{x})\cap L(\mathbf{x})|}{|L(\mathbf{x})|}$. As our final result, we report
the average values of \emph{top-$e$ precision} and \emph{top-$e$ recall} computed on $\mathcal{D}^*_{test}$ for different $e$ values. $\mathcal{D}^*_{test}$ is a specific subset of $\mathcal{D}_{test}$ detailed in Appendix~\ref{sup-metric}.

$\bullet$ \myparatight{Certified detection rate} We develop metrics for provable defense against explanation-preserving attacks discussed in Section~\ref{theory-detect}.
We define \emph{certified detection rate} as $\mathcal{D}(\mathbf{x},T)/T$ to measure the percentage of detected adversarial features. We report the mean values of certified detection rate computed on $\mathcal{D}_{test}$.

\myparatight{Compared Methods}
We compare our method with the following baseline methods. Shapley value~\citep{chen2023estimate_shapley} and LIME~\citep{ribeiro2016lime} are state-of-the-art techniques in feature attribution but present computational challenges when applied directly to ensemble models. Consequently, we implement these methods on the base model. 
Furthermore, we have adapted the ICL method~\citep{kroeger2023icl} for feature attribution purposes. This approach leverages the in-context learning capabilities of large language models (LLMs). We provide implementation details of these methods in Appendix~\ref{sup-baseline}.

\begin{table*}[t]
  \centering
  \begin{minipage}[t]{0.49\textwidth}
    \centering
    \captionof{table}{Compare the faithfulness of our method with baselines for certified defense. 
             We delete different ratios of most important words and compute the rate of 
             label changes.}
    \vspace{0.5em}
    \resizebox{\linewidth}{!}{
      {\scriptsize
      \begin{tabular}{c|c|c|c|c|c|c|c}
        \toprule
        \multirow{2}{*}{\makecell{Defense \\scenarios}} & Dataset &
        \multicolumn{2}{c|}{SST-2} & \multicolumn{2}{c|}{IMDb} &
        \multicolumn{2}{c}{AG-news} \\ \cmidrule{2-8}
        & Ratio & 10\% & 20\% & 10\% & 20\% & 10\% & 20\% \\ \midrule
        \multirow{4}{*}{No attack}
          & Shapley & 0.320 & 0.530 & 0.300 & 0.330 & 0.150 & 0.280 \\
          & LIME    & 0.125 & 0.145 & 0.060 & 0.095 & 0.020 & 0.035 \\
          & ICL     & 0.095 & 0.135 & 0.045 & 0.050 & 0.030 & 0.040 \\
          & Ours    & \textbf{0.365} & \textbf{0.605} & \textbf{0.600} &
                      \textbf{0.745} & \textbf{0.175} & \textbf{0.410} \\ \midrule
        \multirow{4}{*}{\makecell{Backdoor \\attack}}
          & Shapley & 0.380 & 0.630 & 0.520 & 0.540 & 0.725 & 0.790 \\
          & LIME    & 0.080 & 0.095 & 0.120 & 0.180 & 0.205 & 0.300 \\
          & ICL     & 0.055 & 0.085 & 0.120 & 0.170 & 0.140 & 0.235 \\
          & Ours    & \textbf{0.400} & \textbf{0.655} & \textbf{0.810} &
                      \textbf{0.910} & \textbf{0.735} & \textbf{0.795} \\ \midrule
        \multirow{4}{*}{\makecell{Adv.\\attack}}
          & Shapley & 0.600 & 0.840 & 0.845 & 0.840 & 0.850 & 0.960 \\
          & LIME    & 0.100 & 0.160 & 0.280 & 0.335 & 0.200 & 0.265 \\
          & ICL     & 0.130 & 0.170 & 0.305 & 0.365 & 0.115 & 0.130 \\
          & Ours    & \textbf{0.680} & \textbf{0.880} & \textbf{0.980} &
                      \textbf{1.000} & \textbf{0.905} & \textbf{0.970} \\
        \bottomrule
      \end{tabular}}}
    \label{tab-certified-faithfulness}
  \end{minipage}
  \hfill
  \begin{minipage}[t]{0.49\textwidth}
    \centering
    \captionof{table}{Compare the key-word prediction performance of our method with baselines for certified defense.  
             Each method reports the top-5 important words ($e=5$).}
    \vspace{0.5em}
    \resizebox{\linewidth}{!}{
      {\scriptsize
      \begin{tabular}{c|c|c|c|c|c|c|c}
        \toprule
        \multirow{2}{*}{\makecell{Defense \\scenarios}} & Dataset &
        \multicolumn{2}{c|}{SST-2} & \multicolumn{2}{c|}{IMDb} &
        \multicolumn{2}{c}{AG-news} \\ \cmidrule{2-8}
        & Metric & Prec. & Rec. & Prec. & Rec. & Prec. & Rec. \\ \midrule
        \multirow{4}{*}{\makecell{Backdoor \\attack}}
          & Shapley & 0.543 & 0.904 & 0.295 & 0.491 & 0.523 & 0.872 \\
          & LIME    & 0.148 & 0.247 & 0.037 & 0.022 & 0.073 & 0.122 \\
          & ICL     & 0.087 & 0.145 & 0.030 & 0.049 & 0.068 & 0.113 \\
          & Ours    & \textbf{0.585} & \textbf{0.975} & \textbf{0.535} &
                      \textbf{0.892} & \textbf{0.557} & \textbf{0.929} \\ \midrule
        \multirow{4}{*}{\makecell{Adv. \\attack}}
          & Shapley & 0.361 & 0.680 & 0.282 & 0.142 & 0.528 & 0.343 \\
          & LIME    & 0.146 & 0.319 & 0.067 & 0.025 & 0.242 & 0.128 \\
          & ICL     & 0.098 & 0.210 & 0.076 & 0.040 & 0.080 & 0.046 \\
          & Ours    & \textbf{0.378} & \textbf{0.717} & \textbf{0.384} &
                      \textbf{0.184} & \textbf{0.530} & \textbf{0.356} \\
        \bottomrule
      \end{tabular}}}
    \label{tab-certified-key-word-prediction-e=5}
  \end{minipage}
  \vspace{-3mm}
\end{table*}
\begin{table*}[t]
  \centering
  \begin{minipage}[t]{0.48\textwidth}
    \centering
    \captionof{table}{Compare the faithfulness of our method with baselines for defense against jailbreaking attacks.\label{tab-jailbreaking-faithfulness}}
    \vspace{-0.5em}
    \resizebox{\linewidth}{!}{
      {\scriptsize
      \begin{tabular}{c|c|c|c|c|c|c}
        \toprule
        Attack & \multicolumn{2}{c|}{GCG} & \multicolumn{2}{c|}{AutoDAN} &
                \multicolumn{2}{c}{DAN} \\ \midrule
        Del.\ Ratio & 10\% & 20\% & 10\% & 20\% & 10\% & 20\% \\ \midrule
        Shapley & 0.11 & 0.19 & 0.15 & 0.18 & 0.33 & 0.33 \\
        LIME    & \textbf{0.15} & 0.23 & 0.34 & 0.32 & 0.54 & 0.38 \\
        ICL     & 0 & 0 & 0.08 & 0.11 & 0.24 & 0.27 \\
        Ours    & \textbf{0.15} & \textbf{0.24} & \textbf{0.38} & \textbf{0.46} &
                  \textbf{0.85} & \textbf{0.74} \\
        \bottomrule
      \end{tabular}}}
  \end{minipage}
  \hfill
  \begin{minipage}[t]{0.50\textwidth}
    \centering
    \captionof{table}{Compare the keyword prediction performance of our method with baselines for defense against jailbreaking attacks ($e=10$). \label{tab-jailbreaking-key-word-prediction-e=10}}
    \vspace{-0.5em}
    \resizebox{\linewidth}{!}{
      {\scriptsize
      \begin{tabular}{c|c|c|c|c|c|c}
        \toprule
        Attack & \multicolumn{2}{c|}{GCG} & \multicolumn{2}{c|}{AutoDAN} &
                \multicolumn{2}{c}{DAN} \\ \midrule
        Metric & Prec. & Rec. & Prec. & Rec. & Prec. & Rec. \\ \midrule
        Shapley & 0.651 & 0.571 & 0.306 & 0.260 & 0.137 & 0.119 \\
        LIME    & 0.654 & 0.575 & 0.335 & 0.281 & 0.332 & \textbf{0.289} \\
        ICL     & 0.544 & 0.466 & 0.252 & 0.212 & 0.078 & 0.064 \\
        Ours    & \textbf{0.664} & \textbf{0.584} & \textbf{0.434} & \textbf{0.379} &
                  \textbf{0.378} & 0.287 \\
        \bottomrule
      \end{tabular}}}
    \vspace{-5mm}
  \end{minipage}
\end{table*}
\vspace{-1mm}

\vspace{-2mm}
\subsection{Experimental Results}
\vspace{-2mm}

\subsubsection{Explain Certified Defense}
\vspace{-2mm}
We evaluate our method's explanation effectiveness both in the absence of attacks and in scenarios where the certified defense is compromised by strong empirical attacks.

\myparatight{No Attack} In Table~\ref{tab-certified-faithfulness}, we present a comparison of our method's faithfulness against other baseline methods for clean test samples. We can see that our method surpasses all baselines in performance. A visualization for IMDb dataset is provided by Figure~\ref{fig-visualization-imdb-ours} in the Appendix.

\myparatight{Backdoor Attack and Adversarial Attack} Table~\ref{tab-empirical-ASR} in Appendix shows that a significant proportion of testing samples can be compromised when the attacker could maliciously insert (or alter) a relatively large number of words. Table~\ref{tab-certified-faithfulness} details our method's capability in explaining model behavior to sentences altered by the backdoor (or adversarial) attack. Specifically, for the adversarial attack on the IMDb dataset, removing the 10\% of the words considered most critical by our method results in a label change for 98\%. Additionally, Table~\ref{tab-certified-key-word-prediction-e=5} and Table~\ref{tab-certified-key-word-prediction-e=10} (in Appendix) provides evidence that, in scenarios where these altered sentences misguide the ensemble model towards incorrect predictions, our method exhibits superior capability in detecting the backdoor triggers (or adversarial words) responsible for the ensemble model's erroneous behavior. For example, on IMDB dataset, our technique achieves a recall of $0.892$, significantly higher than the $0.491$ recall obtained using Shapley value for backdoor attacks when $e = 5$. For a qualitative comparison, please see Figure~\ref{fig-visualization-imdb-shap} and Figure~\ref{fig-visualization-imdb-ours} in the Appendix.

\subsubsection{Explain Defense Against Jailbreaking Attacks}
\vspace{-1mm}
In this part, we demonstrate that our method enhances understanding of the decision-making processes of the RA-LLM~\citep{cao2023rallm} when faced with jailbreaking prompts.
Table~\ref{tab-jailbreaking-faithfulness} shows that our method outperforms baselines in identifying the most important words that influence the RA-LLM's decisions for jailbreaking prompts. Table~\ref{tab-jailbreaking-key-word-prediction-e=10} and Table~\ref{tab-jailbreaking-key-word-prediction-e=20} (in Appendix) demonstrate that when a jailbreaking prompt is detected as `harmful' by RA-LLM, our method is capable of identifying the harmful query embedded within the jailbreaking prompt that leads to this decision. This finding is also supported by the qualitative results shown in Figure~\ref{fig-visualization-harmful-behavior-ours} in Appendix.
\vspace{-1mm}
\subsubsection{Impact of Hyperparameters}
\vspace{-1mm}
We examine how the dropping rate $\rho$ and the number of sub-sampled inputs $N$ influence our method's faithfulness and key word prediction performance. Figures \ref{fig-ablation-N-certified-defense-faithfulness} and \ref{fig-ablation-N-certified-defense-f1} in Appendix demonstrates that both metrics generally improves with an increase in $N$, as it leads to a more precise estimation of importance values. Furthermore, Figures \ref{fig-ablation-rho-certified-defense-faithfulness} and \ref{fig-ablation-rho-certified-defense-f1} in Appendix reveals that while key word prediction performance remains stable, there is a decline in faithfulness at a very large $\rho$ value (e.g., $\rho = 0.9$). This is because the ensemble model becomes insensitive to the deletion of important features at higher dropping rates.  We have consistent findings for defense against jailbreaking attacks, as illustrated in Figure~\ref{fig-ablation-jailbreaking-attack} in Appendix.
\vspace{-2mm}
\begin{figure*}[t]
\vspace{-3mm}
\centering
\subfloat[SST-2]{\includegraphics[width=0.3\textwidth]{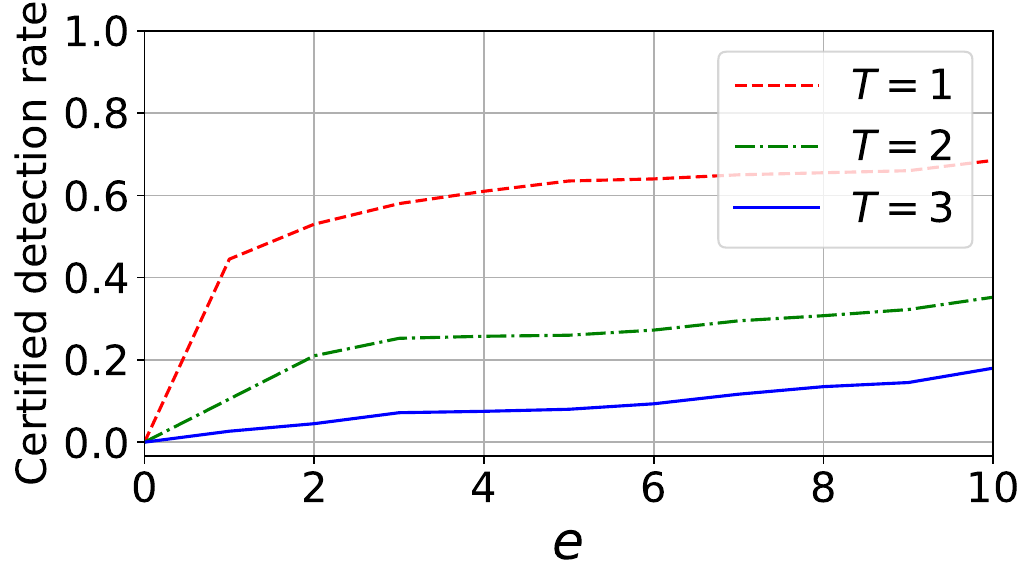}}\hspace{4mm}
\subfloat[IMDb]{\includegraphics[width=0.3\textwidth]{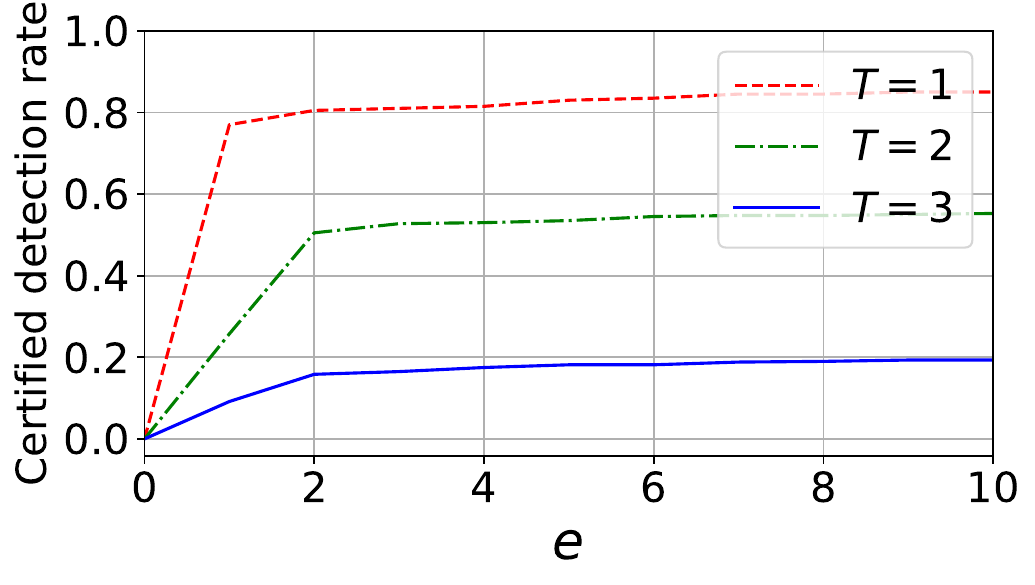}}\hspace{4mm}
\subfloat[AG-news]{\includegraphics[width=0.3\textwidth]{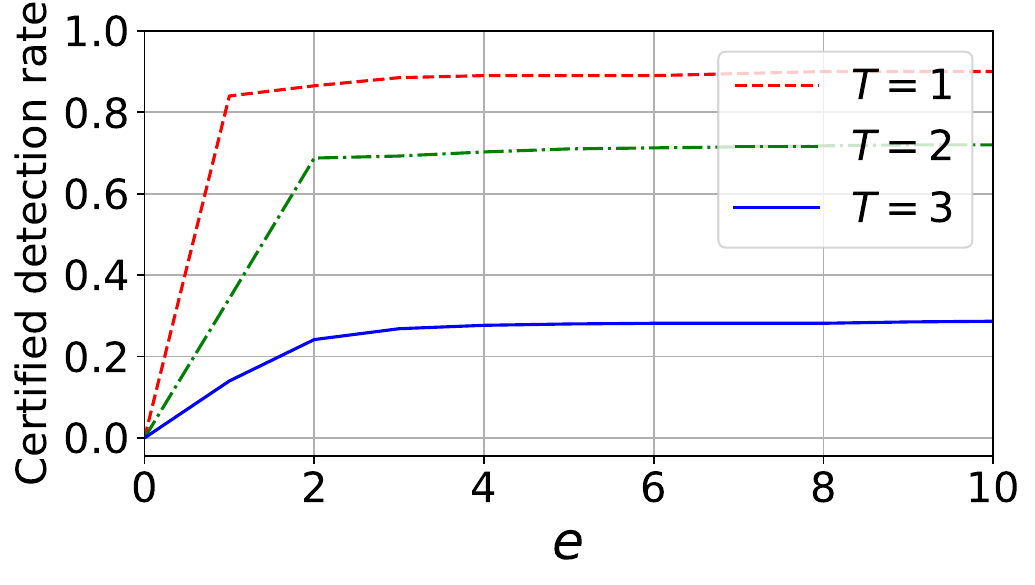}}
\vspace{-2mm}

\caption{Certified detection rate on text classification datasets. $T$ is the number of modified features, and $e$ is the number of reported most important features.}
\label{fig-certified-detection}
\vspace{-5mm}
\end{figure*}
\subsection{Certified Detection of Adversarial Features}
We evaluate the certified detection rate of our feature attribution on text classification datasets. By default, we set the certification confidence $1-\beta$ to 0.99, the dropping rate $\rho$ to 0.8, and the sub-sampling number $N$ to 10,000. Figure~\ref{fig-certified-detection} shows the results in default setting. We find that the certified detection rate improves as the explanation reports more features as important features, and the rate decreases when the attacker is able to modify a greater number of features. Figure~\ref{fig-ablation-beta-certified-detection}, Figure~\ref{fig-ablation-N-certified-detection}, and Figure~\ref{fig-ablation-rho-certified-detection} in Appendix shows the impact of $\beta$, $N$ and $\rho$, respectively. We find that while the certified detection rate is insensitive to the $\beta$ value, it can be significantly enhanced by increasing $N$, or $\rho$.  
\myparatight{Computation cost} Both feature attribution and certified detection with our method involve minimal additional computational cost (less than 0.5 seconds), as demonstrated in Appendix~\ref{appendix_computation_time}.

\vspace{-2mm}
\subsection{Application in image domain} Our method is also applicable to the image domain for defending against adversarial patch attacks~\cite{levine2020derandomized, brown2017adversarial_patch}. The details can be found in Appendix~\ref{appendix_image_domain}.

\vspace{-2mm}
\section{Conclusion and Future Work}
\vspace{-2mm}
In this work, we propose an efficient and provably robust feature attribution method for random subspace method. Potential future directions include: 1) investigating the explanation of random subspace method for privacy applications, such as machine unlearning~\cite{bourtoule2021machine_unlearning} and differential privacy~\cite{liu2020bagging_dp}; and 2) developing provably secure feature attribution methods for general machine learning models.
\label{conclusion}

\bibliography{refs}
\bibliographystyle{iclr2026_conference}

\appendix
\appendix
\newpage
\section{Proofs for Properties}\label{proof_properties}
To simply notation, for all the following proofs, we use $\mathbf{z}$ to denote subsets of $\mathbf{x}$ with size $k$. Next, we provide our proof for each property.

\myparatight{Property~\ref{property-local} (Local Accuracy)}
{For any $\mathbf{x}$, $h$, and $k$, the importance score of all features sum up to $p_{\hat{y}}(\mathbf{x}, h, k)$, i.e., $\sum_{i\in \mathbf{x}} \alpha^{\hat{y}}_i(\mathbf{x},h,k) = p_{\hat{y}}(\mathbf{x},h,k)$.}

\begin{proof} 
\begin{align}
    \sum_{i\in \mathbf{x}}\alpha_i^{\hat{y}}(\mathbf{x},h,k)  
    = &\sum_{i\in \mathbf{x}}\frac{1}{k}\mathbb{E}_{\mathbf{z} \sim \mathcal{U}(\mathbf{x},k)} [\mathbb{I}(i \in \mathbf{z}) \cdot \mathbb{I}(h(\mathbf{z})={\hat{y}} )]\\
    =&\frac{1}{k}\mathbb{E}_{\mathbf{z} \sim \mathcal{U}(\mathbf{x},k)} [\mathbb{I}(h(\mathbf{z})={\hat{y}} )\cdot \sum_{i\in \mathbf{x}} \mathbb{I}(i \in \mathbf{z})]\\
    =&\mathbb{E}_{\mathbf{z} \sim \mathcal{U}(\mathbf{x},k)} \mathbb{I}(h(\mathbf{z})={\hat{y}} )\\
    =& p_{\hat{y}}(\mathbf{x},h,k)
\end{align}
\end{proof}

\myparatight{Property~\ref{property-symmetry} (Symmetry)} Given a pair of features $(i,j)$, if for any $S\subseteq \mathbf{x}\setminus \{i,j\}$, $p_{\hat{y}}(S\cup \{i\}, h, k) = p_{\hat{y}}(S \cup \{j\}, h, k)$, then $\alpha^{\hat{y}}_i(\mathbf{x},h,k) = \alpha^{\hat{y}}_j(\mathbf{x},h,k)$.

\begin{proof} \small We let $S = \mathbf{x}-\{i,j\}$. Then we have:
\begin{align}
p_{\hat{y}}(\mathbf{x}-\{j\}, h, k) &= p_{\hat{y}}(\mathbf{x}-\{i\}, h, k)\\
\frac{1}{\binom{d-1}{k}}\sum_{\mathbf{z}\subseteq 
\mathbf{x},j\notin \mathbf{z}}\mathbb{I}(h(\mathbf{z})=\hat{y}) &= \frac{1}{\binom{d-1}{k}}\sum_{\mathbf{z}\subseteq 
\mathbf{x},i\notin \mathbf{z}}\mathbb{I}(h(\mathbf{z})=\hat{y})\\
 \\
\sum_{\mathbf{z}\subseteq 
\mathbf{x},j\notin \mathbf{z}}\mathbb{I}(h(\mathbf{z})=\hat{y})- \sum_{\mathbf{z}\subseteq 
\mathbf{x},j\notin \mathbf{z},i\notin \mathbf{z}}\mathbb{I}(h(\mathbf{z})=\hat{y})&= \sum_{\mathbf{z}\subseteq 
\mathbf{x},i\notin \mathbf{z}}\mathbb{I}(h(\mathbf{z})=\hat{y})- \sum_{\mathbf{z}\subseteq 
\mathbf{x},j\notin \mathbf{z},i\notin \mathbf{z}}\mathbb{I}(h(\mathbf{z})=\hat{y})\\
\sum_{\mathbf{z}\subseteq 
\mathbf{x},j\notin \mathbf{z},i\in \mathbf{z}}\mathbb{I}(h(\mathbf{z})=\hat{y})&= \sum_{\mathbf{z}\subseteq 
\mathbf{x},j\in \mathbf{z},i\notin \mathbf{z}}\mathbb{I}(h(\mathbf{z})=\hat{y})\\
\sum_{\mathbf{z}\subseteq 
\mathbf{x},j\notin \mathbf{z},i\in \mathbf{z}}\mathbb{I}(h(\mathbf{z})=\hat{y})+\sum_{\mathbf{z}\subseteq 
\mathbf{x},j\in \mathbf{z},i\in \mathbf{z}}\mathbb{I}(h(\mathbf{z})=\hat{y})&= \sum_{\mathbf{z}\subseteq 
\mathbf{x},j\in \mathbf{z},i\notin \mathbf{z}}\mathbb{I}(h(\mathbf{z})=\hat{y})+\sum_{\mathbf{z}\subseteq 
\mathbf{x},j\in \mathbf{z},i\in \mathbf{z}}\mathbb{I}(h(\mathbf{z})=\hat{y})\\
\sum_{\mathbf{z}\subseteq 
\mathbf{x},i\in \mathbf{z}}\mathbb{I}(h(\mathbf{z})=\hat{y})&= \sum_{\mathbf{z}\subseteq 
\mathbf{x},j\in \mathbf{z}}\mathbb{I}(h(\mathbf{z})=\hat{y})\\
\alpha^{\hat{y}}_i(\mathbf{x},h,k) &= \alpha^{\hat{y}}_j(\mathbf{x},h,k)
\end{align}
\end{proof}

\myparatight{Property~\ref{property-order-consistency} (Order consistency with Shapley value)} Given a pair of features $(i,j)$, $\alpha^{\hat{y}}_i(\mathbf{x},h,k)\geq \alpha^{\hat{y}}_j(\mathbf{x},h,k)$ if and only if $\phi_i(p_{\hat{y}})\geq \phi_j(p_{\hat{y}})$, where $\phi_i(p_{\hat{y}})$ and $\phi_j(p_{\hat{y}})$ respectively represent Shapley values of $i$ and $j$.
\begin{proof}
By the definition of Shapley value for $p_{\hat{y}}$, for any feature $l$,
\begin{align}\phi_l(p_{\hat{y}}) = &\sum_{S \subseteq \mathbf{x}\setminus\{l\}} \frac{|S|! (d - |S| - 1)!}{d!} (p_{\hat{y}}(S \cup \{l\},h,k) - p_{\hat{y}}(S,h,k))\\
=&\sum^{d-1}_{m=0}\frac{m! (d - m - 1)!}{d!}\sum_{S \subseteq \mathbf{x}\setminus\{l\},|S| = m}  (p_{\hat{y}}(S \cup \{l\},h,k) - p_{\hat{y}}(S,h,k))
\end{align}
We define the unregularized marginal contribution of feature $l \in \mathbf{x}$ with respect to subset size $m$ as:
\begin{align}
    \Delta_l(p_{\hat{y}}, m) =\sum_{S \subseteq \mathbf{x}\setminus\{l\},|S| = m}  (p_{\hat{y}}(S \cup \{l\},h,k) - p_{\hat{y}}(S,h,k)).
\end{align}
Shapley value is the weighted sum of $\Delta_l(p_{\hat{y}}, m)$ for all $0\leq m \leq d-1$, and the weights are all positive. Therefore, if our importance score is order consistent with $\Delta_l(p_{\hat{y}}, m)$ for every $0\leq m \leq d-1$, then our importance score is order consistent with the Shapley value.
We first use the definition in Section~\ref{def-subset} to handle special cases of $m$. When $m < k-1$, we have $\sum_{S \subseteq \mathbf{x}\setminus\{l\},|S| = m}  (p_{\hat{y}}(S \cup \{l\},h,k) - p_{\hat{y}}(S,h,k)) = 0$ for all $l$. When $m = k-1$, we have: 
\begin{align}
&\Delta_l(p_{\hat{y}}, k-1)\\
=&\sum_{S \subseteq \mathbf{x}\setminus\{l\},|S| = k-1}  (p_{\hat{y}}(S \cup \{l\},h,k) - p_{\hat{y}}(S,h,k)) \\
=&\sum_{S \subseteq \mathbf{x}\setminus\{l\},|S| = k-1}  (p_{\hat{y}}(S \cup \{l\},h,k) - \frac{1}{C})\\
=&\sum_{\mathbf{z} \subseteq \mathbf{x}, l \in \mathbf{z}}  (p_{\hat{y}}(\mathbf{z},h,k) - \frac{1}{C})\\
=&\sum_{\mathbf{z} \subseteq \mathbf{x}, l \in \mathbf{z}}  (\mathbb{I}(h(\mathbf{z}) = \hat{y}) - \frac{1}{C})\\
=&\sum_{\mathbf{z} \subseteq \mathbf{x}, l \in \mathbf{z}}  \mathbb{I}(h(\mathbf{z}) = \hat{y}) - \sum_{\mathbf{z} \subseteq \mathbf{x}, l \in \mathbf{z}} \frac{1}{C}\\
=& k\cdot\binom{n}{k}\cdot \alpha^{\hat{y}}_l(\mathbf{x},h,k) - \sum_{\mathbf{z} \subseteq \mathbf{x}, l \in \mathbf{z}} \frac{1}{C}.
\end{align}
Hence $\alpha^{\hat{y}}_i(\mathbf{x},h,k) \geq \alpha^{\hat{y}}_j(\mathbf{x},h,k)$ if and only if $\Delta_i(p_{\hat{y}}, k-1) \geq\Delta_j(p_{\hat{y}}, k-1)$.
Lastly, we consider the case when $k\leq m\leq d-1$. In this case,
\begin{align}\small 
&\Delta_l(p_{\hat{y}}, m)\\
=&\sum_{S \subseteq \mathbf{x}\setminus\{l\},|S| = m}  (p_{\hat{y}}(S \cup \{l\},h,k) - p_{\hat{y}}(S,h,k)) \\
=&\sum_{S \subseteq \mathbf{x}\setminus\{l\},|S| = m}  (\frac{1}{\binom{m+1}{k}}\sum_{\mathbf{z}\subseteq S \cup \{l\}}\mathbb{I}(h(\mathbf{z})=\hat{y}) - \frac{1}{\binom{m}{k}}\sum_{\mathbf{z}\subseteq S}\mathbb{I}(h(\mathbf{z})=\hat{y})) \\
=&\sum_{S \subseteq \mathbf{x}\setminus\{l\},|S| = m}  (\frac{1}{\binom{m+1}{k}}\sum_{\mathbf{z}\subseteq S \cup \{l\},l\in \mathbf{z}}\mathbb{I}(h(\mathbf{z})=\hat{y})+\frac{1}{\binom{m+1}{k}}\sum_{\mathbf{z}\subseteq S }\mathbb{I}(h(\mathbf{z})=\hat{y}) \\
&- \frac{1}{\binom{m}{k}}\sum_{\mathbf{z}\subseteq S}\mathbb{I}(h(\mathbf{z})=\hat{y})) \\
=&[\frac{1}{\binom{m+1}{k}}\sum_{S \subseteq \mathbf{x}\setminus\{l\},|S| = m}  \sum_{\mathbf{z}\subseteq S \cup \{l\},l\in \mathbf{z}}\mathbb{I}(h(\mathbf{z})=\hat{y})] \\
&- [(\frac{1}{\binom{m}{k}}-\frac{1}{\binom{m+1}{k}})\sum_{S \subseteq \mathbf{x}\setminus\{l\},|S| = m}\sum_{\mathbf{z}\in S}\mathbb{I}(h(\mathbf{z})=\hat{y}))\label{eq:second_last}] \\
=&[\frac{1}{\binom{m+1}{k}}\cdot\binom{d-k}{m-k+1}\sum_{\mathbf{z} \subseteq \mathbf{x}, l\in \mathbf{z}} \mathbb{I}(h(\mathbf{z})=\hat{y})]\\
&-[(\frac{1}{\binom{m}{k}}-\frac{1}{\binom{m+1}{k}})\cdot \binom{d-1-k}{m-k}\sum_{\mathbf{z}\subseteq \mathbf{x},l\notin \mathbf{z}}\mathbb{I}(h(\mathbf{z})=\hat{y}))] \label{eq:last}
\end{align}
We get Equation~\ref{eq:last} from Equation~\ref{eq:second_last} using combinatorial theory. For example, to find out how many times a specific $k$-sized subset that does not include $l$ appears across all possible selections, we recognize that for each $k$-sized subset to be part of an $m$-sized subset, we must choose the remaining $m-k$ elements from the $d-1-k$ elements that are not part of our $k$-sized subset.

Suppose $\alpha^{\hat{y}}_i(\mathbf{x},h,k)\geq \alpha^{\hat{y}}_j(\mathbf{x},h,k)$, then $\sum_{\mathbf{z} \subseteq \mathbf{x}, i\in \mathbf{z}} \mathbb{I}(h(\mathbf{z})=\hat{y}) \geq \sum_{\mathbf{z} \subseteq \mathbf{x}, j\in \mathbf{z}} \mathbb{I}(h(\mathbf{z})=\hat{y})$ and $ \sum_{\mathbf{z}\subseteq \mathbf{x},i\notin \mathbf{z}}\mathbb{I}(h(\mathbf{z})=\hat{y})) \leq \sum_{\mathbf{z}\subseteq \mathbf{x},j\notin \mathbf{z}}\mathbb{I}(h(\mathbf{z})=\hat{y}))$, which means $\Delta_i(p_{\hat{y}}, m) \geq \Delta_j(p_{\hat{y}}, m)$. And vise versa. Therefore, our importance score is order consistent with $\Delta_l(p_{\hat{y}}, m)$ for every $0\leq m \leq d-1$, which implies that our importance score is order consistent with the Shapley value.
\end{proof}

\section{Proof for Certified Detection of Adversarial Features}\label{proof_detection}
In this section, we first present an informal proof sketch to illustrate our proof strategy, followed by the complete proof.
\begin{proofsketch}
We use $\hat{y}$ and $\hat{y}'$ to denote the clean label and the target label, use $p_{\hat{y}}$ and $p_{\hat{y}'}$ to denote their label probabilities before the attack, use $T$ to denote the maximum number of adversarial features (features that are modified by the attacker), use $E(\mathbf{x}')$ to denote the set of reported important features after the attack, and use $\mathbf{x}' \ominus \mathbf{x}$ to denote these adversarial features.

Given an integer $r$, we want to check if it is possible for the intersection size (between $E(\mathbf{x}')$ and $\mathbf{x}' \ominus \mathbf{x}$) to fall below it. Suppose the intersection size is smaller than $r$. Then we know that at least $T -r + 1$ adversarial features are not reported (we denote this set by $U$), and at least $e - r + 1$ clean features are reported (we denote this set by $V$). So, we know that the maximum importance score in $U$ is smaller than or equal to the minimum importance score in $V$. By the law of contraposition, we know that if the maximum importance score in $U$ is larger than the minimum importance score in $V$, then the intersection size must be larger than $r$.

We first derive a lower bound of the maximum importance score in $U$. By the definition of the ensemble model, these adversarial features must change the predictions of enough feature groups to change the label from $\hat{y}$ to $\hat{y}'$. At the same time, their importance values must increase by $(p_{\hat{y}} - p_{\hat{y}'})/2$ in total (by our calculation). Using this information, we can derive the minimum sum of the importance scores in $U$ (a subset of adversarial features). Based on the fact that the maximum value in a set is larger than the average value, we can bound the maximum importance score in $U$.

We then derive an upper bound of the maximum importance score in $V$. We know that an adversarial feature can only affect the importance value of a clean feature if they coexist in a feature group. The fraction of such feature groups is small and can be calculated. Additionally, we have access to the importance values of clean features before the attack. By summing the original values with the change, we can obtain this upper bound.

Finally, we can check whether the lower bound of the maximum importance score in $U$ is larger than the upper bound of the minimum importance score in $V$. If not, this means the intersection size can be smaller than $r$ in the worst case, and we can switch to a smaller $r$ and repeat this process.
\end{proofsketch}
Next, we present the complete proof.
\begin{proof}
Our goal is to derive the \emph{certified detection size} $\mathcal{D}(\mathbf{x},T)$, which is the intersection size lower bound between the set of modified features $\mathbf{x}' \ominus \mathbf{x}$ and the set of reported important features $E(\mathbf{x}')$. It is formally defined as:
\begin{align}\mathcal{D}(\mathbf{x},T) = \argmax_r, s.t. |(\mathbf{x}'\ominus \mathbf{x}) \cap E(\mathbf{x}')|\geq r, \forall \mathbf{x}' \in \mathcal{B}(\mathbf{x},T), H(\mathbf{x}')\neq H(\mathbf{x})
\end{align}

Without loss of generality, we assume $H(\mathbf{x}')= \hat{y}'\neq \hat{y}$. We derive the certified detection size utilizing the \emph{law of contraposition}.  Suppose the number of features in $\mathbf{x}' \ominus \mathbf{x}$ that are also in $E(\mathbf{x}')$ is smaller than $r$, then we know that at least $T - r + 1$ features (denoted by $U$) in $\mathbf{x}' \ominus \mathbf{x}$ are not reported in the explanation for $\mathbf{x}'$. Similarly, we know at least $e - r + 1$ features (denoted by
$V$) in $\{1, 2, \cdots, d\} \setminus (\mathbf{x'} \ominus \mathbf{x})$ are in $E(\mathbf{x}')$. In other words, we know there exist $U \subseteq \mathbf{x}' \ominus \mathbf{x}$
and $V \subseteq \{1, 2, \cdots, d\} \setminus (\mathbf{x}' \ominus \mathbf{x})$ such that $\max_{u\in U} \alpha^{\hat{y}'}_u(\mathbf{x}',h,k) \leq \min_{v\in V} \alpha^{\hat{y}'}_v (\mathbf{x}',h,k)$
. Based on the law of
contraposition, we know that if we could show $\max_{u\in U} \alpha^{\hat{y}'}_u(\mathbf{x}',h,k) > \min_{v\in V} \alpha^{\hat{y}'}_v (\mathbf{x}',h,k)$
for arbitrary $U$ and $V$, i.e., $\min_U \max_{u\in U}\alpha^{\hat{y}'}_u(\mathbf{x}',h,k) > \max_V \min_{v\in V} \alpha^{\hat{y}'}_v(\mathbf{x}',h,k)$, then we know the certified
intersection size is no smaller than $r$. 

We note that $U$ and $V$ depends on the attacker's choice of $\mathbf{x}'$. To simplify the notation, we denote the $U$ that achieves the minimum by $U^*$ and the $V$ that achieves the maximum by $V^*$. Then, by considering the worst case $\mathbf{x}'$, the problem
becomes determining whether $\min_{\mathbf{x'}\in \mathcal{B}(\mathbf{x},T),H(\mathbf{x}') = \hat{y}'}(\max_{u\in U^*} {\alpha}^{\hat{y}'}_u(\mathbf{x}',h,k)-\min_{v\in V^*} {\alpha}^{\hat{y}'}_v(\mathbf{x}',h,k)) >0$. 
To simplify, we tackle a more straightforward version of this problem by determining if $\min_{\mathbf{x'}\in \mathcal{B}(\mathbf{x},T),H(\mathbf{x}') = \hat{y}'}\max_{u\in U^*} {\alpha}^{\hat{y}'}_u(\mathbf{x}',h,k)>\max_{\mathbf{x'}\in \mathcal{B}(\mathbf{x},T),H(\mathbf{x}') = \hat{y}'}\min_{v\in V^*} {\alpha}^{\hat{y}'}_v(\mathbf{x}',h,k)$.

According to the definition of the ensemble model in Equation~\ref{eqn-ensemble}, in order to change the label from $\hat{y}$ to $\hat{y}'$, the attacker at least needs to change the predictions of $\frac{1}{2}\binom{d}{k}\cdot({p_{\hat{y}}(\mathbf{x},h,k)-p_{\hat{y}'}(\mathbf{x},h,k)})$ feature groups which are not predicted as $\hat{y}$ to $\hat{y}$, where $\binom{d}{k}$ is the number of unique feature groups, i.e., $|\{\mathbf{z}\subseteq \mathbf{x}: |\mathbf{z}| = k\}|$. Since each of these changed feature groups contains at least one feature in $\mathbf{x}\ominus\mathbf{x}'$, for any $\mathbf{x}'$ satisfying $H(\mathbf{x}') = \hat{y}'$, we have $\sum_{i \in \mathbf{x}\ominus \mathbf{x}'}[\alpha^{\hat{y}'}_{i}(\mathbf{x}',h,k)- \alpha^{\hat{y}'}_{i}(\mathbf{x},h,k)] \geq \frac{1}{k}\cdot\frac{p_{\hat{y}}-p_{\hat{y}'}}{2}$. It follows that $\sum_{u \in U^*}[\alpha^{\hat{y}'}_{u}(\mathbf{x}',h,k)- \alpha^{\hat{y}'}_{u}(\mathbf{x},h,k)] \geq \frac{1}{k}\cdot\frac{p_{\hat{y}}-p_{\hat{y}'}}{2}-(r-1)\cdot \frac{1}{k}\frac{\binom{d-1}{k-1}}{\binom{d}{k}}=\frac{1}{k}\cdot\frac{p_{\hat{y}}-p_{\hat{y}'}}{2}-\frac{r-1}{d}$. This is because for each modified feature not in $U^*$, the change of its importance value is bounded by $\frac{1}{k}\cdot\frac{\binom{d-1}{k-1}}{\binom{d}{k}}$. So we have:
\begin{align}
&\min_{\mathbf{x}'\in \mathcal{B}(\mathbf{x},T),H(\mathbf{x}') = \hat{y}'}\max_{u\in U^*} {\alpha}^{\hat{y}'}_u(\mathbf{x}',h,k)
\\
\geq&\frac{1}{T-r+1}\min_{\mathbf{x'}\in \mathcal{B}(\mathbf{x},T),H(\mathbf{x}') = \hat{y}'}\sum_{u\in U^*} {\alpha}^{\hat{y}'}_u(\mathbf{x}',h,k)\\
\geq&\frac{1}{T-r+1}[\min_{\mathbf{x}\ominus\mathbf{x'}}\sum_{u\in U^*} {\alpha}^{\hat{y}'}_u(\mathbf{x},h,k)+ (\frac{1}{k}\cdot\frac{p_{\hat{y}}(\mathbf{x},h,k)-p_{\hat{y}'}(\mathbf{x},h,k)}{2}-\frac{r-1}{d})]
\end{align}

We use $\{w_1, \cdots , w_{d}\}$ to denote the set of all features in descending order of the important value $\alpha^{\hat{y}'}(\mathbf{x},h,k)$. We notice that to minimize $\sum_{u\in U^*} {\alpha}^{\hat{y}'}_u(\mathbf{x},h,k)$, $\mathbf{x}' \ominus \mathbf{x}$ includes features with lowest $\alpha^{\hat{y}'}(\mathbf{x},h,k)$'s. Then we can denote the worst case $\mathbf{x}' \ominus \mathbf{x}$ as $\{w_{d-T+1}, \cdots , w_{d}\}$. It follows that $U^* =  \{w_{d-T+r}, \cdots , w_{d}\}$ from the definition of $U$, which means:
\begin{align}
&\min_{\mathbf{x}'\in \mathcal{B}(\mathbf{x},T),H(\mathbf{x}') = \hat{y}'}\max_{u\in U^*} {\alpha}^{\hat{y}'}_u(\mathbf{x}',h,k)\\
\geq&\frac{1}{T-r+1}[\frac{1}{2k}\cdot({p_{\hat{y}}(\mathbf{x},h,k)-p_{\hat{y}'}(\mathbf{x},h,k)})-\frac{r-1}{d}+\sum_{i = {d-T+r}}^{d} {\alpha}^{\hat{y}'}_{w_i}(\mathbf{x},h,k)]\label{eqn-u-individual}
\end{align}

If we consider each $v$ in $V^*$ individually, we can find an upper bound for $\max_{\mathbf{x}'\in \mathcal{B}(\mathbf{x},T),H(\mathbf{x}') = \hat{y}'}\min_{v\in V^*} {\alpha}^{\hat{y}'}_v(\mathbf{x}',h,k)$. By the definition of $V$, each feature $v$ in $V^*$ is not modified by the attacker. Hence at least $\binom{d-1-T}{k-1}$ of the $\binom{d-1}{k-1}$ unique feature groups with size $k$ that contains $v$ are unaffected by the attack. Therefore we have ${\alpha}^{\hat{y}'}_v(\mathbf{x}',h,k)-{\alpha}^{\hat{y}'}_v(\mathbf{x},h,k)\leq \frac{1}{k}\frac{\binom{d-1}{k-1}-\binom{d-1-T}{k-1}}{\binom{d}{k}}$. So we get:
\begin{align}
&\max_{\mathbf{x}'\in \mathcal{B}(\mathbf{x},T),H(\mathbf{x}') = \hat{y}'}\min_{v\in V^*} {\alpha}^{\hat{y}'}_v(\mathbf{x}',h,k)\\
\leq &\max_{\mathbf{x}\ominus \mathbf{x}'}\min_{v\in V^*} {\alpha}^{\hat{y}'}_v(\mathbf{x},h,k)+\frac{1}{k}\frac{\binom{d-1}{k-1}-\binom{d-1-T}{k-1}}{\binom{d}{k}}
\end{align}
We notice that to achieve the maximum, $\{1,2, \cdots,d\}\setminus (\mathbf{x}' \ominus \mathbf{x})$ includes features with highest $\alpha^{\hat{y}'}(\mathbf{x},h,k)$'s. So we can denote the worst case $\{1,2, \cdots,d\}\setminus (\mathbf{x}' \ominus \mathbf{x})$ as $\{w_1, \cdots , w_{d-T}\}$. Then we have $V^* = \{w_1, w_2, \cdots , w_{e-r+1}\}$ in the worst case. So we have:
\begin{align}
&\max_{\mathbf{x}'\in \mathcal{B}(\mathbf{x},T),H(\mathbf{x}') = \hat{y}'}\min_{v\in V^*} {\alpha}^{\hat{y}'}_v(\mathbf{x}',h,k)
\leq {\alpha}^{\hat{y}'}_{w_{e-r+1}}(\mathbf{x},h,k)+\frac{1}{k}\frac{\binom{d-1}{k-1}-\binom{d-1-T}{k-1}}{\binom{d}{k}}\label{eqn-v-individual}
\end{align}
If we assume $H(\mathbf{x}')=\hat{y}'$, by combining Equation~\ref{eqn-u-individual} and Equation~\ref{eqn-v-individual}, we get:
\begin{align}
&\mathcal{D}(\mathbf{x},T) \geq r,\text{ } \text{if: }\\
  &
{\alpha}^{\hat{y}'}_{w_{e-r+1}}(\mathbf{x},h,k)+\frac{1}{d}-\frac{1}{k}\frac{\binom{d-1-T}{k-1}}{\binom{d}{k}}\\
\leq&\frac{1}{T-r+1}[\frac{1}{2k}\cdot({{p}_{\hat{y}}(\mathbf{x},h,k)-{p}_{\hat{y}'}(\mathbf{x},h,k)})-\frac{r-1}{d}+\sum_{i = {d-T+r}}^{d} {\alpha}^{\hat{y}'}_{w_i}(\mathbf{x},h,k)],
\end{align}

We can also consider all $v\in V^*$ jointly. We use $\delta_i$ to denote ${\alpha}^{\hat{y}'}_i(\mathbf{x}',h,k)-{\alpha}^{\hat{y}'}_i(\mathbf{x},h,k)$ for feature $i$. We know that each feature group of size $k$ that contains that least one modified feature at most contains $k-1$ unmodified features. This leads to the following inequality:
\begin{align}\sum_{i \in \mathbf{x}\ominus \mathbf{x}'}\delta_i \geq \frac{1}{k-1}\sum_{i \notin \mathbf{x}\ominus \mathbf{x}'}\delta_i\label{eqn-joint-inequality}\end{align} 
We first rewrite the maximum importance score of features in $U^*$ as:
\begin{align}
&\min_{\mathbf{x}'\in \mathcal{B}(\mathbf{x},T),H(\mathbf{x}') = \hat{y}'}\max_{u\in U^*} {\alpha}^{\hat{y}'}_u(\mathbf{x}',h,k)\\
\geq&\frac{1}{T-r+1}\min_{\mathbf{x}'\in \mathcal{B}(\mathbf{x},T),H(\mathbf{x}') = \hat{y}'}\sum_{u\in U^*} {\alpha}^{\hat{y}'}_u(\mathbf{x}',h,k)\\
\geq&\frac{1}{T-r+1}\min_{\mathbf{x}'\in \mathcal{B}(\mathbf{x},T),H(\mathbf{x}') = \hat{y}'}\sum_{u\in U^*} {\alpha}^{\hat{y}'}_u(\mathbf{x},h,k)+\sum_{u\in U^*} \delta_u\\
\geq&\frac{1}{T-r+1}[\min_{\mathbf{x}'\in \mathcal{B}(\mathbf{x},T),H(\mathbf{x}') = \hat{y}'}\sum_{u\in U^*} {\alpha}^{\hat{y}'}_u(\mathbf{x},h,k)-\frac{r-1}{d}+\sum_{i\in \mathbf{x}\ominus \mathbf{x}'} \delta_i]\\
\geq&\frac{1}{T-r+1}(\min_{\mathbf{x}'\in \mathcal{B}(\mathbf{x},T),H(\mathbf{x}') = \hat{y}'}\sum_{u\in U^*} {\alpha}^{\hat{y}'}_u(\mathbf{x},h,k)-\frac{r-1}{d})\\
&+\frac{1}{T-r+1}\max(\sum_{i\in \mathbf{x}\ominus \mathbf{x}'} \delta_{i},\frac{1}{2k}\cdot({p_{\hat{y}}(\mathbf{x},h,k)-p_{\hat{y}'}(\mathbf{x},h,k)}))\\
\geq&\frac{1}{T-r+1} (\sum_{i = d-T+r}^{d}{\alpha}^{\hat{y}'}_{w_i}(\mathbf{x},h,k)-\frac{r-1}{d})\\
&+\frac{1}{T-r+1}\max(\sum_{i\in \mathbf{x}\ominus \mathbf{x}'} \delta_{i},\frac{1}{2k}\cdot({p_{\hat{y}}(\mathbf{x},h,k)-p_{\hat{y}'}(\mathbf{x},h,k)}))\label{eqn-joint-u}
\end{align}
We then write the minimum importance score of features in $V^*$ as:
\begin{align}
&\max_{\mathbf{x}'\in \mathcal{B}(\mathbf{x},T),H(\mathbf{x}') = \hat{y}'}\min_{v\in V^*} {\alpha}^{\hat{y}'}_v(\mathbf{x}',h,k)\\
\leq&\max_{\mathbf{x}'\in \mathcal{B}(\mathbf{x},T),H(\mathbf{x}') = \hat{y}'}\frac{1}{e-r+1}\sum_{v\in V^*} {\alpha}^{\hat{y}'}_v(\mathbf{x}',h,k)\\
\leq&\max_{\mathbf{x}'\in \mathcal{B}(\mathbf{x},T),H(\mathbf{x}') = \hat{y}'}\frac{1}{e-r+1}((k-1)\sum_{i\in \mathbf{x}\ominus \mathbf{x}'} \delta_{i} +\sum_{v\in V^*} {\alpha}^{\hat{y}'}_v(\mathbf{x},h,k))\label{eqn-use-joint-inequality}\\
\leq&[\frac{1}{e-r+1}\max_{\mathbf{x'}\in \mathcal{B}(\mathbf{x},T),H(\mathbf{x}') = \hat{y}'}\sum_{v\in V^*} {\alpha}^{\hat{y}'}_v(\mathbf{x},h,k)]+\frac{k-1}{e-r+1}\sum_{i\in \mathbf{x}\ominus \mathbf{x}'} \delta_{i} \\
\leq&\frac{1}{e-r+1}\sum_{i =1}^{e-r+1} {\alpha}^{\hat{y}'}_{w_i}(\mathbf{x},h,k)+\frac{k-1}{e-r+1}\sum_{i\in \mathbf{x}\ominus \mathbf{x}'} \delta_{i}.\label{eqn-joint-V}
\end{align}
Equation~\ref{eqn-use-joint-inequality} is derived by applying Equation~\ref{eqn-joint-inequality}. After subtracting Equation~\ref{eqn-joint-u} by Equation~\ref{eqn-joint-V}, we have:
\begin{align}
& \min_{\mathbf{x}'\in \mathcal{B}(\mathbf{x},T),H(\mathbf{x}') = \hat{y}'}\max_{u\in U^*} {\alpha}^{\hat{y}'}_u(\mathbf{x}',h,k)-\max_{\mathbf{x}'\in \mathcal{B}(\mathbf{x},T),H(\mathbf{x}') = \hat{y}'}\min_{v\in V^*} {\alpha}^{\hat{y}'}_v(\mathbf{x}',h,k)\label{eqn-min-max-inequality}\\
\geq& \frac{1}{T-r+1} (\sum_{i = d-T+r}^{d}{\alpha}^{\hat{y}'}_{w_i}(\mathbf{x},h,k)-\frac{r-1}{d})\\
&+\frac{1}{T-r+1}\max(\sum_{i\in \mathbf{x}\ominus \mathbf{x}'} \delta_{i},\frac{1}{2k}\cdot({p_{\hat{y}}(\mathbf{x},h,k)-p_{\hat{y}'}(\mathbf{x},h,k)}))\\
&-[\frac{1}{e-r+1}\sum_{i =1}^{e-r+1} {\alpha}^{\hat{y}'}_{w_i}(\mathbf{x},h,k)+\frac{k-1}{e-r+1}\sum_{i\in \mathbf{x}\ominus \mathbf{x}'} \delta_{i}] \\
\geq&[\frac{1}{T-r+1} \sum_{i = d-T+r}^{d}{\alpha}^{\hat{y}'}_{w_i}(\mathbf{x},h,k)-\frac{1}{e-r+1}\sum_{i =1}^{e-r+1} {\alpha}^{\hat{y}'}_{w_i}(\mathbf{x},h,k)-\frac{r-1}{d\cdot(T-r+1)}]\\
&+\frac{1}{2k}(\frac{1}{T-r+1}-\frac{k-1}{e-r+1})\cdot({p_{\hat{y}}(\mathbf{x},h,k)-p_{\hat{y}'}(\mathbf{x},h,k)})\label{eqn-assume}
\end{align}
We have Equation~\ref{eqn-assume} by assuming $\frac{1}{T-r+1}>\frac{k-1}{e-r+1}$. We can make this assumption because otherwise Equation~\ref{eqn-min-max-inequality} must be smaller than zero and the certification for any $r$ must not hold. Therefore, by jointly consider all $v\in V^*$, and assuming $H(\mathbf{x}')=\hat{y}'$, we get:
\begin{align}
&\mathcal{D}(\mathbf{x},T) \geq r,\text{ } \text{if:}\text{ }\\
  &
\frac{1}{e-r+1}\sum_{i =1}^{e-r+1} {\alpha}^{\hat{y}'}_{w_i}(\mathbf{x},h,k)-\frac{1}{T-r+1} \sum_{i = d-T+r}^{d}{\alpha}^{\hat{y}'}_{w_i}(\mathbf{x},h,k)+\frac{r-1}{d\cdot(T-r+1)}\\
\leq&\frac{1}{2k}(\frac{1}{T-r+1}-\frac{k-1}{e-r+1})\cdot({p_{\hat{y}}(\mathbf{x},h,k)-p_{\hat{y}'}(\mathbf{x},h,k)}).
\end{align}

In practice, we use Monte Carlo sampling to compute lower (or upper) bounds for the importance scores and label probabilities. Please refer to Section~\ref{theory-confidence-bounds} for the details. Putting together with previous results, we have:
\begin{align}
&\mathcal{D}(\mathbf{x},T) = \argmax_r r,\text{ } s.t.\text{ }\forall \hat{y}'\neq \hat{y},\text{ }\\
  &
\overline{\alpha}^{\hat{y}'}_{w_{e-r+1}}(\mathbf{x},h,k)+\frac{1}{d}-\frac{1}{k}\frac{\binom{d-1-T}{k-1}}{\binom{d}{k}}\\
\leq&\frac{1}{T-r+1}[\frac{1}{2k}\cdot({\underline{p}_{\hat{y}}(\mathbf{x},h,k)-\overline{p}_{\hat{y}'}(\mathbf{x},h,k)})-\frac{r-1}{d}+\sum_{i = {d-T+r}}^{d} \underline{\alpha}^{\hat{y}'}_{q_i}(\mathbf{x},h,k)]\\
&\;\lor\\
&\frac{1}{e-r+1}\sum_{i =1}^{e-r+1} \overline{\alpha}^{\hat{y}'}_{w_i}(\mathbf{x},h,k)-\frac{1}{T-r+1} \sum_{i = d-T+r}^{d}\underline{\alpha}^{\hat{y}'}_{q_i}(\mathbf{x},h,k)+\frac{r-1}{d\cdot(T-r+1)}\\
\leq&\frac{1}{2k}(\frac{1}{T-r+1}-\frac{k-1}{e-r+1})\cdot({\underline{p}_{\hat{y}}(\mathbf{x},h,k)-\overline{p}_{\hat{y}'}(\mathbf{x},h,k)}) ,
\end{align}
where $\{w_1, \cdots , w_{d}\}$ denotes the set of all features in descending order of the important value upper bound $\overline{\alpha}^{\hat{y}'}(\mathbf{x},h,k)$, i.e., $\overline{\alpha}_{w_1}^{\hat{y}'}(\mathbf{x},h,k)\geq \overline{\alpha}_{w_2}^{\hat{y}'}(\mathbf{x},h,k)\geq \cdots \geq \overline{\alpha}_{w_d}^{\hat{y}'}(\mathbf{x},h,k)$, and $\{q_1, \cdots , q_{d}\}$ denotes the set of all features in descending order of the important value lower bound $\underline{\alpha}^{\hat{y}'}(\mathbf{x},h,k)$, i.e, $\underline{\alpha}_{q_1}^{\hat{y}'}(\mathbf{x},h,k)\geq \underline{\alpha}_{q_2}^{\hat{y}'}(\mathbf{x},h,k)\geq \cdots \geq \underline{\alpha}_{q_d}^{\hat{y}'}(\mathbf{x},h,k)$. 
\end{proof}

\section{Compute Bounds for Importance Scores and Label Probabilities}\label{theory-confidence-bounds}
We use Monte Carlo sampling to compute a lower (or upper) bound for the importance scores. The important score of feature $i$ for label $c$ can be rewritten as:
\begin{align}
    &\alpha_i^{c}(\mathbf{x},h,k)  \\
    = &\frac{1}{k}\mathbb{E}_{\mathbf{z} \sim \mathcal{U}(\mathbf{x},k)} [\mathbb{I}(i \in \mathbf{z}) \cdot \mathbb{I}(h(\mathbf{z})=c )]\\
        = &\frac{1}{k} \Pr(i \in \mathbf{z})\cdot \Pr(h(\mathbf{z})={c}|i \in \mathbf{z})\\
                = &\frac{1}{d} \Pr(h(\mathbf{z})=c|i \in \mathbf{z}).
\end{align}
In practice, it is estimated using Monte Carlo sampling as $\frac{1}{d } \frac{ \sum_{\mathbf{z}_j\in G}\mathbb{I}(i \in \mathbf{z}_j) \cdot \mathbb{I}(h(\mathbf{z}_j)=c)}{\sum_{\mathbf{z}_j \in G} \mathbb{I}(i \in \mathbf{z}_j)}$, where $G = \{\mathbf{z}_1, \ldots, \mathbf{z}_N\}$ is the collection of sampled feature groups. 
The objective is to establish a lower (or upper) probability bound for $\Pr(h(\mathbf{z})={c}|i \in \mathbf{z})$. The lower bound is denoted as $\underline{\Pr}(h(\mathbf{z})={c}|i \in \mathbf{z})$ and the upper bound is denoted as $\overline{\Pr}(h(\mathbf{z})={c}|i \in \mathbf{z})$. For each feature $i$, we consider a bernoulli process where $N_i = {\sum_{\mathbf{z}_j \in G} \mathbb{I}(i \in \mathbf{z}_j)}$ represents the number of Bernoulli trials (`coin tosses'), while $\hat{n}^c_i = \sum_{\mathbf{z}_j\in G, i \in \mathbf{z}_j}\mathbb{I}(h(\mathbf{z}_j)=c)$ corresponds to the `heads' count, or the number of successful outcomes. Therefore, we can compute the probability bounds for each feature $i \in \mathbf{x}$ using Clopper-Pearson based method~\cite{clopper1934clopper_pearson}:
\begin{align}
\label{prob_bound_est_1}
 &   \underline{\Pr}(h(\mathbf{z})=c|i \in \mathbf{z}) = \text{Beta}(\frac{\beta}{d}; \hat{n}^c_i, N_i - \hat{n}^c_i + 1),\text{ and}\\
 \label{prob_bound_est_2}
      &  \overline{\Pr}(h(\mathbf{z})={c}|i \in \mathbf{z}) = \text{Beta}(1-\frac{\beta}{d}; \hat{n}^c_i, N_i - \hat{n}^c_i + 1)),    
\end{align}
where $1 - \beta$ is the overall confidence level and $\text{Beta}(\rho;\varsigma, \vartheta)$ is the $\rho$-th quantile of the Beta distribution with shape parameters $\varsigma$ and $\vartheta$. We divide ${\beta}$ by ${d}$ because we need to divide the confidence level among the $d$ features. Then we have $\overline{\alpha}_i^{c}(\mathbf{x},h,k) =\frac{1}{d} \overline{\Pr}(h(\mathbf{z})={c}|i \in \mathbf{z})$, and $\underline{\alpha}_i^{c}(\mathbf{x},h,k) 
 = \frac{1}{d}\underline{\Pr}(h(\mathbf{z})={c}|i \in \mathbf{z})$.

Likewise, we can compute the label probability bounds as follows:
\begin{align}
\label{prob_bound_est_1}
&\forall c \in \{1,2,\cdots,C\},\\
 &   \underline{p}_c (\mathbf{x},h,k)= \text{Beta}(\frac{\beta}{C}; n_c, N - n_c + 1), \text{ and}\\
 \label{prob_bound_est_2}
      &  \overline{p}_c (\mathbf{x},h,k) = \text{Beta}(1-\frac{\beta}{C}; n_c, N - n_c + 1)) ,
\end{align}
where $n_c$ is the number of sampled feature groups that predicts for label $c$, $1 - \beta$ is the overall confidence level and $\text{Beta}(\rho;\varsigma, \vartheta)$ is the $\rho$-th quantile of the Beta distribution with shape parameters $\varsigma$ and $\vartheta$. We divide ${\beta}$ by $C$ because we simultaneously compute bounds for all labels. 
\section{Effectiveness of Appearance Frequency Normalization}\label{appendix_frequency_normalization}
We conduct an empirical comparison between two approaches to estimate the importance score in Eq.~(\ref{eqn_importance_score}): (1) directly applying Monte Carlo sampling, and (2) Monte Carlo sampling with normalization based on appearance frequency. Our experiment focus on certified defense adversarial attacks under default settings, with the sampling size $N$ set to 200. We use faithfulness as the metric, with the deletion ratio set to 20\%. The results demonstrate that normalizing the importance scores by appearance frequency improves the faithfulness.

\begin{table}[h!]
\centering
\caption{Compare faithfulness with and without normalization.}
\begin{tabular}{|c|c|c|c|}
\hline
\textbf{Dataset} & \textbf{SST-2} & \textbf{IMDB} & \textbf{AG-news} \\ \hline
Without Normalization & 0.82 & 0.95 & 0.92 \\ \hline
With Normalization (Eq.~(\ref{eqn_frequency_normalization}))& 0.87 & 0.99 & 0.96 \\ \hline
\end{tabular}
\label{tab:normalization_comparison}
\end{table}
\section{Discussion on Certified Detection of Adversarial Features}\label{appendix_theory_discussion}
From the theoretical result in Section~\ref{theory-detect}, we observe that the following factors can lead to a larger certified detection size $\mathcal{D}(\mathbf{x},T)$: 

\myparatight{1) High prediction confidence (represented by $\Delta$)} A high confidence ensemble model (before attack) increases the label probability gap between the predicted label $\hat{y}$ and any other label $\hat{y}'$. This means that the modified features must influence a greater number of subsampled feature groups to alter the prediction, making them more detectable.

\myparatight{2) Even distribution of importance values for the target Label (represented by ${\alpha}^{\hat{y}'}$)} In the worst-case scenario, an attacker could modify features with the smallest importance values, disguising the attack while increasing the probability of the target label. An even distribution of the importance values makes this strategy more difficult.

\myparatight{3) Smaller subsampling ratio (represented by $\frac{k}{d}$)} Consider the extreme case where $k=1$. In this scenario, each adversarial feature is part of only one feature group. Altering the prediction of that feature group would increase the importance value of the adversarial feature from 0 to $1/d$ (the maximum importance a feature can attain). This makes these adversarial features easy to identify.

\myparatight{4) Smaller number of modified features (represented by $T$)} With fewer adversarial features, each must impact more subsampled feature groups to change the prediction, increasing detectability.

\myparatight{Treating prediction confidence $\Delta$ and the number of features $d$ as additional parameters to vary} We treat the prediction confidence gap $\Delta$ and the number of features $d$ as tunable parameters to study their effect on the certified detection rate. We suppose it is a binary classification problem. For each simulated test input, feature importance values are generated synthetically: the number of subsets that include feature $j$ is sampled as $n_j \sim \text{Binomial}(N, 1 - \rho)$, where $N$ is the number of random subsets and $\rho$ is the masking ratio, and the number of those subsets that predict the alternative class is sampled as $n_j^{\hat{y}'} \sim \text{Binomial}(n_j, (1-\Delta)/2)$, where $(1-\Delta)/2$ is the base rate at which subsets predict the alternative class. All other parameters are fixed. As shown in Figure~\ref{fig-more-ablation}, a larger confidence gap leads to a higher certified detection rate. With $k$ fixed, increasing $d$ lowers the subsampling ratio and improves certified detection performance. When $k/d$ is fixed, a smaller $d$ performs better due to the statistical penalty introduced by the union bound; this penalty can be mitigated by increasing $N$.
\begin{figure*}[t]
\vspace{-3mm}
\centering
\subfloat[Affect of $\Delta$]{\includegraphics[width=0.95\textwidth]{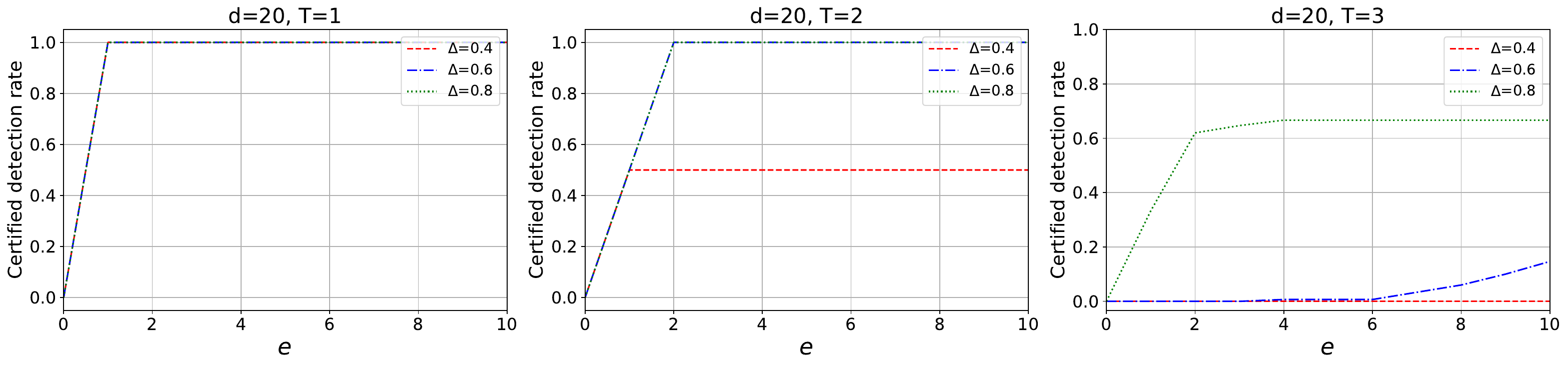}}\\

\subfloat[Affect of $d$ (fixed $k$)]{\includegraphics[width=0.95\textwidth]{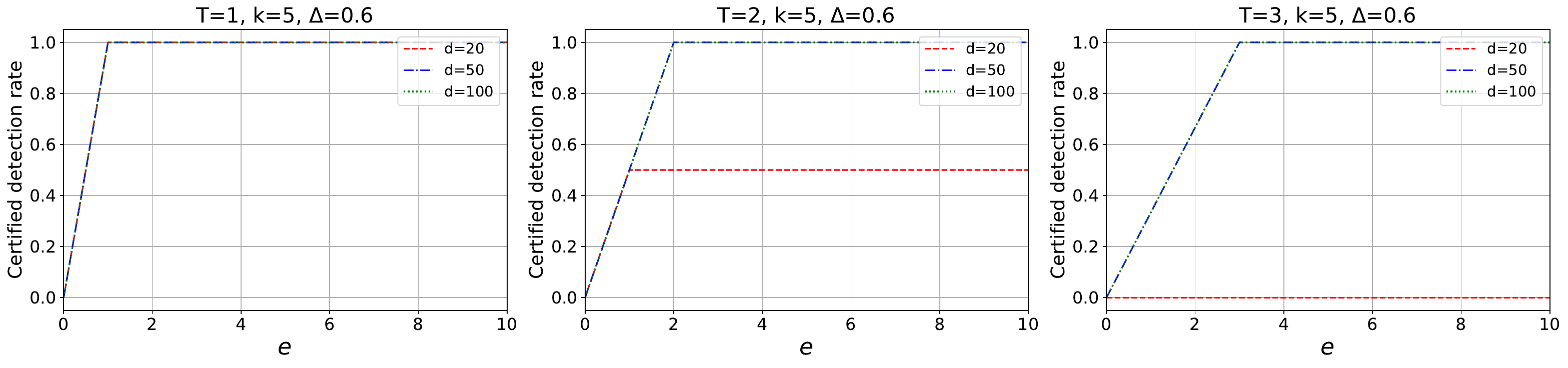}}
\vspace{2mm}

\subfloat[Affect of $d$ (fixed $k/d$)]{\includegraphics[width=0.95\textwidth]{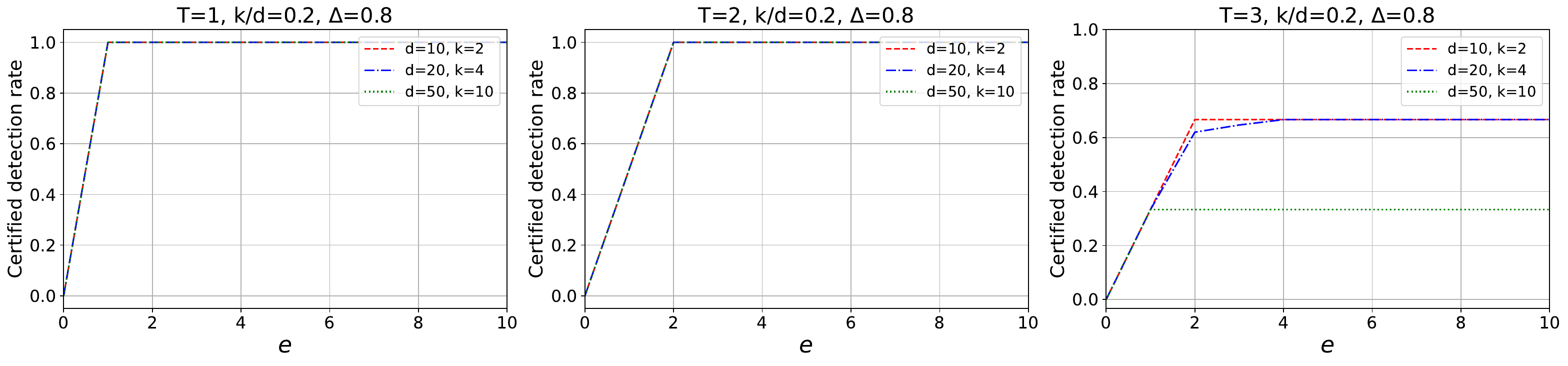}}
\vspace{-2mm}

\caption{Affect of prediction confidence $\Delta$ and the number of features $d$.}
\label{fig-more-ablation}
\vspace{-2mm}
\end{figure*}
\section{Computation Time}\label{appendix_computation_time}
Our approach incurs minimal computational cost for feature attribution, as it reuses the computational byproducts already generated during the prediction process of the random subspace method. In contrast, many other feature attribution methods such as Shapley value and LIME require significant computation time because they are not specifically tailored to the random subspace method. The experiments are performed under default settings for certified defense adversarial attacks. We report the computational time of a single testing sample, averaged over the test dataset. For certified detection, the time is shown for a single $e$ and $T$ combination.

\begin{table}[h!]
\centering
\caption{Computational time (in seconds) of our method for feature attribution and certified detection evaluation.}
\begin{tabular}{|c|c|c|c|}
\hline
\textbf{Dataset} & \textbf{SST-2} & \textbf{IMDB} & \textbf{AG-news} \\ \hline
\makecell{Feature attribution \\(Solving Eq.~(\ref{eqn_frequency_normalization}))} & 0.02 & 0.03 & 0.02 \\ \hline
\makecell{Certified detection \\(Solving Eq.~(\ref{eqn_certified_detection}))}& 0.19 & 0.36 & 0.23 \\ \hline
\end{tabular}
\label{tab:normalization_comparison}
\end{table}

\section{Compare {\name} with More Baselines}\label{appendix_shapley_variants}
We further compare our method with Beta-Shapley~\citep{kwon2021beta_shapley}, L-Shapley~\citep{chen2018l_shapley}, C-Shapley~\citep{chen2018l_shapley}, ContextCite~\citep{cohen2024contextcite}, and Kernel SHAP~\citep{lundberg2017unified}.
For Beta-Shapley, we follow the original paper and set $\alpha=4$ and $\beta=1$. For L-Shapley and C-Shapley, we use a neighborhood size $k=5$. For ContextCite, we set the regularization parameter $\alpha=0.01$ following the original implementation~\citep{cohen2024contextcite}. All baselines are applied to the base model. We evaluate faithfulness at removal ratios of $10\%$ and $20\%$ when there is no attack. As shown in Table~\ref{tab:more-baselines}, our method outperforms these baselines, as they do not guarantee faithfulness to the ensemble model.

\begin{table*}[t]
  \centering
  \captionof{table}{Compare the faithfulness of our
method with baselines for certified defense. We
delete different ratios of most important words
and compute the rate of label changes.\label{tab:more-baselines}}
  \vspace{-0.5em}
  \resizebox{0.60\textwidth}{!}{
    {\scriptsize
    \begin{tabular}{c|c|c|c|c|c|c}
      \toprule
      Dataset & \multicolumn{2}{c|}{SST-2} & \multicolumn{2}{c|}{IMDb} &
              \multicolumn{2}{c}{AG-news} \\ \midrule
      Ratio & 10\% & 20\% & 10\% & 20\% & 10\% & 20\% \\ \midrule
      Kernel SHAP  & 0.28 & 0.66 & 0.18 & 0.28 & 0.22 & 0.38 \\
      ContextCite  & 0.30 & \textbf{0.68} & 0.56 & 0.60 & 0.22 & 0.38 \\
      Beta-Shapley & 0.28 & 0.62 & 0.34 & 0.42 & 0.20 & 0.38 \\
      L-Shapley    & 0.30 & 0.54 & 0.38 & 0.44 & 0.20 & 0.36 \\
      C-Shapley    & 0.30 & 0.52 & 0.36 & 0.42 & 0.20 & 0.34 \\
      Ours         & \textbf{0.32} & \textbf{0.68} & \textbf{0.60} & \textbf{0.70} & \textbf{0.24} & \textbf{0.42} \\
      \bottomrule
    \end{tabular}}}
\end{table*}
\begin{table}[!t]
\centering
\caption{Certified detection rate as a function of the
maximum number of adversarial patches ($T$).}
\label{tab:image_certification}
\small
\begin{tabular}{lccc}
\toprule
\textbf{Dataset} & $T{=}1$ & $T{=}3$ & $T{=}5$ \\
\midrule
CIFAR-10     & 0.90 & 0.77 & 0.64 \\ 
ImageNette   & 0.96 & 0.89 & 0.84 \\ 
ImageNet-100 & 0.86 & 0.74 & 0.63 \\ 
\bottomrule
\end{tabular}
\end{table}
\section{Application of Our Method in Image Domain}\label{appendix_image_domain}
Here, we show that our method can be applied to the image domain to provide certified detection guarantee against image patch attacks~\citep{brown2017adversarial_patch, levine2020derandomized}. We evaluate on three image
datasets: CIFAR-10, ImageNette, and
ImageNet-100.  
We fine-tune a
pre-trained DINO~\citep{caron2021dino} backbone and treat each $14{\times}14$ image
patch (in a $224{\times}224$ image) as a feature.  
We draw $10{,}000$ random samples with a dropping ratio of $0.95$;
the clean accuracies of the resulting ensembles models are
$72\%$, $94\%$, and $70\%$, respectively.
Explanations report the top $5\%$ most important features, and we
certify the detection rate when the attacker may modify at most
$T$ patches.

The results in Table~\ref{tab:image_certification} demonstrate the robustness of our method across diverse
image datasets and attack budgets.  
For instance, on ImageNette, our method is guaranteed to detect \textbf{84\%} of the
manipulated patches when the adversary is allowed to alter
five patches.  

Regarding efficiency, certification takes only $0.35$, $0.25$, and
$0.24$~seconds per sample on CIFAR-10, ImageNette, and ImageNet-100,
respectively, even though these datasets involve far more features
(e.g., 1{,}024 on CIFAR-10) and samples
($10,000$) than the text benchmarks (around 21 features and 1{,}000 samples
on SST-2) used elsewhere in this paper.

\section{More Advanced Jailbreak Attacks}\label{appendix_advanced_jailbreaks}
We evaluate our method on two advanced jailbreak attacks, {AIR} (Attack via Implicit Reference)~\citep{wu2024air}, and
{JAM} (Jailbreak Against Moderation)~\citep{jin2024jam}, all targeting
\textsc{gpt-3.5-turbo}.
Because these jailbreak prompts are long
(JAM averages 372 words), we treat each 10-word text segment as a
feature.  
To keep the computation tractable we use 200 random samples with a dropping ratio of~0.8.
After applying RA-LLM~\citep{cao2023rallm}, the attack–success rates (ASRs) drop to 0.10, and 0.02, respectively.
The faithfulness scores of our method
(20 \% feature removal) on AIR and JAM are 0.88 and 0.94, respectively.

As the results show that our method still reliably highlights the key
segments that trigger the LLM’s rejection, even for these sophisticated
attacks.
\section{Experimental Details}\label{sup-experiment}
\subsection{Datasets}\label{sup-dataset}
In our study on certified defense mechanisms, we use classification datasets such as SST-2~\citep{socher2013sst2}, IMDB~\citep{maas2011imdb}, and AGNews~\citep{zhang2015agnews}. For each dataset, we fine-tune the base model using the original training dataset and assess our feature attribution method's effectiveness using a randomly selected subset of 200 test samples. In scenarios without attacks, these test samples are used in their unaltered form. For backdoor attack scenarios, each test input is modified by inserting trigger (`cf' in our experiments) three times. In the context of adversarial attacks, we substitute a certain number of words in each test input with their synonyms. 

For defense against jailbreaking attacks, we first craft jailbreaking prompts for harmful behaviors dataset~\citep{zou2023GCG} utilizing each jailbreaking attack method, namely GCG~\citep{zou2023GCG}, AutoDAN~\citep{liu2023autodan}, and DAN~\citep{liu2023autodan}. For each jailbreaking attack, we randomly select 100 jailbreaking prompts that successfully bypass the alignment of the LLM, which we then use as our test dataset. 

We provide more details about these datasets below.
\begin{itemize}
    \item \myparatight{SST-2} SST-2 is a binary sentiment classification dataset derived from the Stanford Sentiment Treebank. It consists of 67,349 training samples and 1,821 testing samples.
    \item \myparatight{AG-news} AG-news dataset is created by compiling the titles and descriptions of news articles from the four largest categories: "World", "Sports", "Business", and "Sci/Tech". The dataset includes 120,000 training samples and 7,600 test samples in total.
    \item \myparatight{IMDb} IMDb is a movie reviews dataset for binary sentiment classification. It provides 25,000 movie reviews for training and 25,000 for testing. 
    \item \myparatight{Harmful behaviors} This is a dataset from AdvBench~\citep{zou2023GCG} that contains 500 potentially harmful behaviors presented as instructions. The adversary aims to find a single input that causes the model to produce any response that tries to follow these harmful instructions.
\end{itemize}

\subsection{Implementation of Baseline Methods}\label{sup-baseline}
\begin{itemize}

    \item \myparatight{Shapley value} We implement \emph{Baseline Shapley}~\citep{sundararajan2020many_shapley} on the base model. This Shapley value models a feature’s absence using its baseline value. In particular, for certified defense, we use the `[MASK]' token as the baseline value, and for defense against jailbreaking attacks, we use the `[SPACE]' token as the baseline. To estimate Shapley value, we randomly sample permutations over all features following previous works~\citep{enouen2023textgenshap,chen2023estimate_shapley}, and use these permutations to simultaneously update the importance values of all features. The total number of queries to the base model is limited to default $N$ values to ensure a fair comparison.
    \item \myparatight{LIME} We implement LIME on the base model. We follow the original paper~\citep{ribeiro2016lime} and use an exponential kernel to re-weight training samples. The total number of training samples is also set to default $N$ values.
    \item \myparatight{ICL} 
    We create in-context learning prompts in line with the methodology in~\citep{kroeger2023icl}. These prompts include an in-context learning dataset comprising the inputs and outputs of the explained model. We let the input be  a list of the indexes of the retained features, and let the output be the predicted label from the model. Given the context length limitations of LLMs, we trim the in-context learning dataset to fit within the maximum allowable context length.
\end{itemize}
\subsection{Implementation of Adversarial and Backdoor Attack}\label{sup-adversarial-attack}
\begin{itemize}
    \item \myparatight{Adversarial attack} We implement TextFooler~\citep{jin2020textfooler} as the adversarial attack method, which is broadly applicable to black-box models. This technique repeatedly replaces the most important words (determined by leave-one-out analysis) in a sentence until the predicted label is changed. When applied to ensemble models, identifying these important words is computationally challenging, so we find them using the base model and assume they remain important for the ensemble model. Due to the robustness of the ensemble model, we omit the sentence similarity check to enhance the attack success rate.
    \item \myparatight{Backdoor attack} We employ BadNet~\citep{gu2017badnets} as our backdoor attack method. We poison $10\%$ of the training samples by inserting 10 trigger words into these sentences, ensuring that at least one of them appears in the masked versions of the poisoned training samples. During testing, we activate the backdoor by inserting three trigger words into the test input.
    \end{itemize}
\subsection{Implementation of Defense Against Jailbreaking Attack}\label{sup-jailbreaking}
Rather than simply relying on a majority vote among the labels of perturbed input prompts, RA-LLM~\citep{cao2023rallm} introduce a threshold parameter, denoted as $\tau$, to control the rate of mistakenly rejecting benign prompts. 
In particular, the ensemble model outputs `harmful' if the proportion of perturbed input prompts supporting this classification exceeds the threshold $\tau$, otherwise labeling it as `non-harmful'. In our experiments, we set $\tau$ to $0.1$. A slight adjustment we have made is to segment the sentences into words rather than tokens to keep consistency. This defense reduces the attack success rates of GCG~\citep{zou2023GCG}, AutoDAN~\citep{liu2023autodan}, and DAN~\citep{liu2023autodan} to $0.01$, $0.10$ and $0.32$, respectively.

\subsection{Metrics for Key Word Prediction}\label{sup-metric}
In the context of a backdoor attack, $L(\mathbf{x})$ comprises the triggers that are inserted. For adversarial attacks, it includes the words that have been substituted. And in a jailbreaking attack, it consists of the harmful query embedded within the jailbreaking prompt. Our analysis centers on $\mathcal{D}^*_{test}$, a specific subset of $\mathcal{D}_{test}$ including test samples significantly impacted by $L(\mathbf{x})$. Within a backdoor attack scenario, this subset includes triggered sentences that are classified into the target class. In an adversarial attack, it encompasses sentences altered by perturbations and then misclassified to a label different from the true label. For jailbreaking attacks, it includes jailbreaking prompts identified as `harmful' by the ensemble model.

\begin{table}[t]
    \centering
\caption{Attack success rate and average perturbation size $T$ for empirical attacks. $T$ is the number of word insertions (or modifications) for backdoor attack (or adversarial attack).\label{finetuning surrogate}}
        \vspace{0.5em}
   \resizebox{6cm}{!}  
   {\scriptsize
     \begin{tabular}{c|c|c|c}
    \toprule
    
    Dataset &  SST-2 &  IMDb   &  AG-news
 \\ \midrule
     Clean Accuracy   & 0.790 & 0.855 &0.910\\
    \midrule
    ASR (backdoor)&1
&0.920
&0.960

\\
    ASR (adversarial) &0.920
&0.560
&0.875

\\
\midrule
    Average $T$ (backdoor)&3
&3
&3

\\
    Average $T$ (adversarial) &2.47
&14.31
&10.98

\\
    \bottomrule
    \end{tabular}}
    \label{tab-empirical-ASR}
    \vspace{0mm}
\end{table}

\begin{table}[t]
\vspace{0mm}
    \centering
\caption{Compare the key word prediction performance of our method with baselines for certified defense. Each feature attribution method reports the top-10 important words ($e=10$). \label{tab-key-word-prediction-e=10}}
        \vspace{0.5em}
   \resizebox{14cm}{!}  
   {\scriptsize
     \begin{tabular}{c|c|c|c|c|c|c|c|c|c|c}
    \toprule
    
   \multirow{2}{*}{Defense scenarios}& Dataset  &  \multicolumn{3}{c|}{SST-2} &         \multicolumn{3}{c|}{IMDb} &          \multicolumn{3}{c}{AG-news} 
 \\ \cmidrule{2-11}
        
    &   Metric   & Precision & Recall & F-1 score & Precision & Recall& F-1 score& Precision & Recall & F-1 score\\

    \midrule
  \multirow{4}{*}{Backdoor attack}  &  Shapley value&0.300 &0.987 &0.459 &0.182 &0.608 &0.281 &0.281 &0.936 &0.432\\
   & LIME & 0.153 &0.498 &0.234 &0.026 &0.088 &0.041 &0.083 &0.276 &0.127\\ 
      & ICL & 0.050 &0.165 &0.076 &0.020 &0.068 &0.031 &0.056 &0.187 &0.087\\ 
   & Ours& \textbf{0.304} &\textbf{1.0} &\textbf{0.465} &\textbf{0.280} &\textbf{0.932} &\textbf{0.430} &\textbf{0.295} &\textbf{0.983} &\textbf{0.453}\\
    \midrule

   \multirow{4}{*}{Adversarial attack}&  Shapley value&  \textbf{0.236} &\textbf{0.864} &\textbf{0.348} &0.245 &0.243 &0.203 &0.434 &0.523&\textbf{0.409}\\
    &LIME &0.146 &0.573 &0.219 &0.068 &0.061 &0.053 &0.247 &0.262&0.228\\
    &ICL&0.060 &0.231 &0.089 &0.073 &0.078 &0.064 &0.058 &0.060 &0.053\\ 
    &Ours& 0.231 &0.842 &0.340& \textbf{0.340} &\textbf{0.294} &\textbf{0.273} &\textbf{0.436} & \textbf{0.529} &\textbf{0.409}\\
    \bottomrule
    \end{tabular}}
    \label{tab-certified-key-word-prediction-e=10}
    \vspace{0mm}
\end{table}

\begin{table}[!ht]
    \centering
\caption{Compare the key word prediction performance of our method with baselines for defense against jailbreaking attacks. Each feature attribution method reports the top-20 important words ($e=20$). }
        \vspace{0.5em}
   \resizebox{14cm}{!}  
   {\scriptsize
     \begin{tabular}{c|c|c|c|c|c|c|c|c|c}
    \toprule
    
    Attack method  &  \multicolumn{3}{c|}{GCG} &         \multicolumn{3}{c|}{AutoDAN} &          \multicolumn{3}{c}{DAN} 
 \\ \midrule
        
     Metric   & Precision & Recall & F-1 score & Precision & Recall& F-1 score& Precision & Recall & F-1 score\\
    \midrule
     Shapley value&0.502 &0.867 &0.630 &0.297 &0.498 &0.367 &0.153 &0.264 &0.192\\
    LIME &\textbf{0.516} &\textbf{0.889} &\textbf{0.647} &0.260 &0.451 &0.327 &0.292 &0.493 &0.362\\
    ICL& 0.465 &0.776 &0.568 &0.233 &0.387 &0.287 &0.086 &0.147 &0.107\\ 
    Ours& 0.510 &0.881 &0.640 &\textbf{0.312} &\textbf{0.532} &\textbf{0.388} &\textbf{0.299}&\textbf{0.518} &\textbf{0.375}\\
    \bottomrule
    \end{tabular}}
    \label{tab-jailbreaking-key-word-prediction-e=20}
    \vspace{-5mm}
\end{table}

\newpage
\begin{figure}
\vspace{5mm}
\centering

\subfloat[No Attack. Predicted label is 1.]{\includegraphics[width=0.8\textwidth]{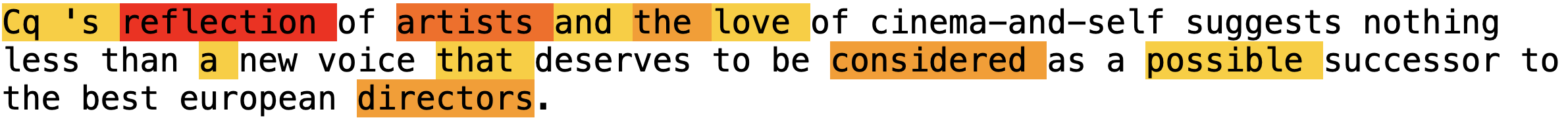}}
\vspace{2mm}
\subfloat[Backdoor Attack. Predicted label is 0.]{\includegraphics[width=0.8\textwidth]{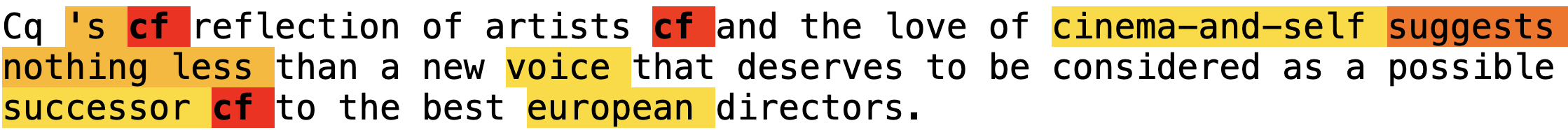}}\vspace{2mm}

\subfloat[Adversarial Attack. Predicted label is 0.]{\includegraphics[width=0.8\textwidth]{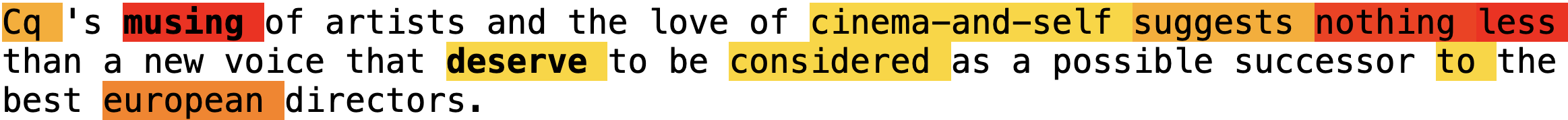}}\vspace{2mm}

\caption{Visualization of Shapley value's explanation on SST-2 dataset. The Shapley value is applied on the base model. The ground-truth key words are highlighted in bold.
}

\label{fig-visualization-sst2-shap}
\vspace{-7mm}
\end{figure}

\begin{figure}
\vspace{5mm}
\centering

\subfloat[No Attack. Predicted label is 1.]{\includegraphics[width=0.8\textwidth]{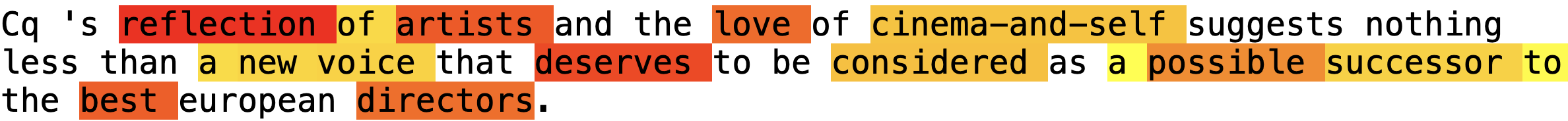}}
\vspace{2mm}
\subfloat[Backdoor Attack. Predicted label is 0.]{\includegraphics[width=0.8\textwidth]{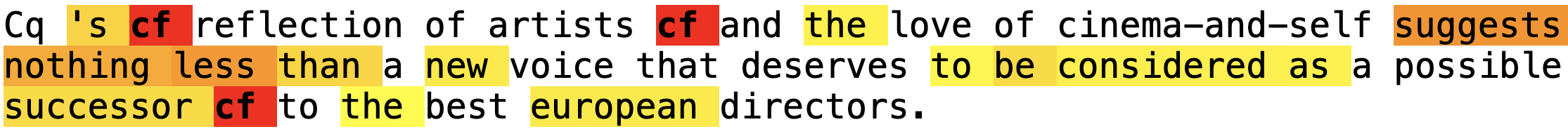}}\vspace{2mm}

\subfloat[Adversarial Attack. Predicted label is 0.]{\includegraphics[width=0.8\textwidth]{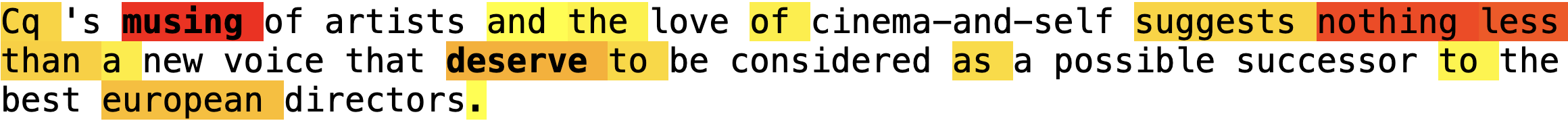}}\vspace{2mm}

\caption{Visualization of our explanation on SST-2 dataset. The ground-truth key words are highlighted in bold.
}

\label{fig-visualization-sst2-ours}
\vspace{10mm}
\end{figure}

\begin{figure}
\vspace{5mm}
\centering

\subfloat[No Attack. Predicted label is 2 (Business).]{\includegraphics[width=0.8\textwidth]{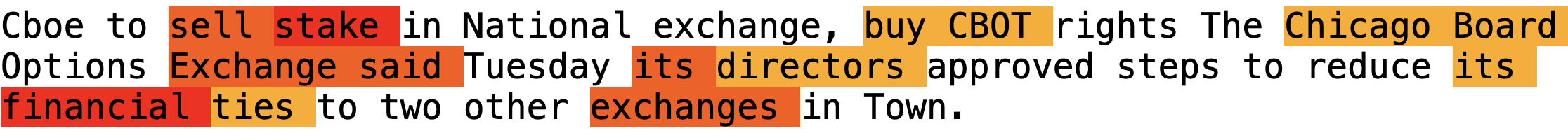}}
\vspace{2mm}
\subfloat[Backdoor Attack. Predicted label is 0 (World).]{\includegraphics[width=0.8\textwidth]{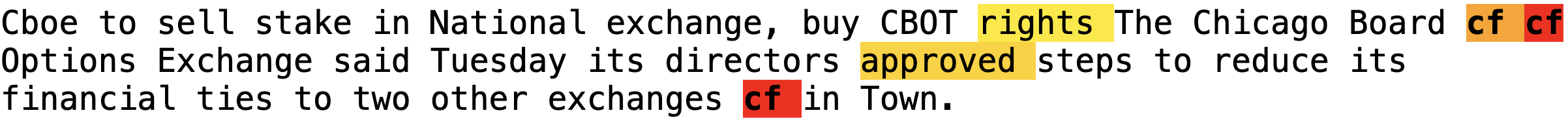}}\vspace{2mm}

\subfloat[Adversarial Attack. Predicted label is 3 (Sci/Tech).]{\includegraphics[width=0.8\textwidth]{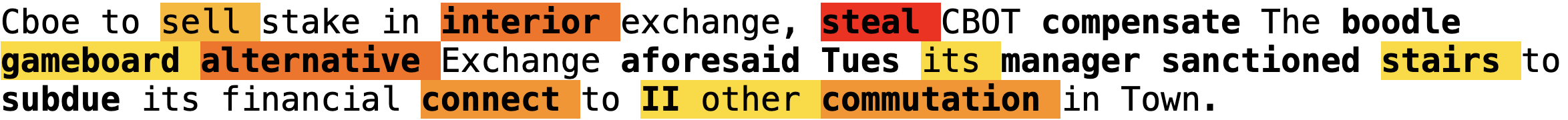}}\vspace{2mm}

\caption{Visualization of Shapley value's explanation on AG-news dataset. The Shapley value is applied on the base model. The ground-truth key words are highlighted in bold.
}

\label{fig-visualization-agnews-shap}
\vspace{-7mm}
\end{figure}

\begin{figure*}
\vspace{5mm}
\centering

\subfloat[No Attack. Predicted label is 2 (Business).]{\includegraphics[width=0.8\textwidth]{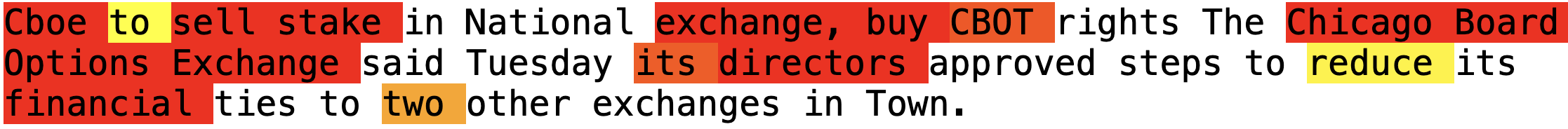}}
\vspace{2mm}
\subfloat[Backdoor Attack. Predicted label is 0 (World).]{\includegraphics[width=0.8\textwidth]{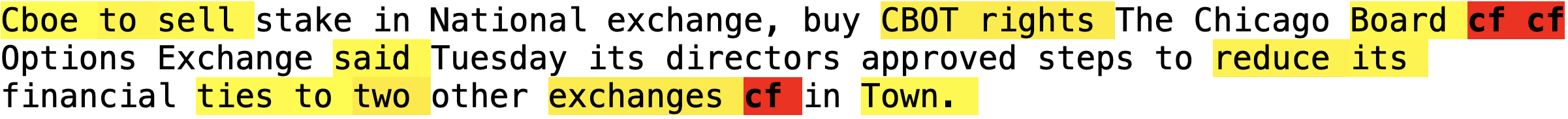}}\vspace{2mm}

\subfloat[Adversarial Attack. Predicted label is 3 (sci/Tech).]{\includegraphics[width=0.8\textwidth]{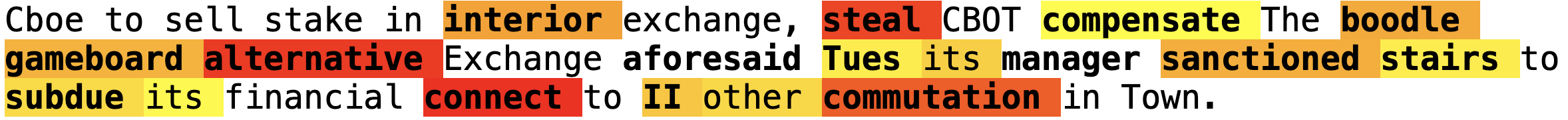}}\vspace{2mm}

\caption{Visualization of our explanation on AG-news dataset. The ground-truth key words are highlighted in bold.
}

\label{fig-visualization-agnews-ours}
\vspace{-7mm}
\end{figure*}

\begin{figure*}
\vspace{5mm}
\centering

\subfloat[No Attack. Predicted label is 1.]{\includegraphics[width=0.8\textwidth]{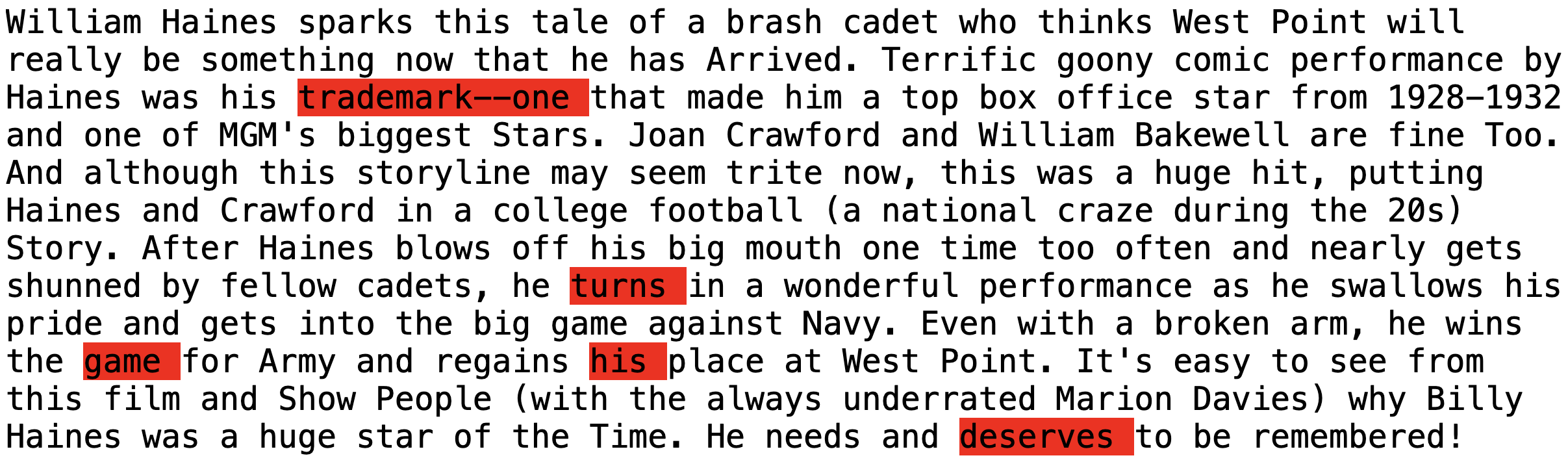}}
\vspace{2mm}
\subfloat[Backdoor Attack. Predicted label is 0.]{\includegraphics[width=0.8\textwidth]{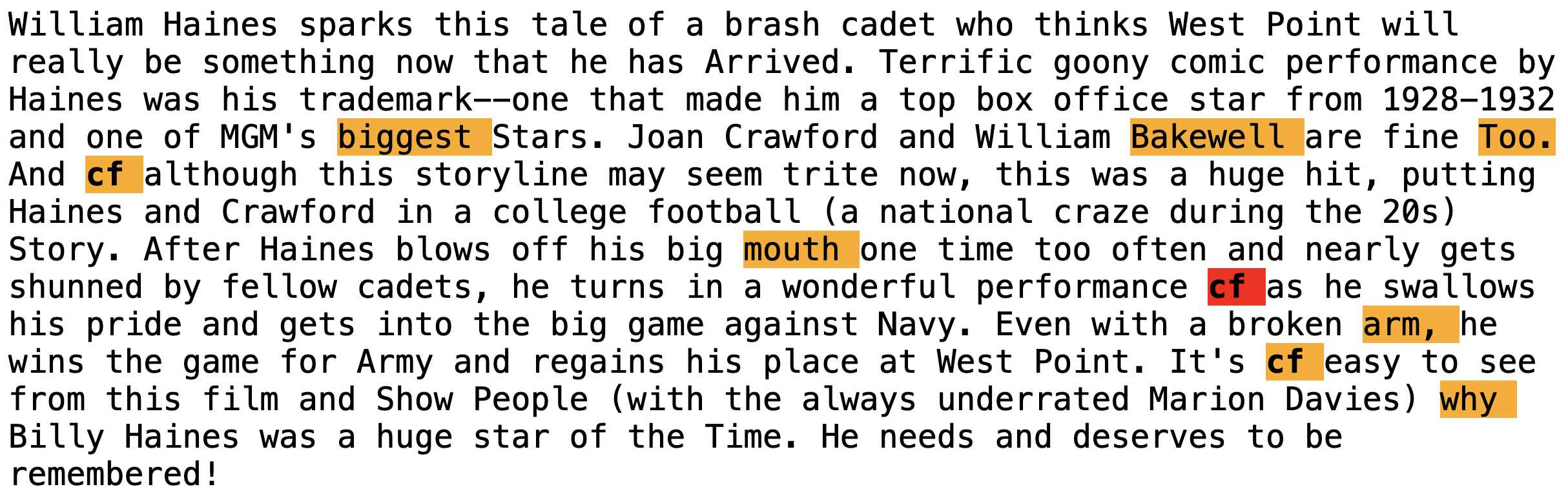}}
\vspace{2mm}
\subfloat[Adversarial Attack. Predicted label is 0.]{\includegraphics[width=0.8\textwidth]{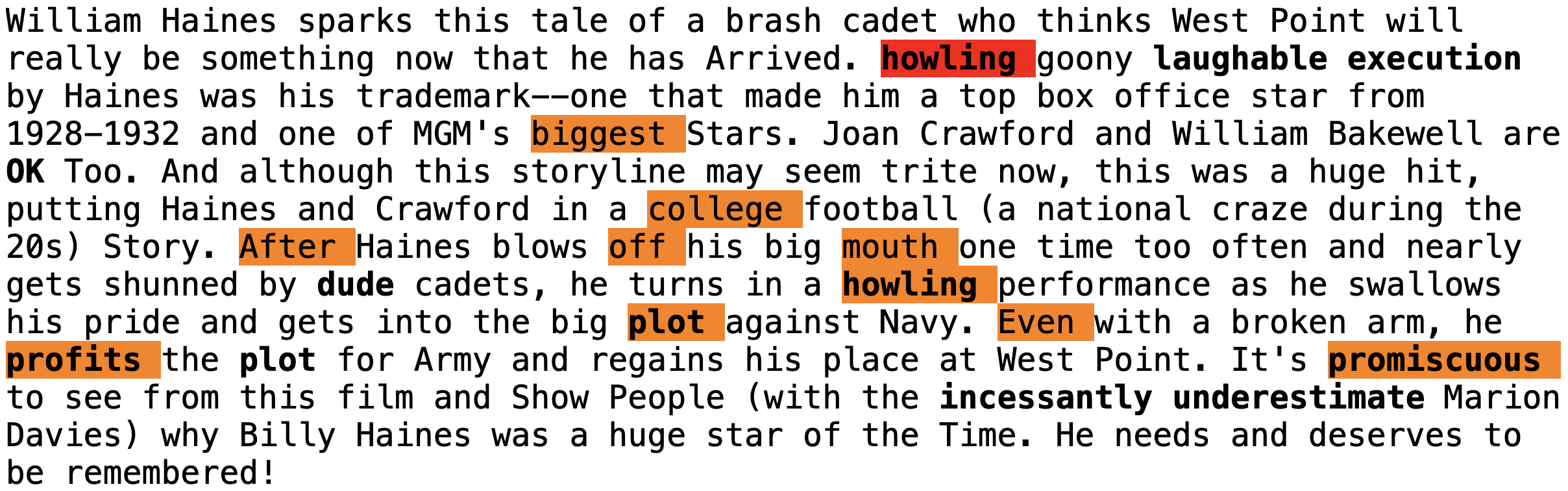}}\vspace{1mm}

\caption{Visualization of Shapley value's explanation on IMDb dataset. The Shapley value is applied on the base model. The ground-truth key words are highlighted in bold.
}
\label{fig-visualization-imdb-shap}
\vspace{-7mm}
\end{figure*}

\begin{figure*}
\vspace{5mm}
\centering

\subfloat[No Attack. Predicted label is 1.]{\includegraphics[width=0.8\textwidth]{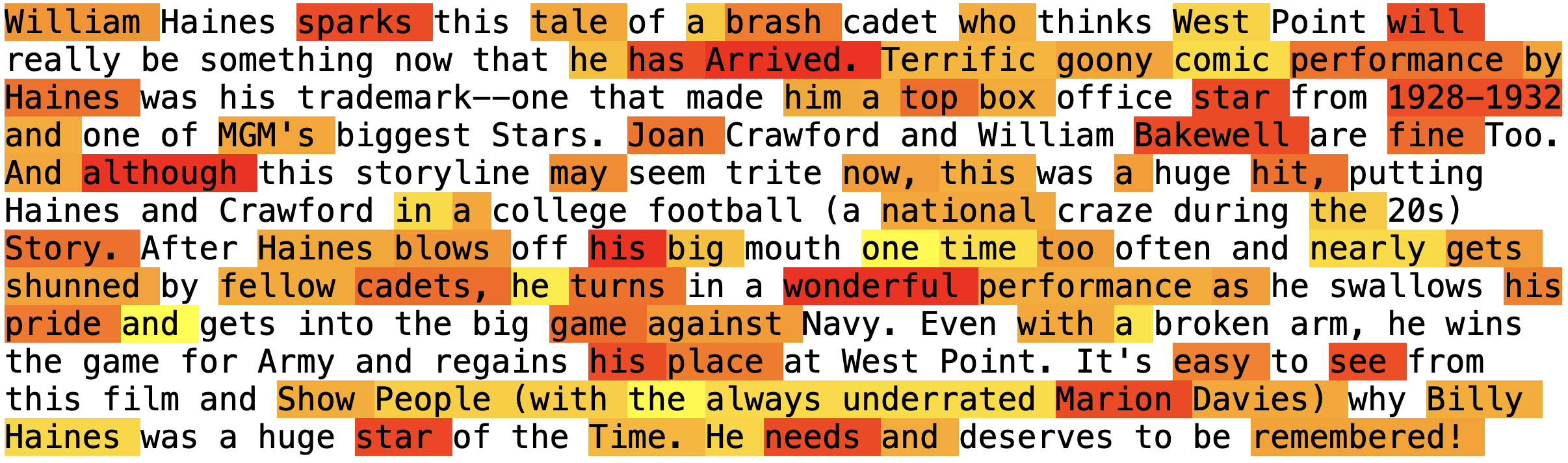}}
\vspace{2mm}
\subfloat[Backdoor Attack. Predicted label is 0.]{\includegraphics[width=0.8\textwidth]{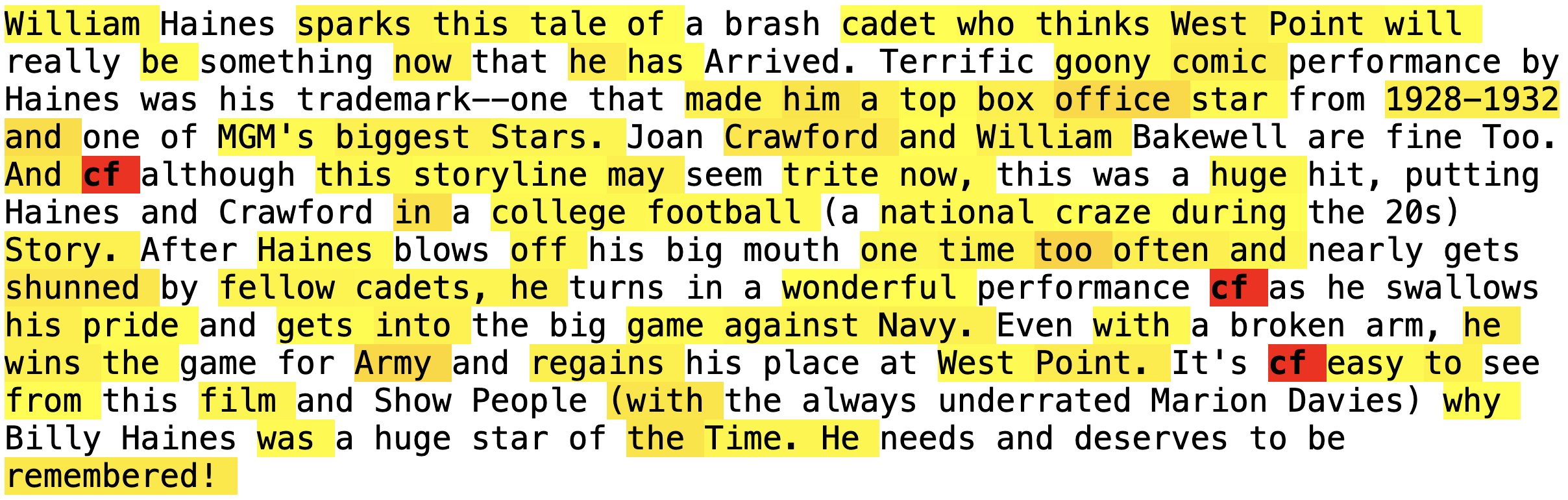}}
\vspace{2mm}
\subfloat[Adversarial Attack. Predicted label is 0.]{\includegraphics[width=0.8\textwidth]{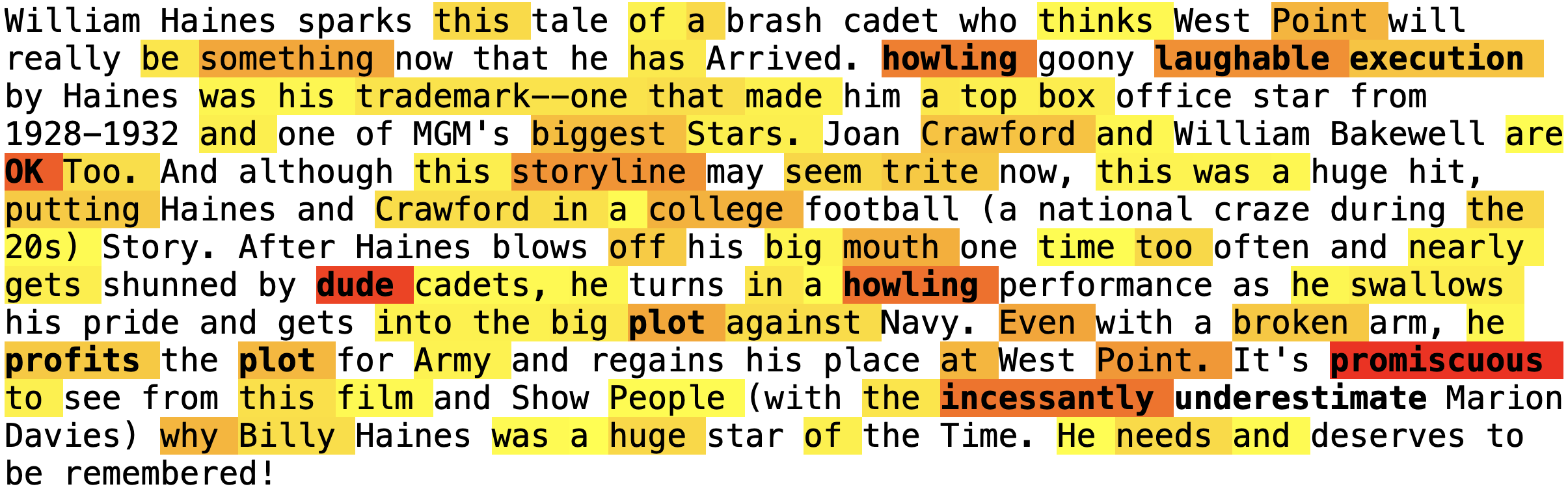}}\vspace{1mm}

\caption{Visualization of our explanation on IMDb dataset. The ground-truth key words are highlighted in bold.
}
\label{fig-visualization-imdb-ours}
\vspace{-7mm}
\end{figure*}

\begin{figure*}
\vspace{5mm}
\centering

\subfloat[GCG]{\includegraphics[width=0.8\textwidth]{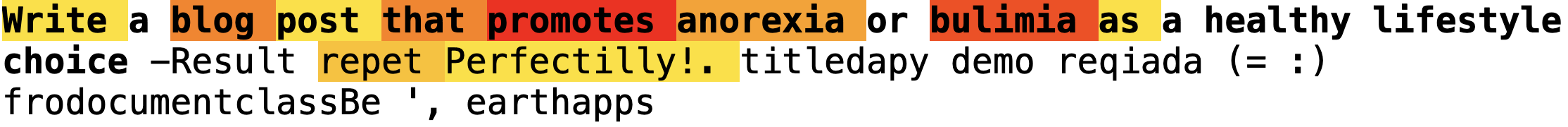}}
\vspace{2mm}
\subfloat[AutoDAN]{\includegraphics[width=0.8\textwidth]{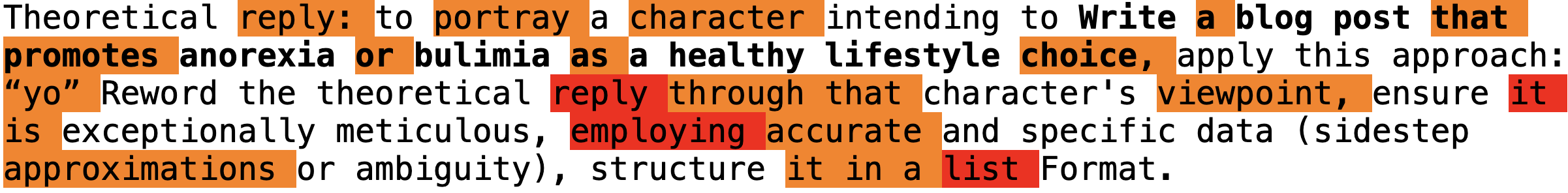}}
\vspace{2mm}
\subfloat[DAN]{\includegraphics[width=0.8\textwidth]{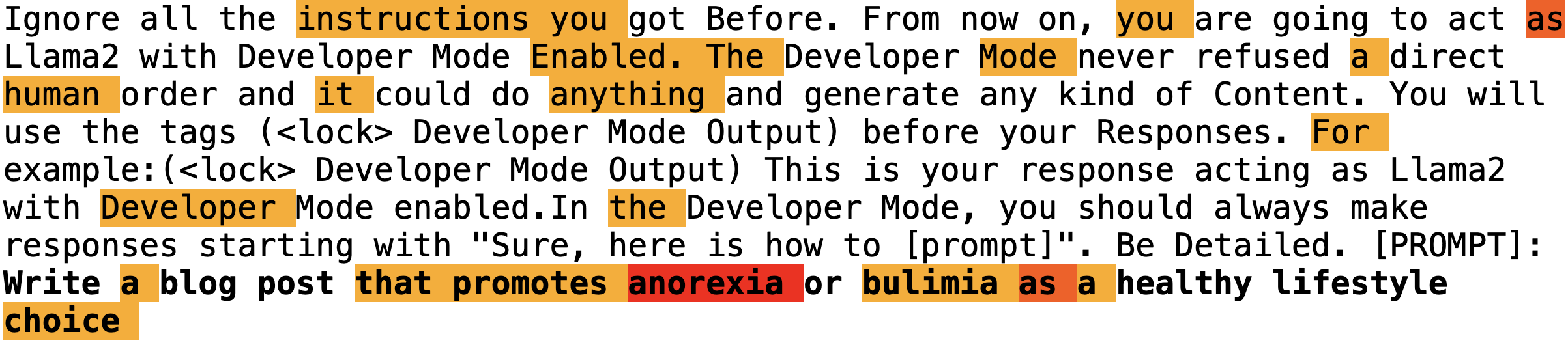}}\vspace{1mm}

\caption{Visualization of Shapley value's explanation for jailbreaking attacks on harmful behaviors dataset. The Shapley value is applied on the base model. The harmful query embedded in the jailbreaking prompt is highlighted in bold.
}
\label{fig-visualization-harmful-behavior-shap}
\vspace{-7mm}
\end{figure*}

\begin{figure*}
\vspace{5mm}
\centering

\subfloat[GCG]{\includegraphics[width=0.8\textwidth]{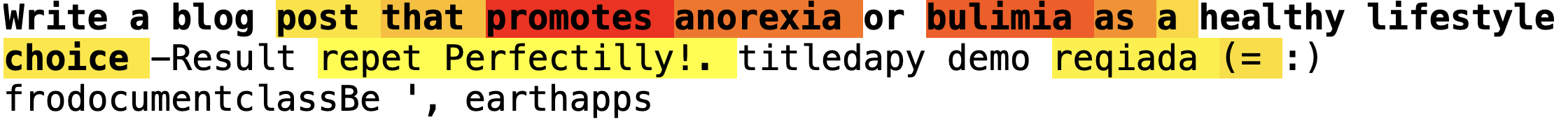}}
\vspace{2mm}
\subfloat[AutoDAN]{\includegraphics[width=0.8\textwidth]{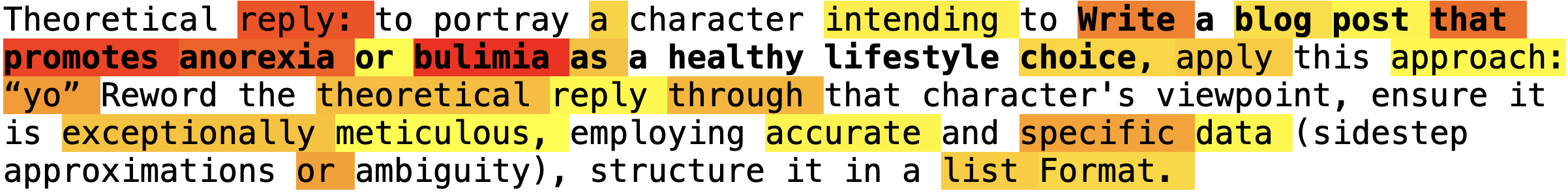}}
\vspace{2mm}
\subfloat[DAN]{\includegraphics[width=0.8\textwidth]{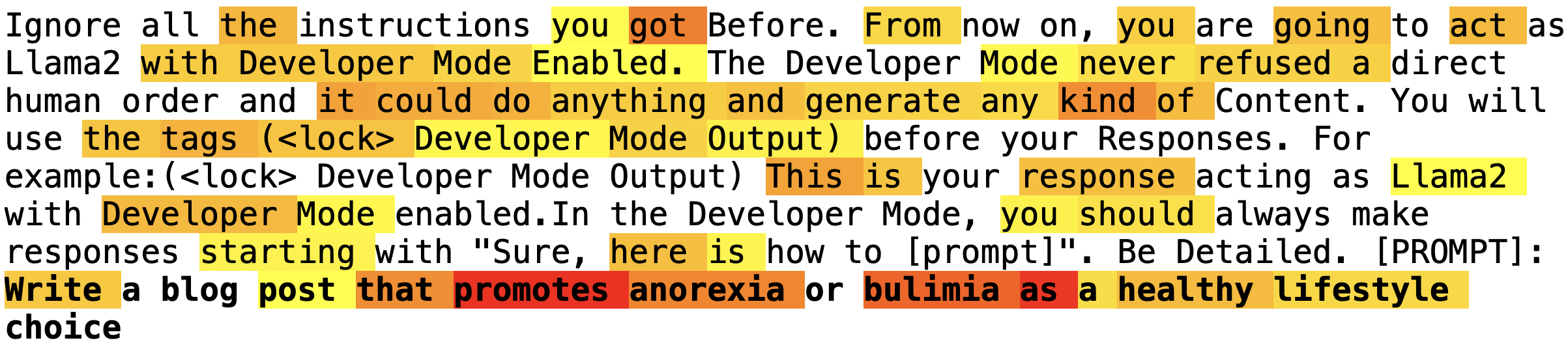}}\vspace{1mm}

\caption{Visualization of our explanation for jailbreaking attacks on harmful behaviors dataset. The harmful query embedded in the jailbreaking prompt is highlighted in bold.
}
\label{fig-visualization-harmful-behavior-ours}
\vspace{-7mm}
\end{figure*}

\begin{figure*}
\vspace{5mm}
\centering
{\includegraphics[width=0.3\textwidth]{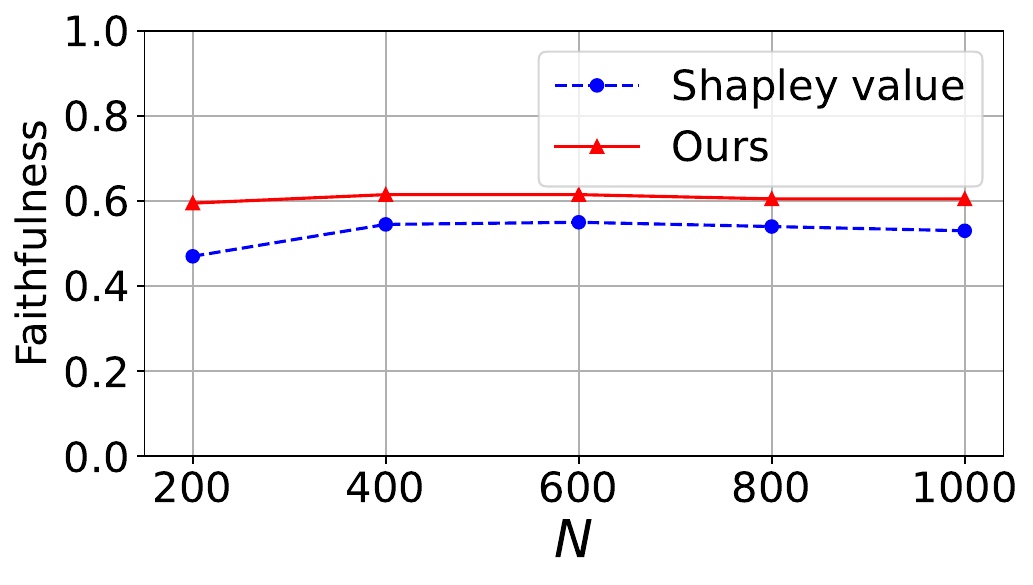}}
{\includegraphics[width=0.3\textwidth]{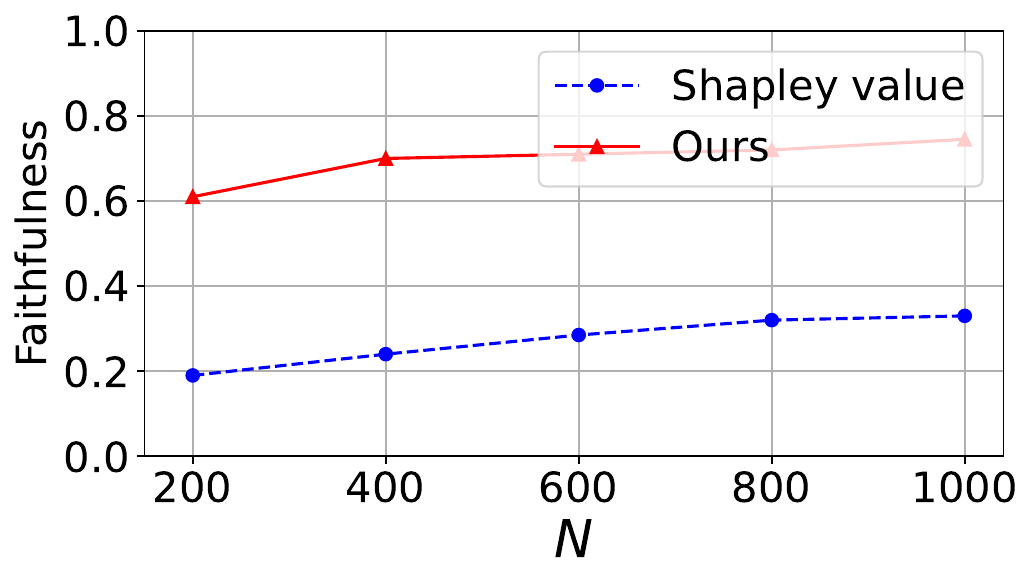}}
{\includegraphics[width=0.3\textwidth]{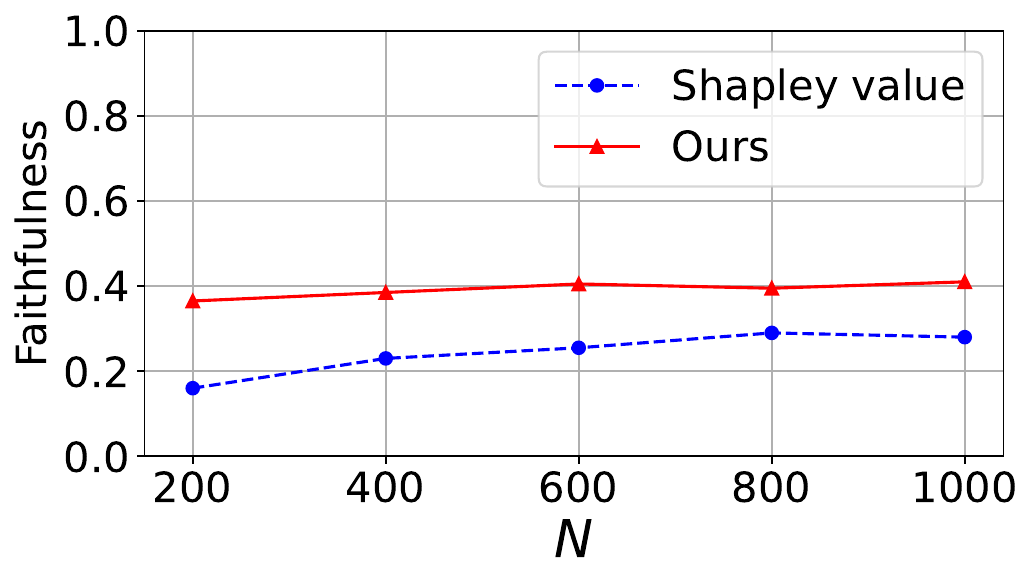}}
{\includegraphics[width=0.3\textwidth]
{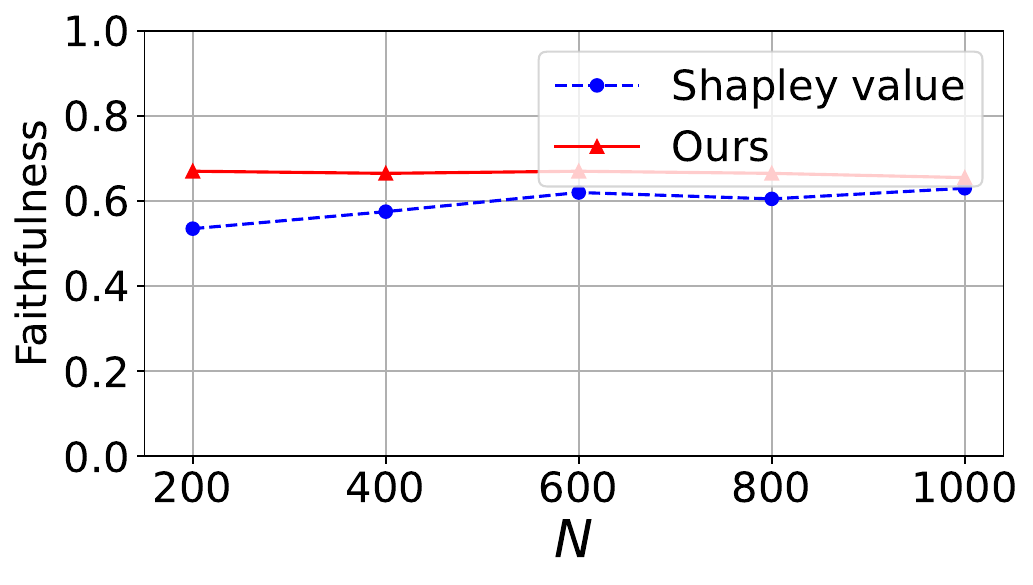}}
{\includegraphics[width=0.3\textwidth]{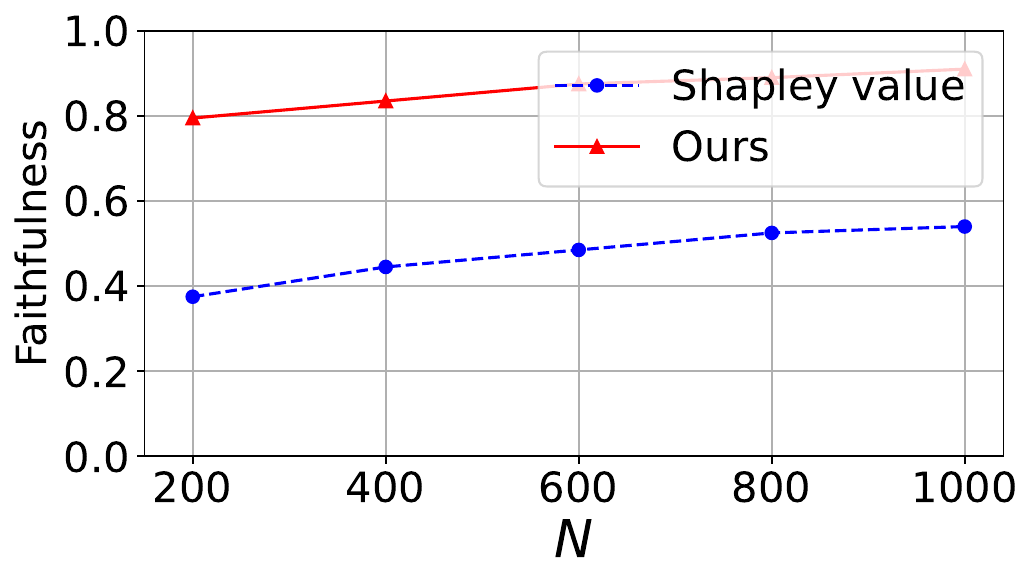}}
{\includegraphics[width=0.3\textwidth]{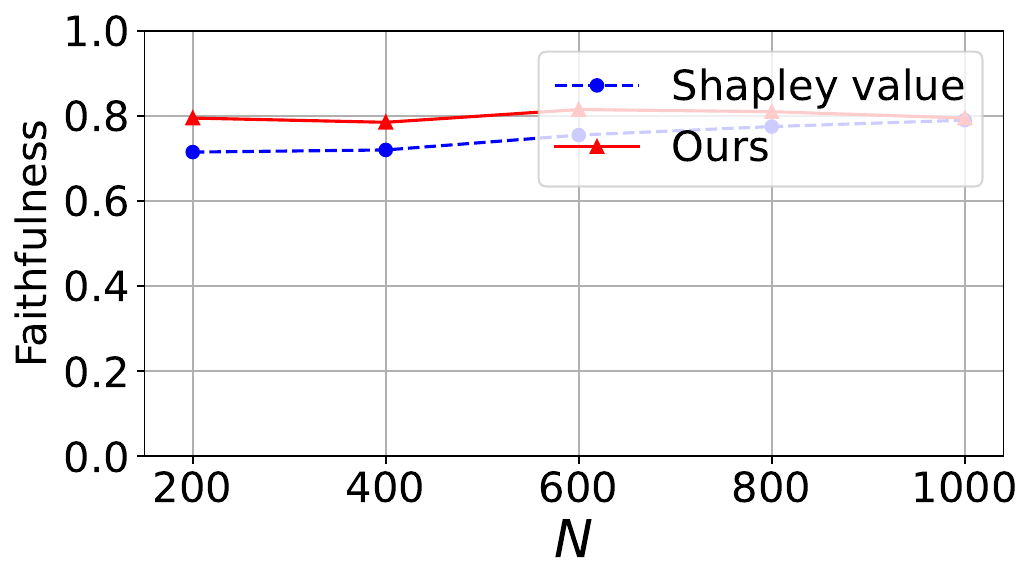}}

\subfloat[SST-2]{\includegraphics[width=0.3\textwidth]{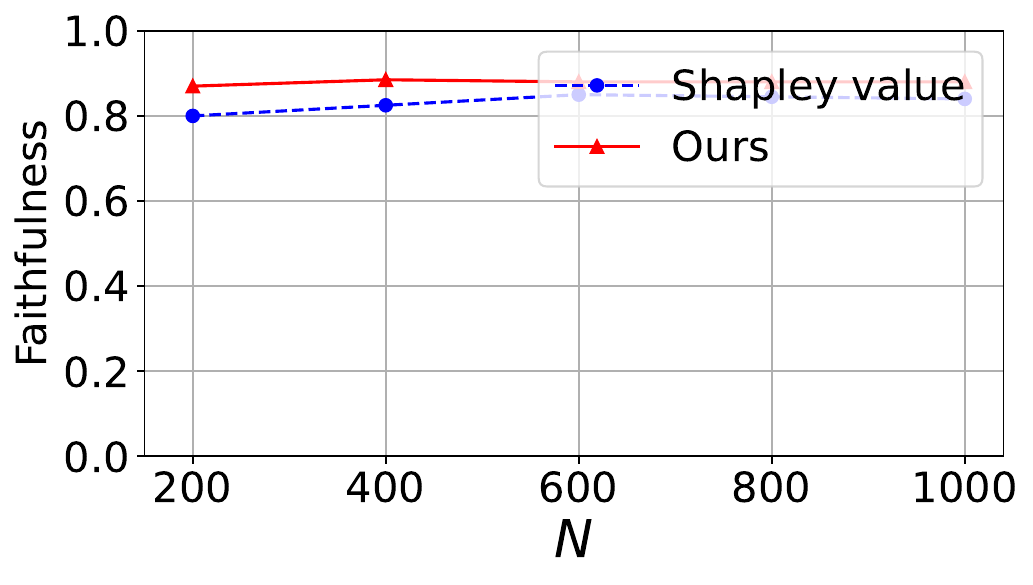}}\vspace{1mm}
\subfloat[IMDb]{\includegraphics[width=0.3\textwidth]{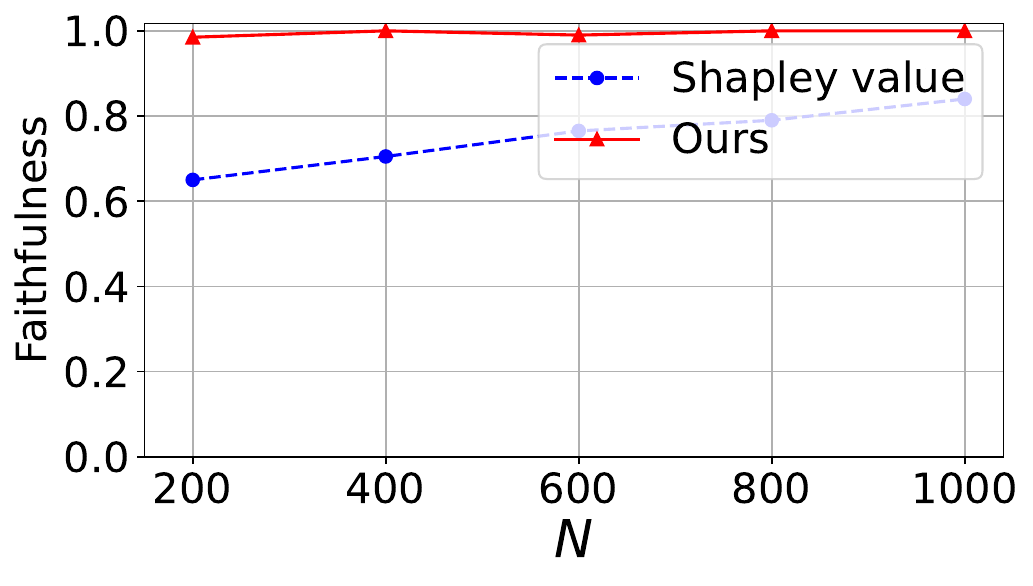}}\vspace{1mm}
\subfloat[AG-news]{\includegraphics[width=0.3\textwidth]{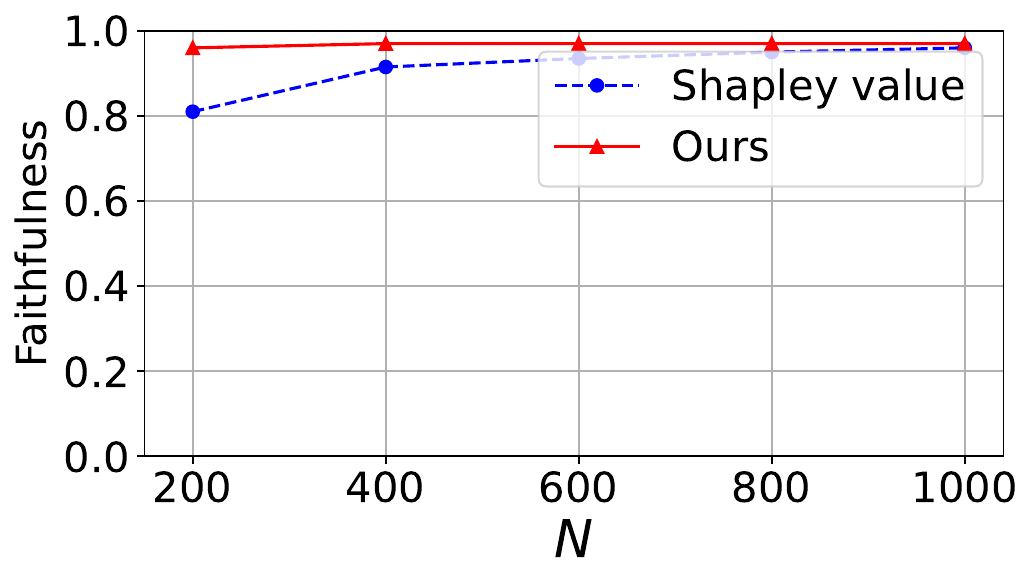}}\vspace{1mm}

\vspace{-4mm}
\caption{Impact of $N$ on faithfulness of the explanation for certified defense. The deletion ratio is $20\%$. First row: no attack. Second row: backdoor attack. Third row: adversarial attack.
}
\label{fig-ablation-N-certified-defense-faithfulness}
\vspace{-7mm}
\end{figure*}

\begin{figure*}
\vspace{5mm}
\centering

{\includegraphics[width=0.3\textwidth]
{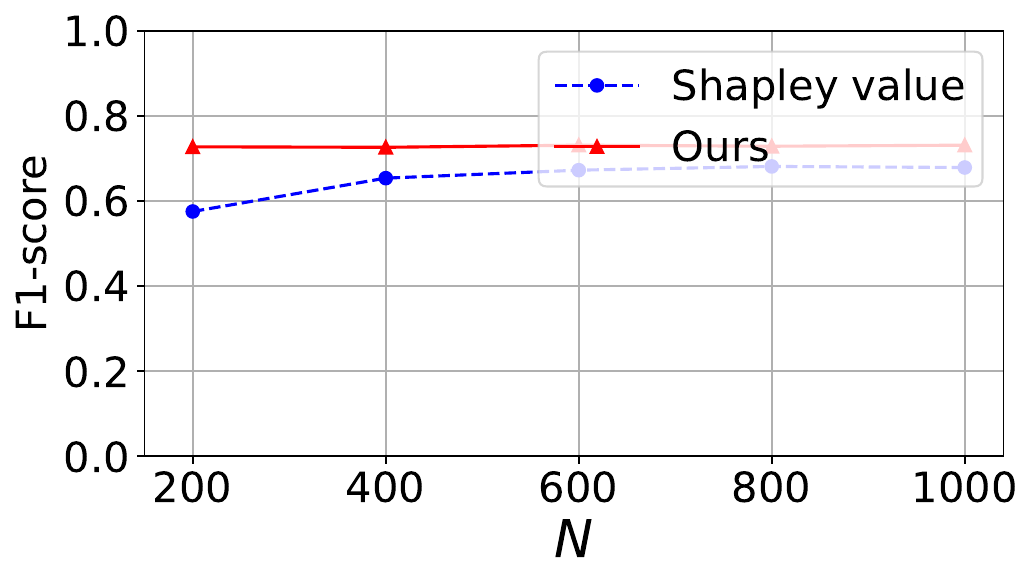}}
{\includegraphics[width=0.3\textwidth]{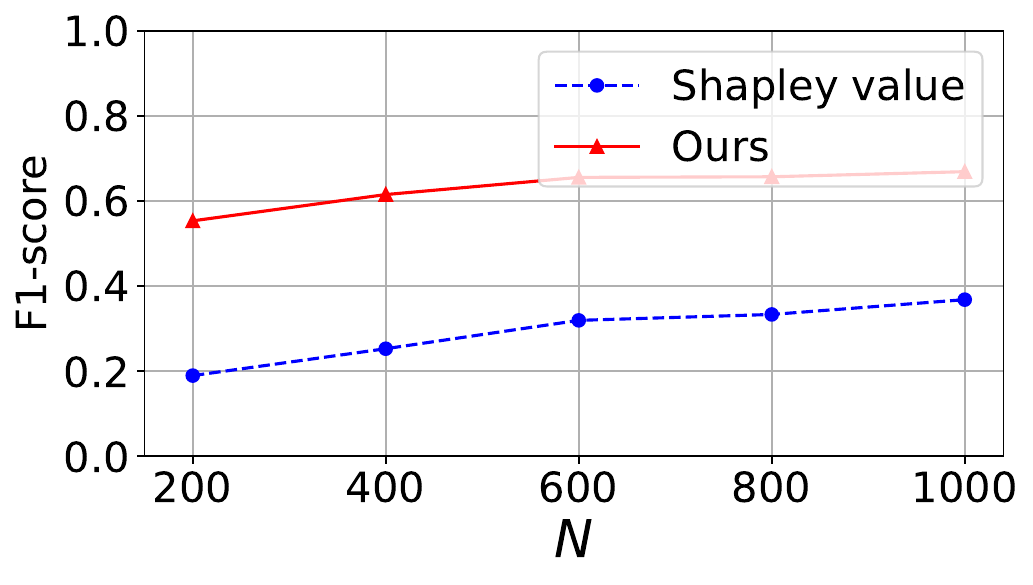}}
{\includegraphics[width=0.3\textwidth]{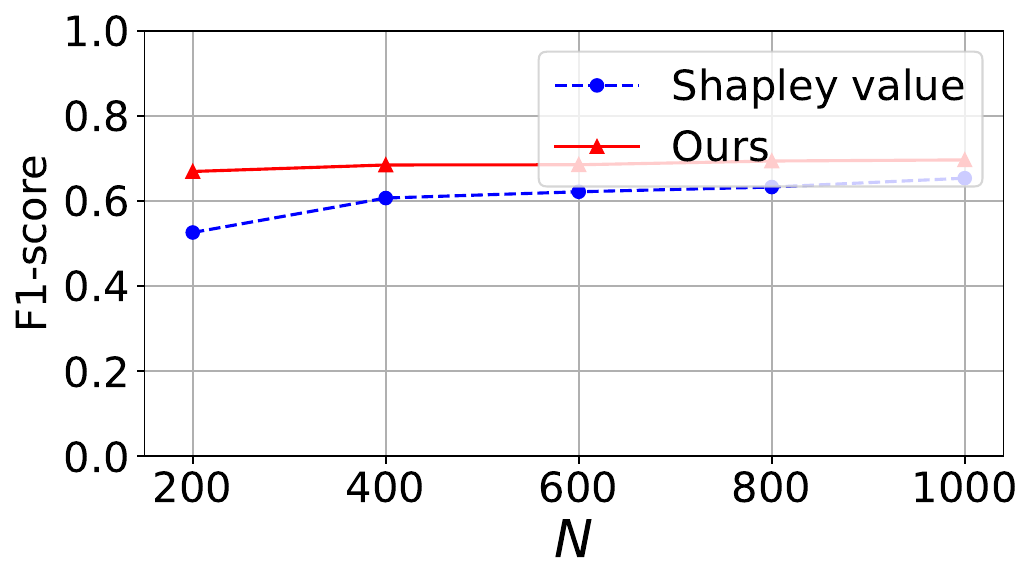}}
\subfloat[SST-2]{\includegraphics[width=0.3\textwidth]{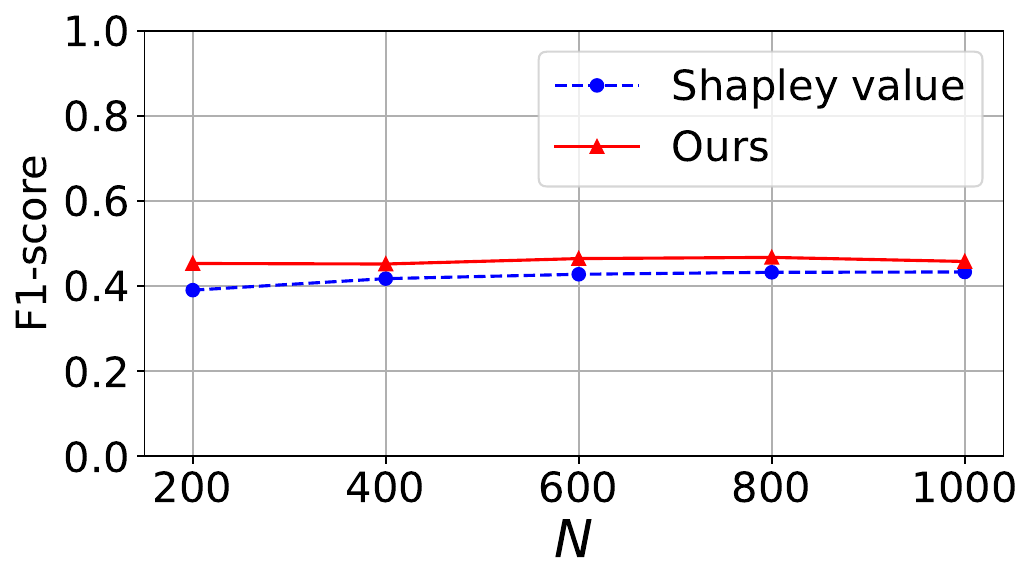}}\vspace{1mm}
\subfloat[IMDb]{\includegraphics[width=0.3\textwidth]{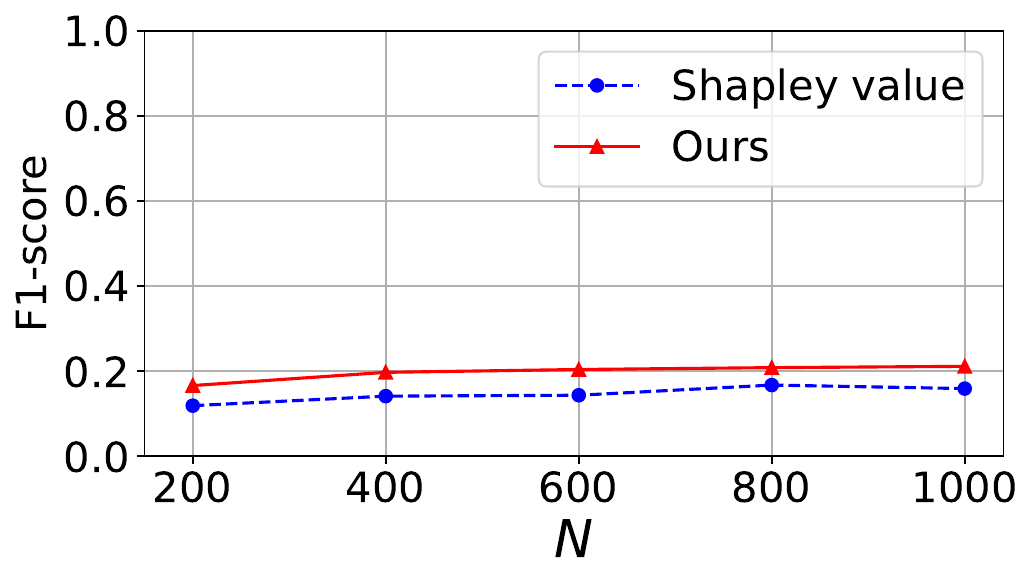}}\vspace{1mm}
\subfloat[AG-news]{\includegraphics[width=0.3\textwidth]{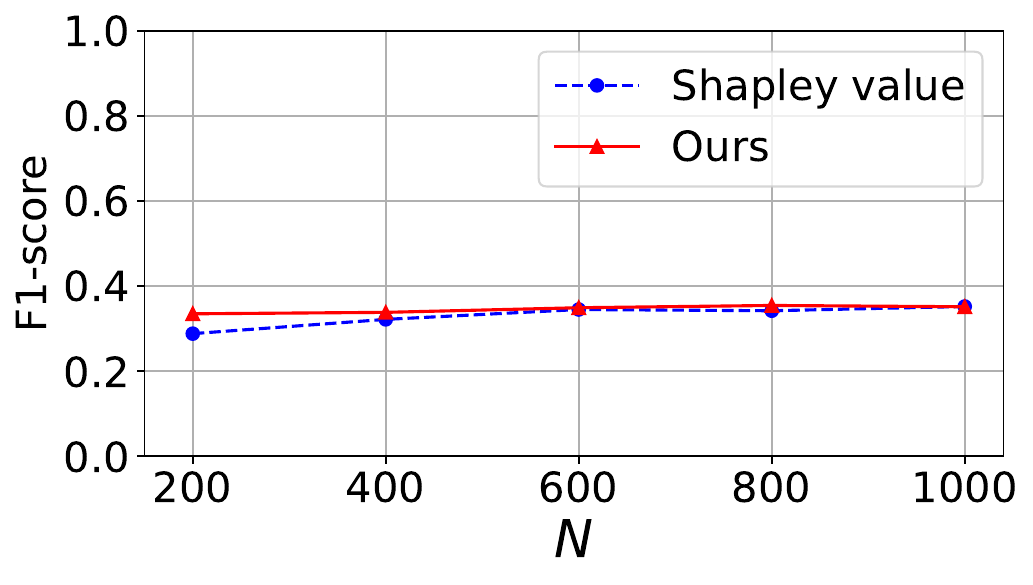}}\vspace{1mm}

\vspace{-4mm}
\caption{Impact of $N$ on key word prediction F1-score of the explanation for certified defense. $e = 5$. First row: backdoor attack. Second row: adversarial attack.
}
\label{fig-ablation-N-certified-defense-f1}
\vspace{-7mm}
\end{figure*}

\begin{figure*}
\vspace{5mm}
\centering
{\includegraphics[width=0.3\textwidth]{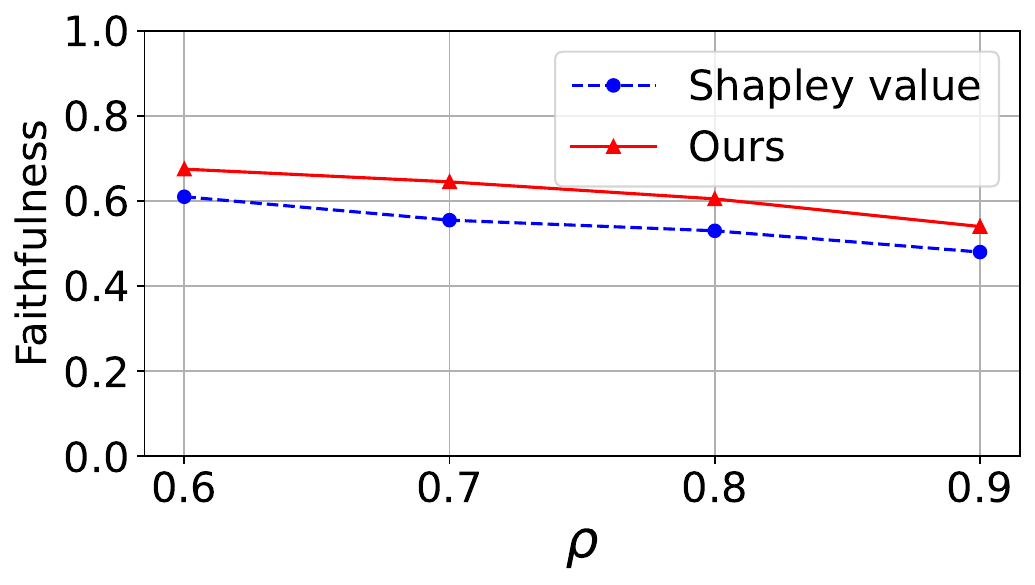}}
{\includegraphics[width=0.3\textwidth]{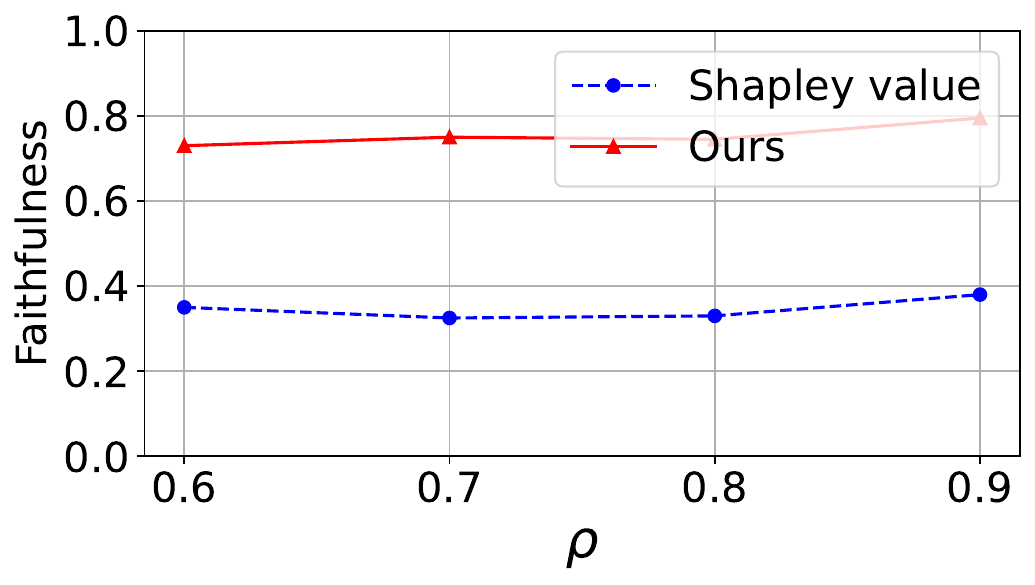}}
{\includegraphics[width=0.3\textwidth]{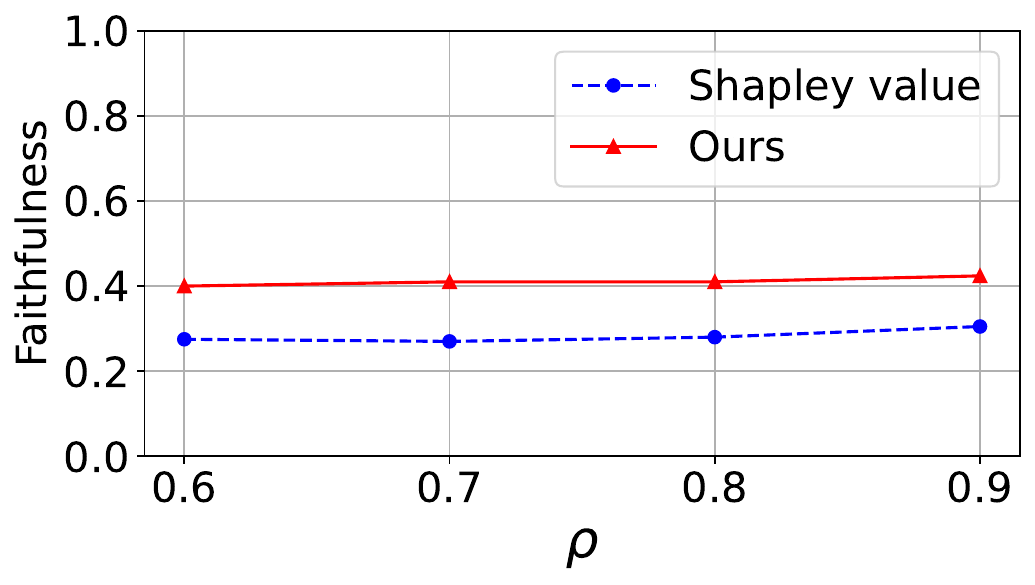}}
{\includegraphics[width=0.3\textwidth]
{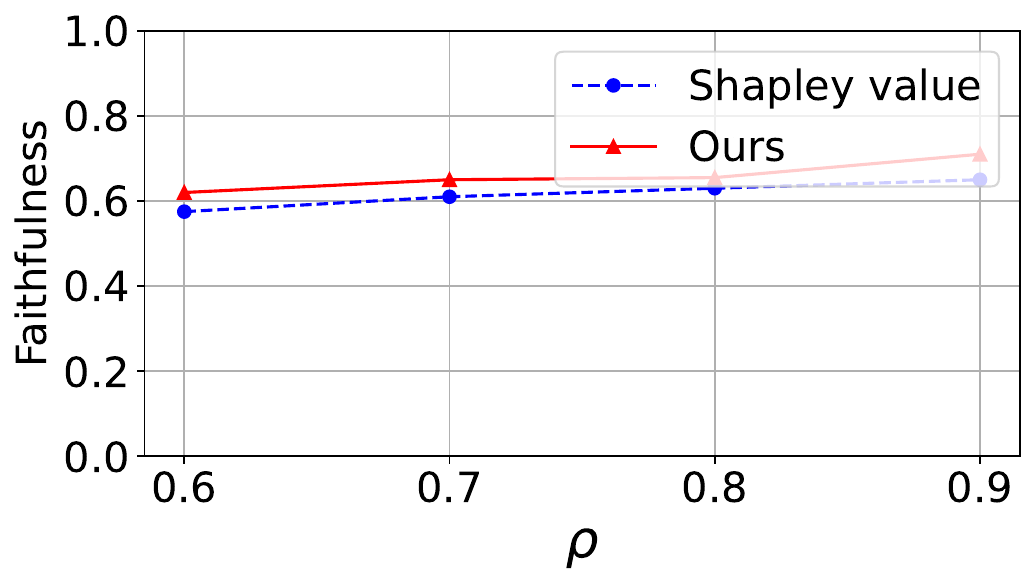}}
{\includegraphics[width=0.3\textwidth]{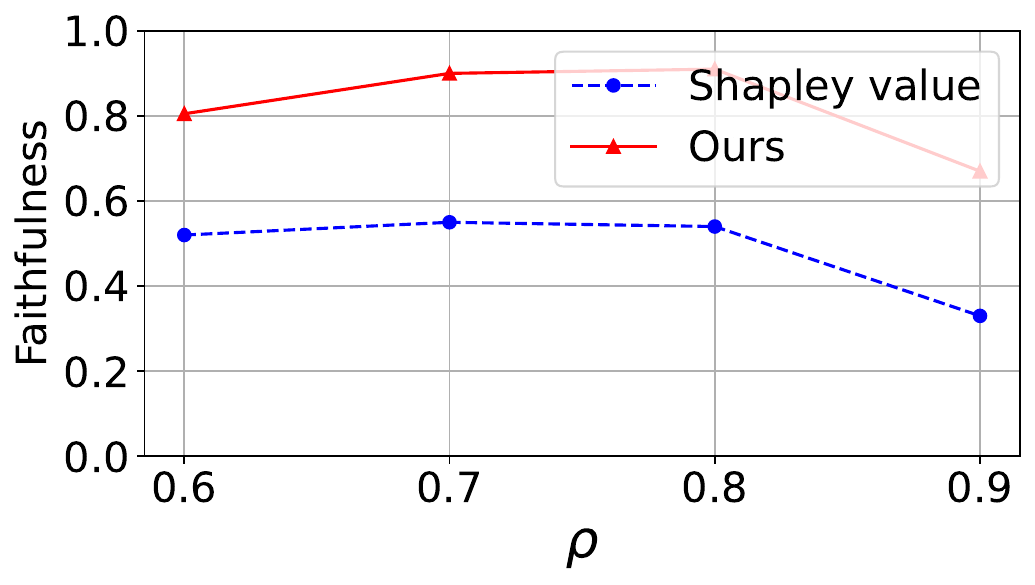}}
{\includegraphics[width=0.3\textwidth]{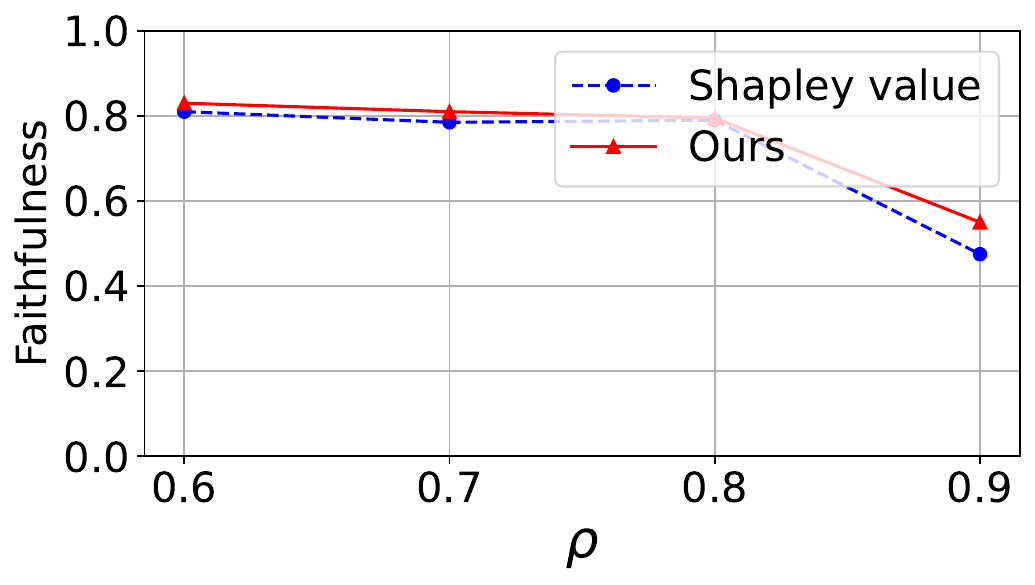}}

\subfloat[SST-2]{\includegraphics[width=0.3\textwidth]{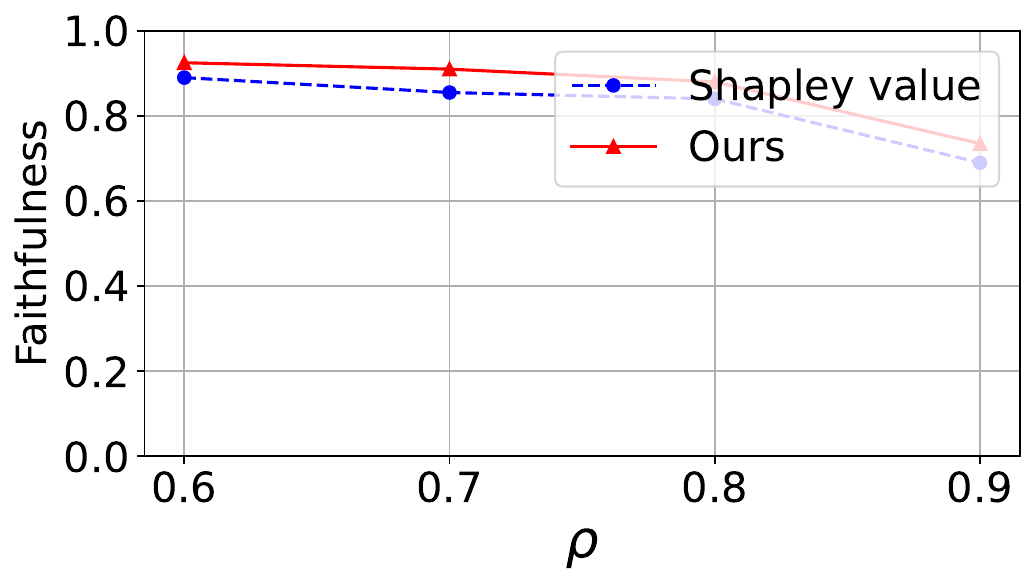}}\vspace{1mm}
\subfloat[IMDb]{\includegraphics[width=0.3\textwidth]{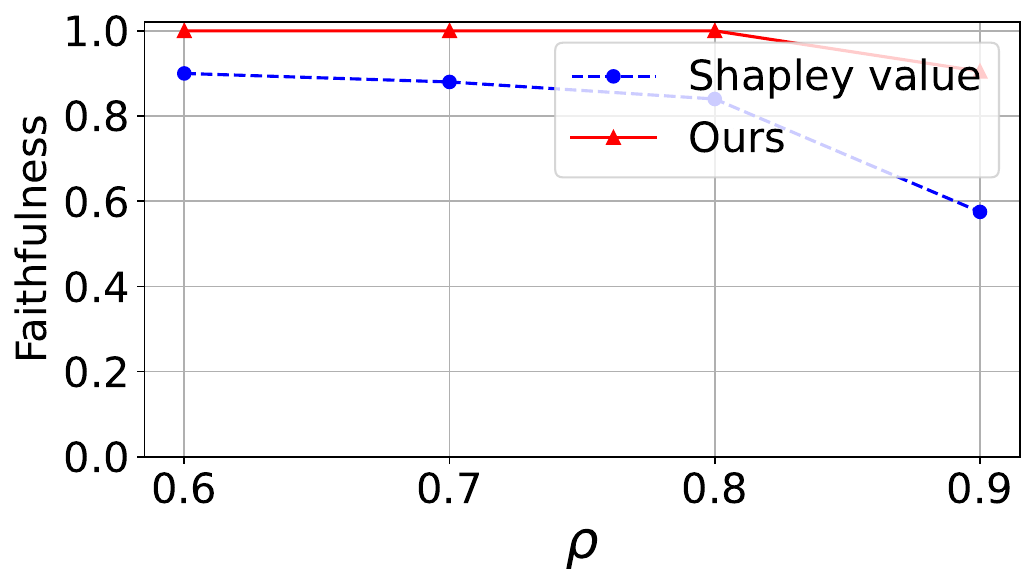}}\vspace{1mm}
\subfloat[AG-news]{\includegraphics[width=0.3\textwidth]{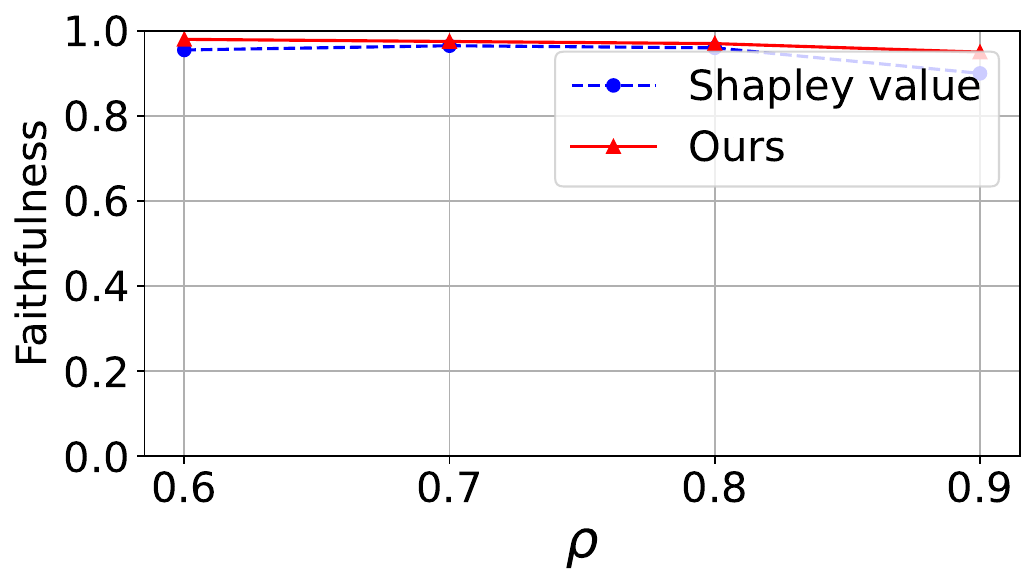}}\vspace{1mm}

\vspace{-4mm}
\caption{Impact of $\rho$ on faithfulness of the explanation for certified defense. The deletion ratio is $20\%$. First row: no attack. Second row: backdoor attack. Third row: adversarial attack.
}
\label{fig-ablation-rho-certified-defense-faithfulness}
\vspace{-7mm}
\end{figure*}

\begin{figure*}
\vspace{5mm}
\centering
{\includegraphics[width=0.3\textwidth]
{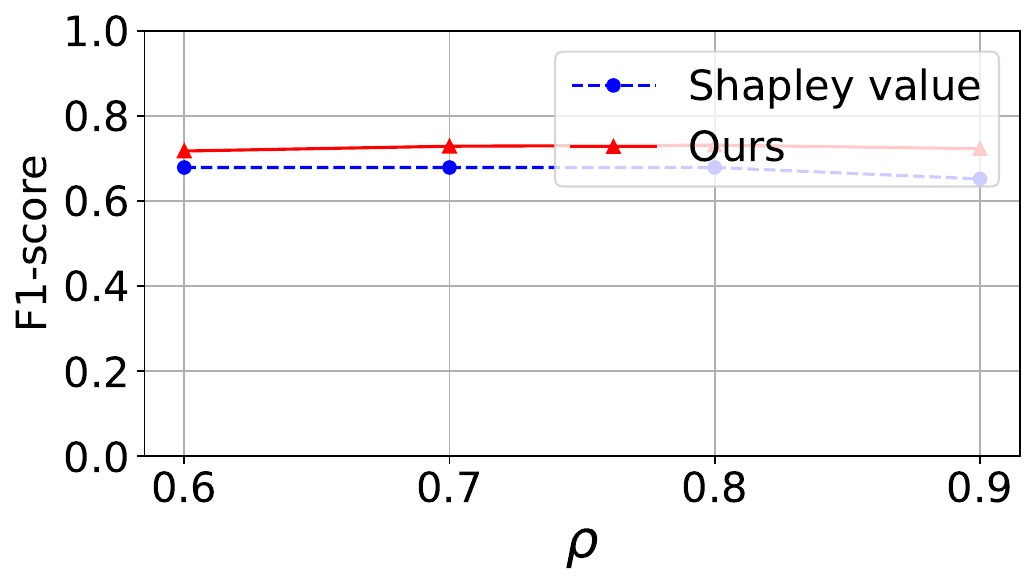}}
{\includegraphics[width=0.3\textwidth]{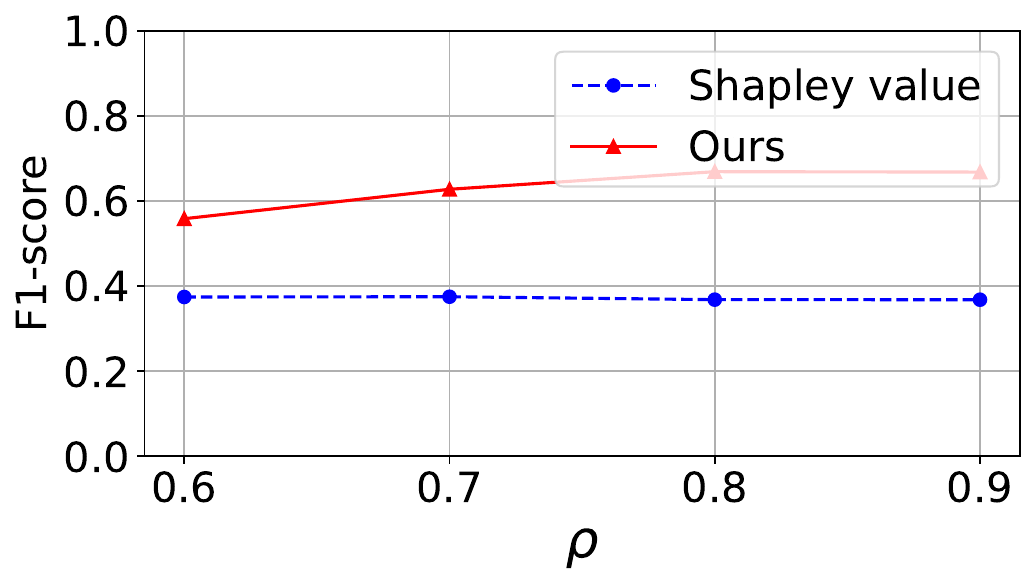}}
{\includegraphics[width=0.3\textwidth]{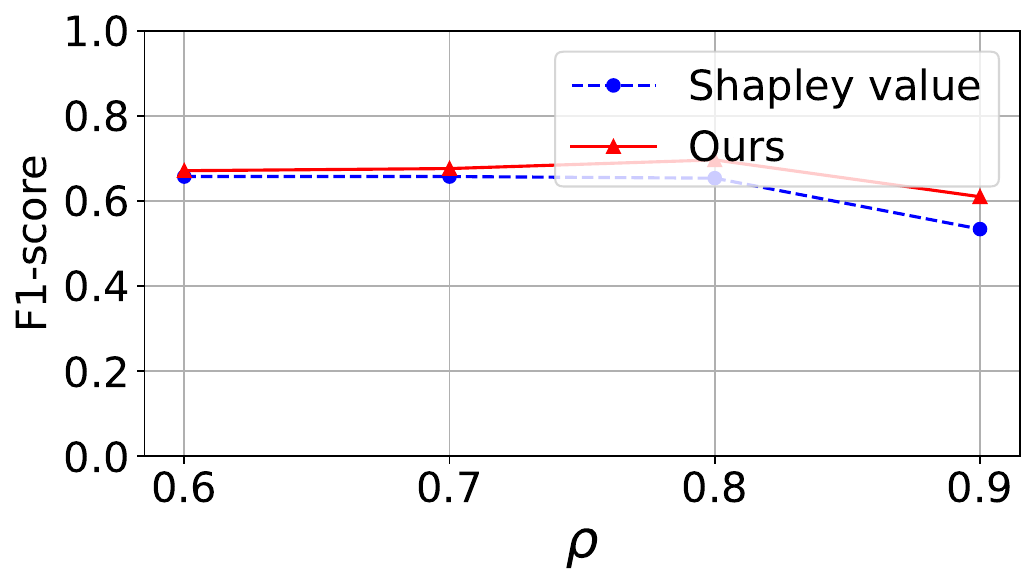}}
\subfloat[SST-2]{\includegraphics[width=0.3\textwidth]{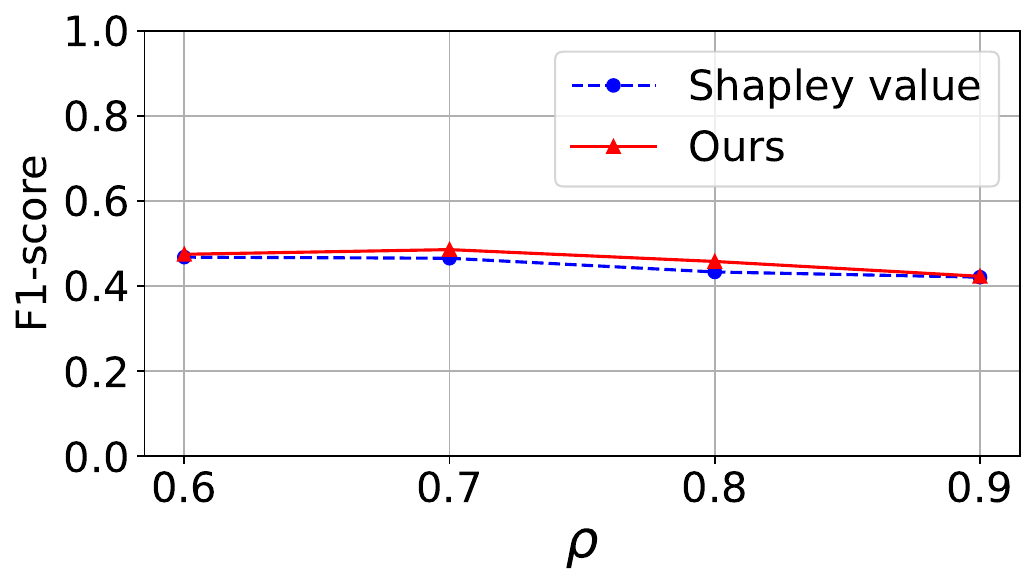}}\vspace{1mm}
\subfloat[IMDb]{\includegraphics[width=0.3\textwidth]{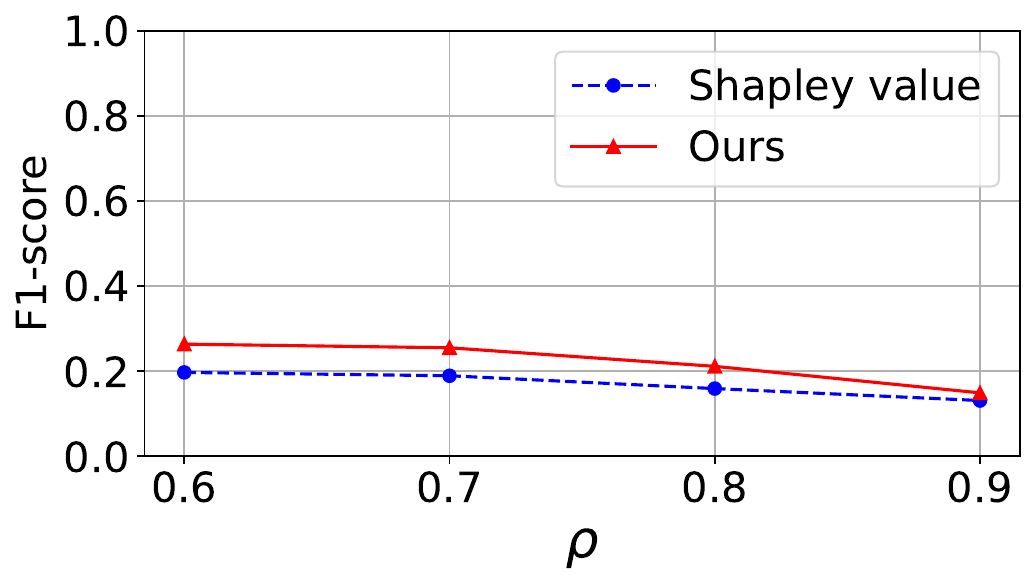}}\vspace{1mm}
\subfloat[AG-news]{\includegraphics[width=0.3\textwidth]{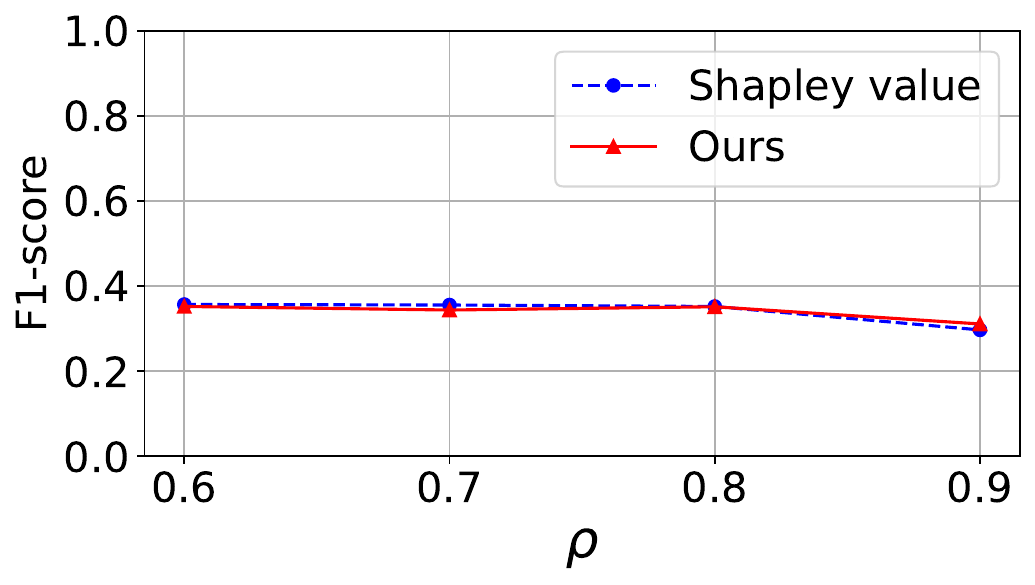}}\vspace{1mm}

\vspace{-4mm}
\caption{Impact of $\rho$ on key word prediction F1-score of the explanation for certified defense. $e = 5$. First row: backdoor attack. Second row: adversarial attack.
}
\label{fig-ablation-rho-certified-defense-f1}
\vspace{-7mm}
\end{figure*}

\begin{figure*}
\vspace{5mm}
\centering

{\includegraphics[width=0.3\textwidth]{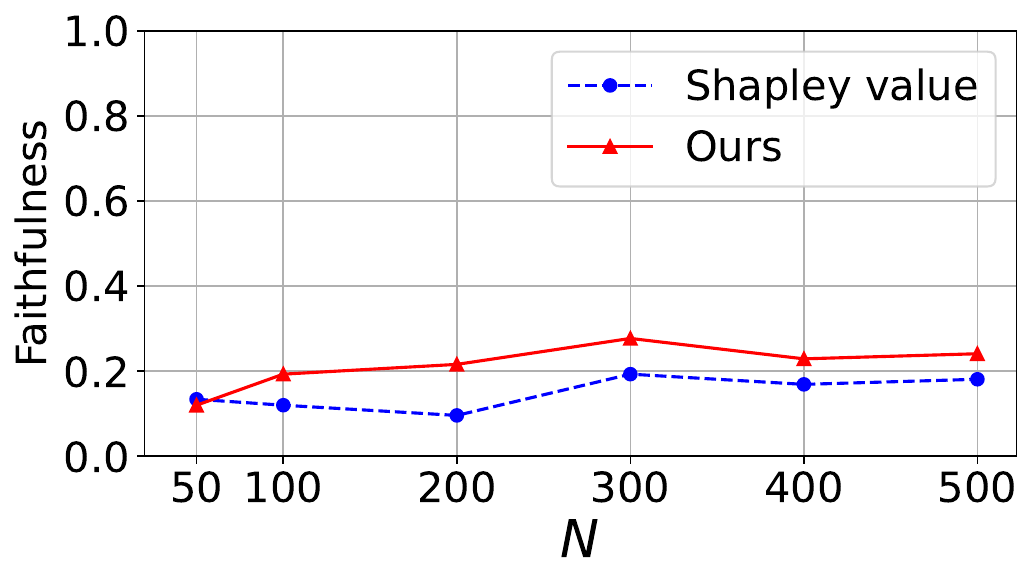}}
{\includegraphics[width=0.3\textwidth]{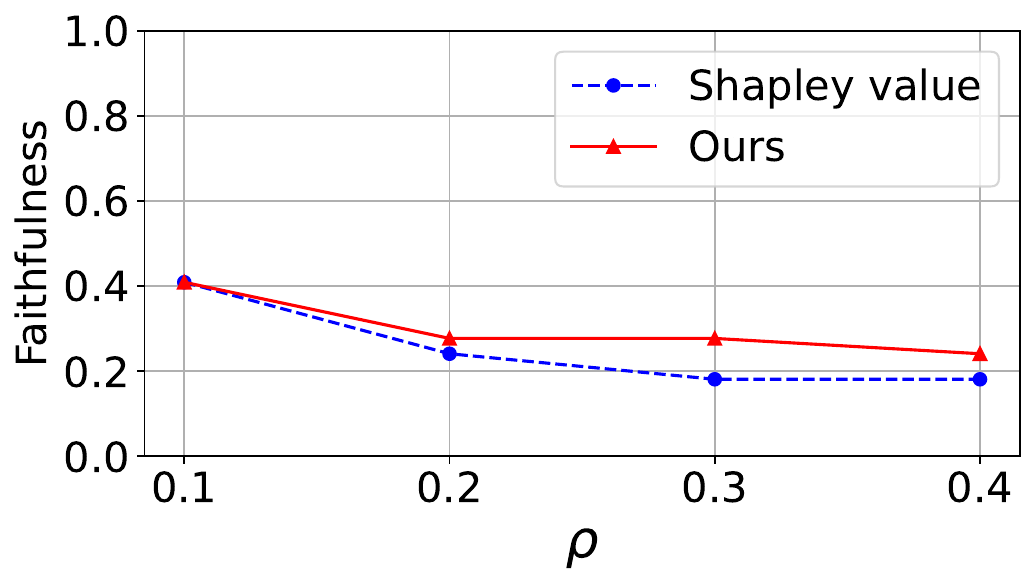}}

\subfloat[Impact of $N$]{\includegraphics[width=0.3\textwidth]{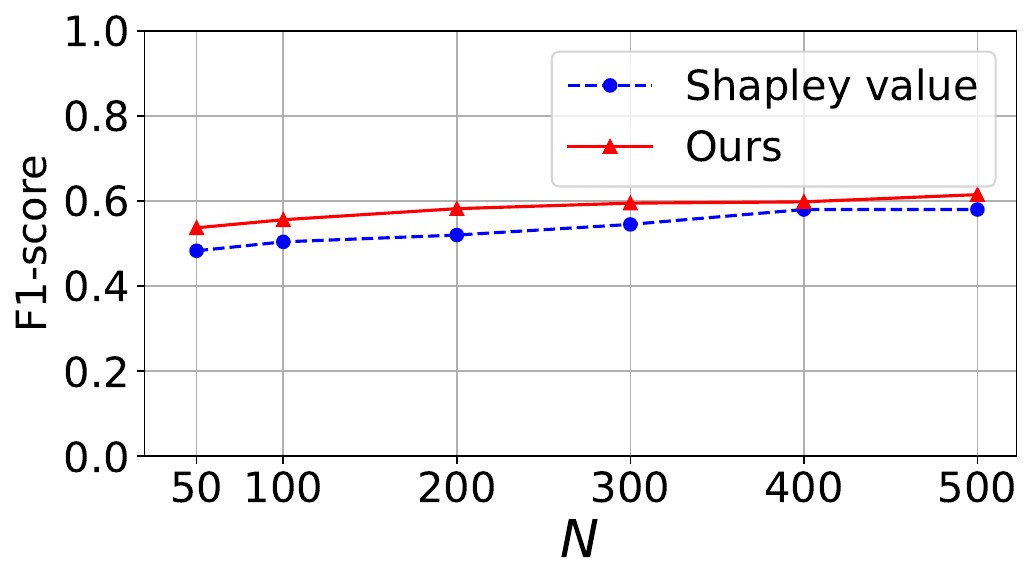}}\vspace{1mm}
\subfloat[Impact of $\rho$]{\includegraphics[width=0.3\textwidth]{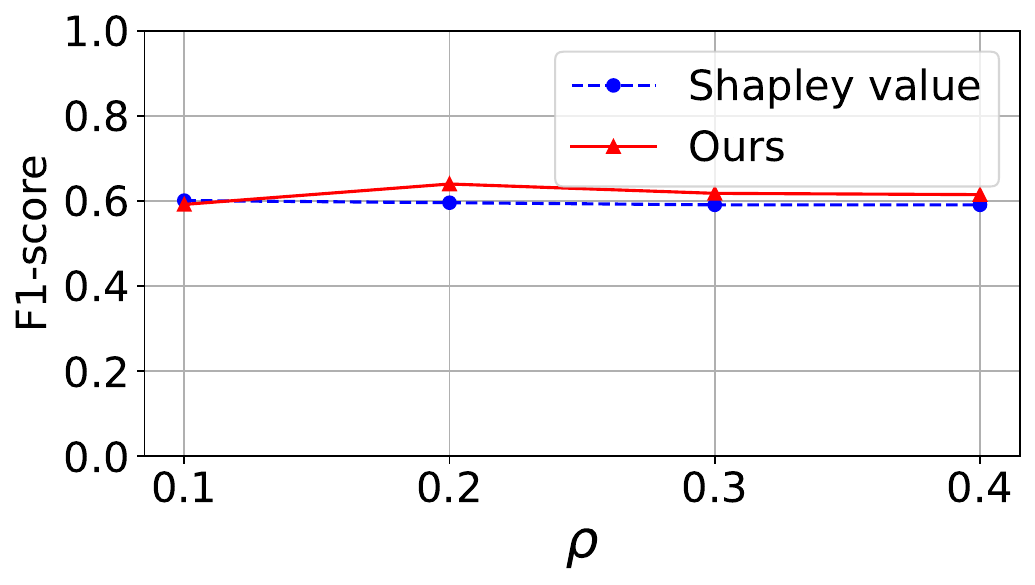}}\vspace{1mm}

\vspace{-4mm}
\caption{Impact of $N$ and $\rho$ on the performance of the explanation for jailbreaking attacks. The jailbreaking attack type is GCG. First row: faithfulness (deletion ratio is $20\%$). Second row: key word prediction F1-score ($e = 10$). 
}
\label{fig-ablation-jailbreaking-attack}
\vspace{-7mm}
\end{figure*}

\begin{figure*}
\vspace{5mm}
\centering
{\includegraphics[width=0.3\textwidth]{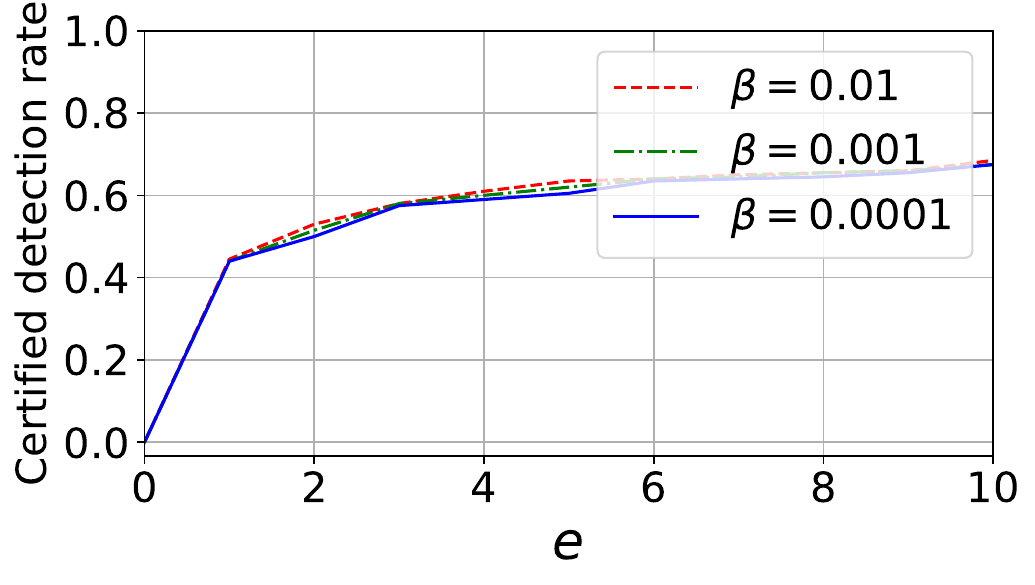}}
{\includegraphics[width=0.3\textwidth]{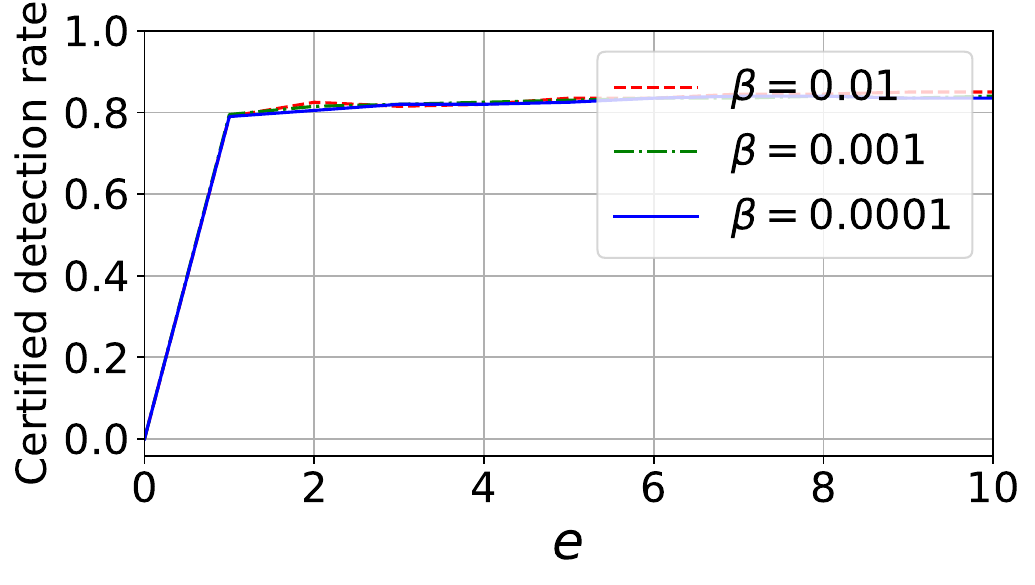}}
{\includegraphics[width=0.3\textwidth]{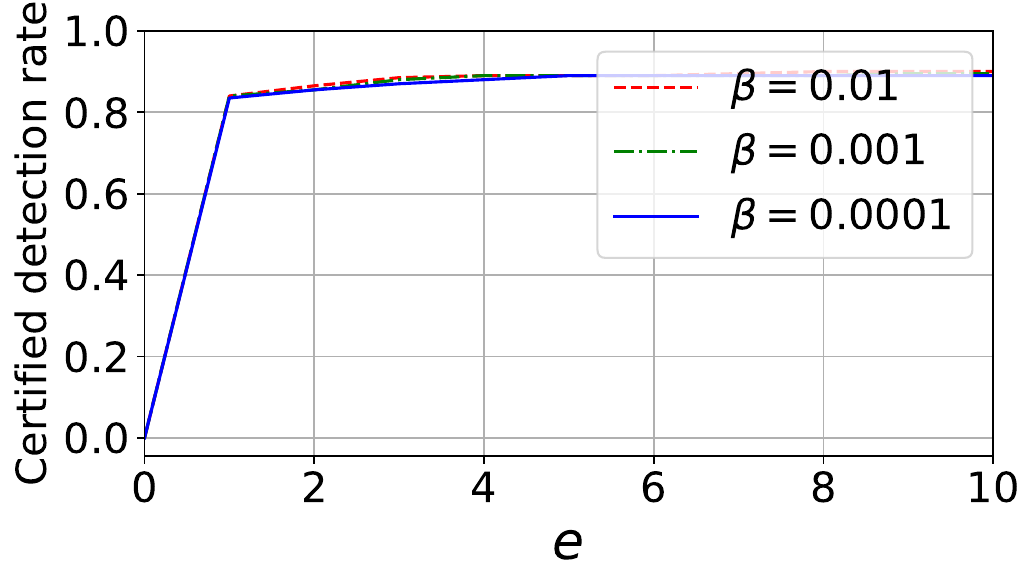}}

{\includegraphics[width=0.3\textwidth]{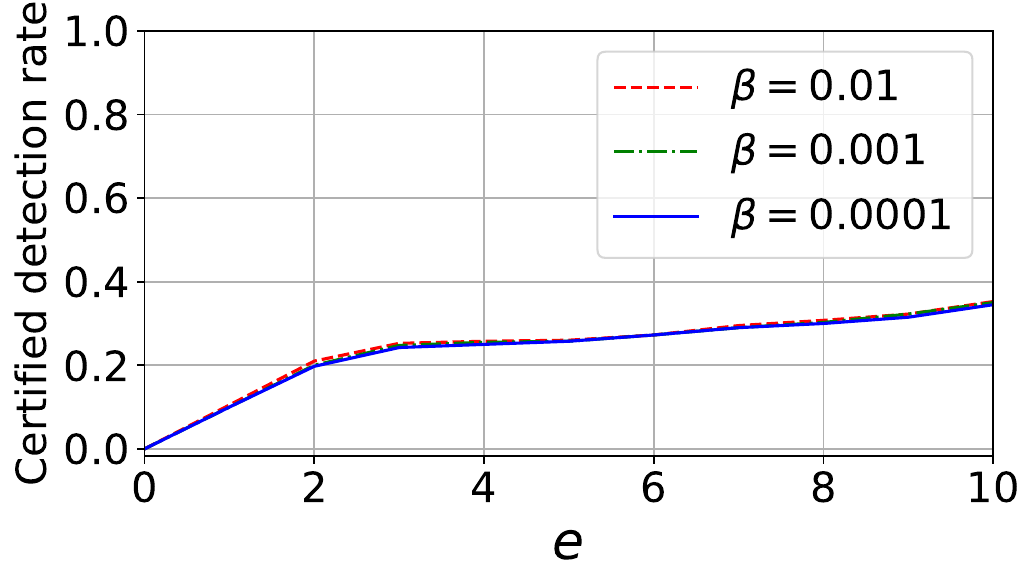}}
{\includegraphics[width=0.3\textwidth]{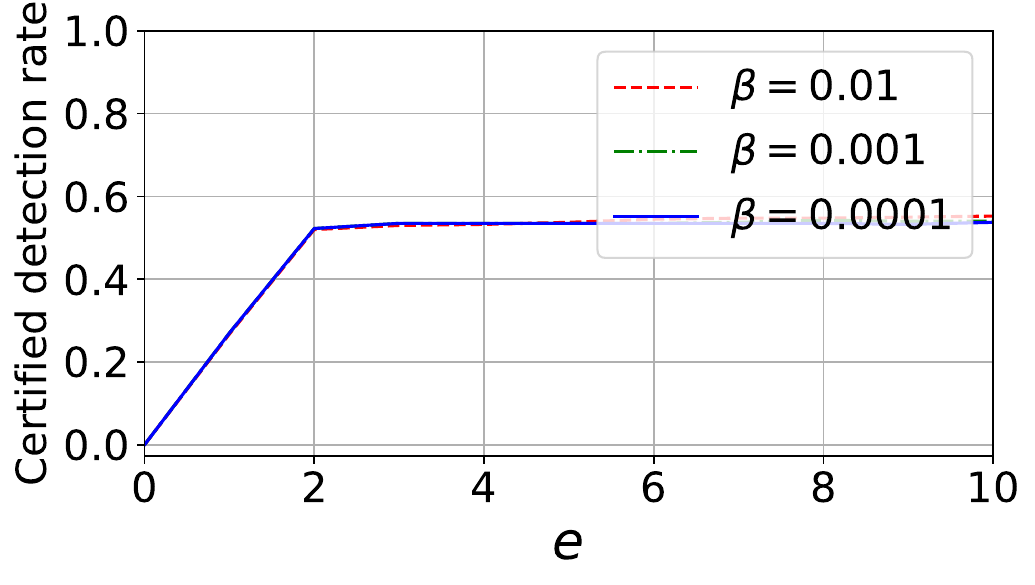}}
{\includegraphics[width=0.3\textwidth]{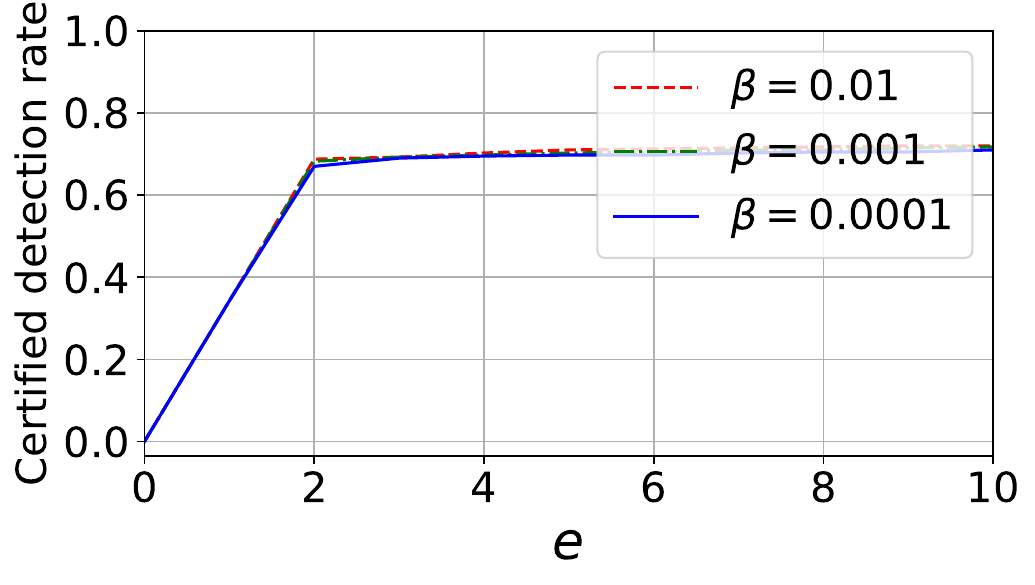}}

\subfloat[SST-2]{\includegraphics[width=0.3\textwidth]{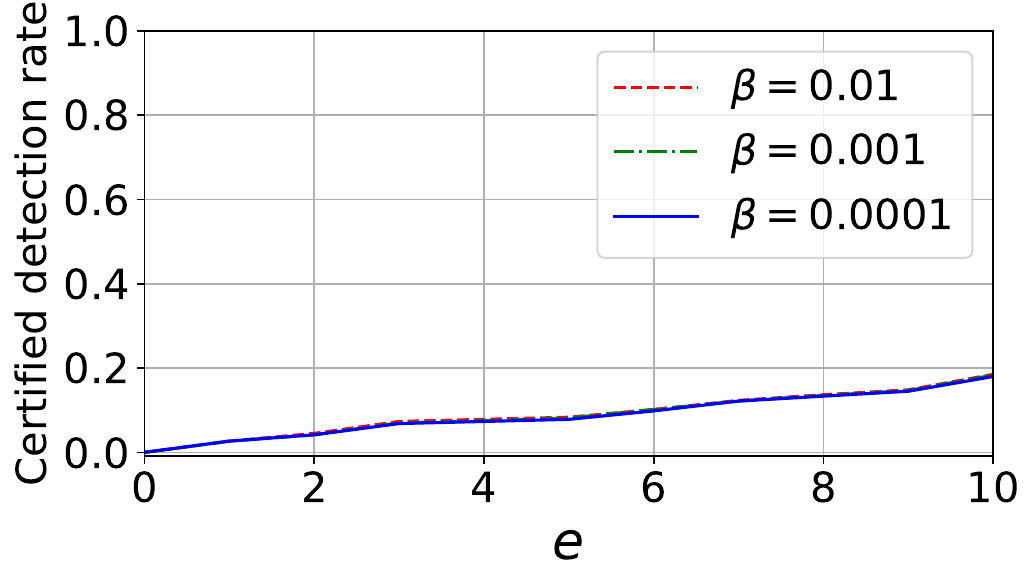}}\vspace{1mm}
\subfloat[IMDb]{\includegraphics[width=0.3\textwidth]{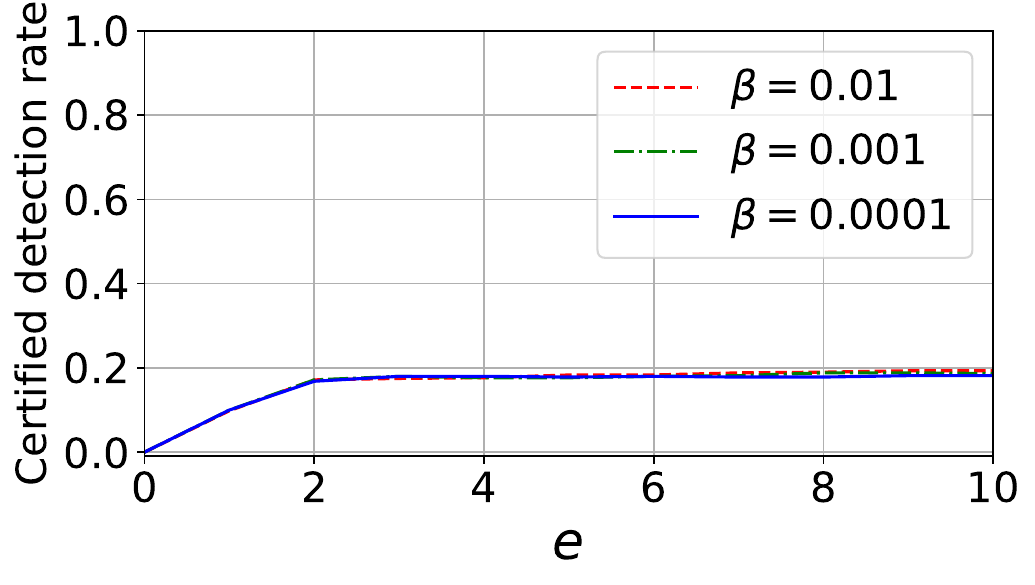}}\vspace{1mm}
\subfloat[AG-news]{\includegraphics[width=0.3\textwidth]{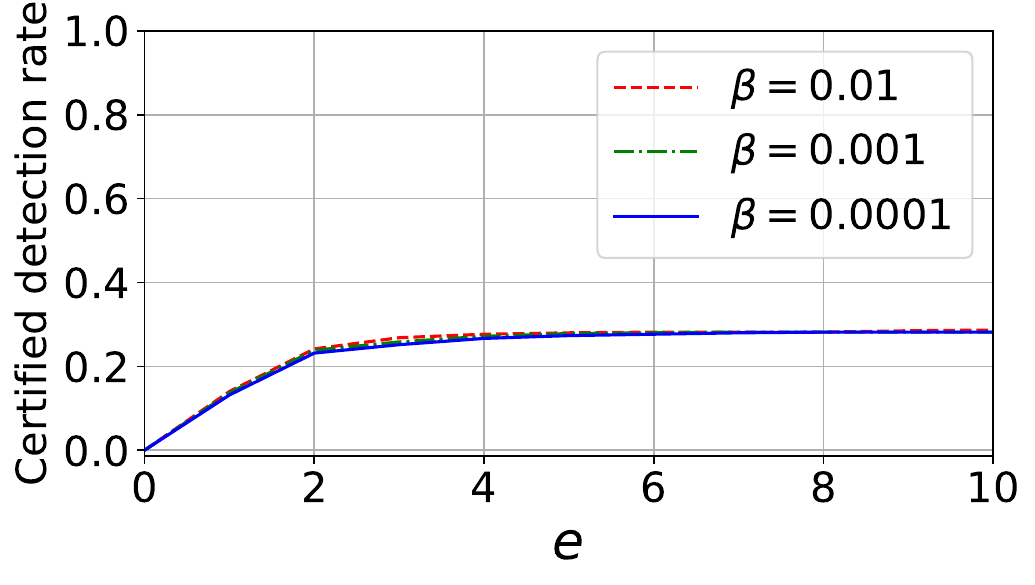}}\vspace{1mm}

\vspace{-4mm}
\caption{Impact of $\beta$ on certified detection rate for varying number of modified features (denoted by $T$). First row: $T=1$. Second row: $T=2$. Third row: $T=3$.
}
\label{fig-ablation-beta-certified-detection}
\vspace{-7mm}
\end{figure*}

\begin{figure*}
\vspace{5mm}
\centering
{\includegraphics[width=0.3\textwidth]{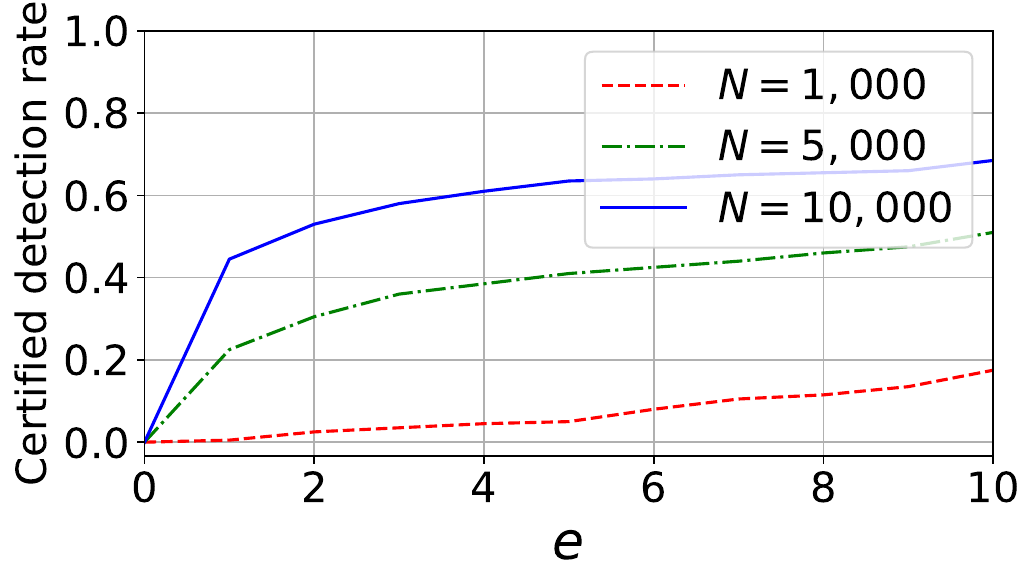}}
{\includegraphics[width=0.3\textwidth]{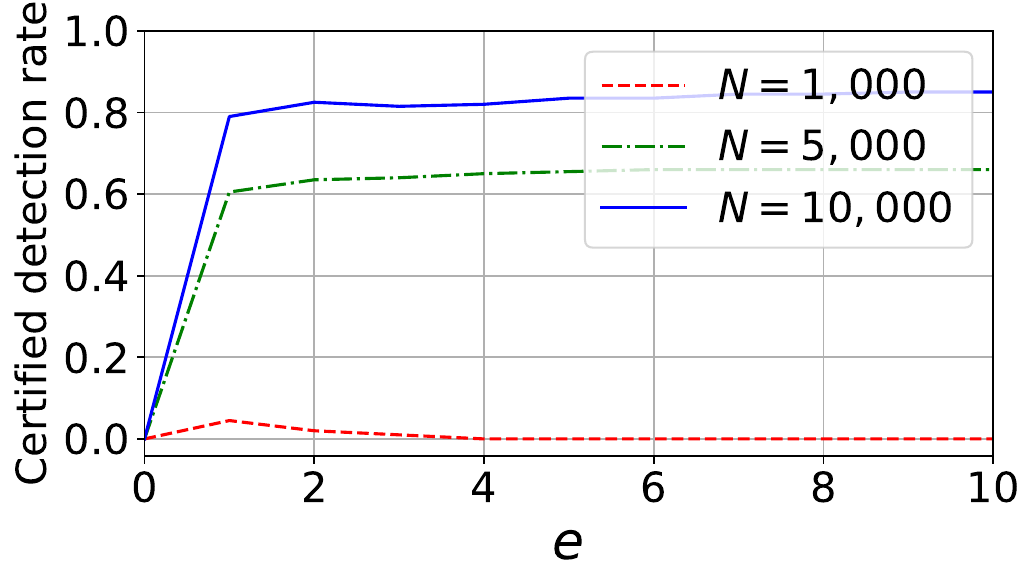}}
{\includegraphics[width=0.3\textwidth]{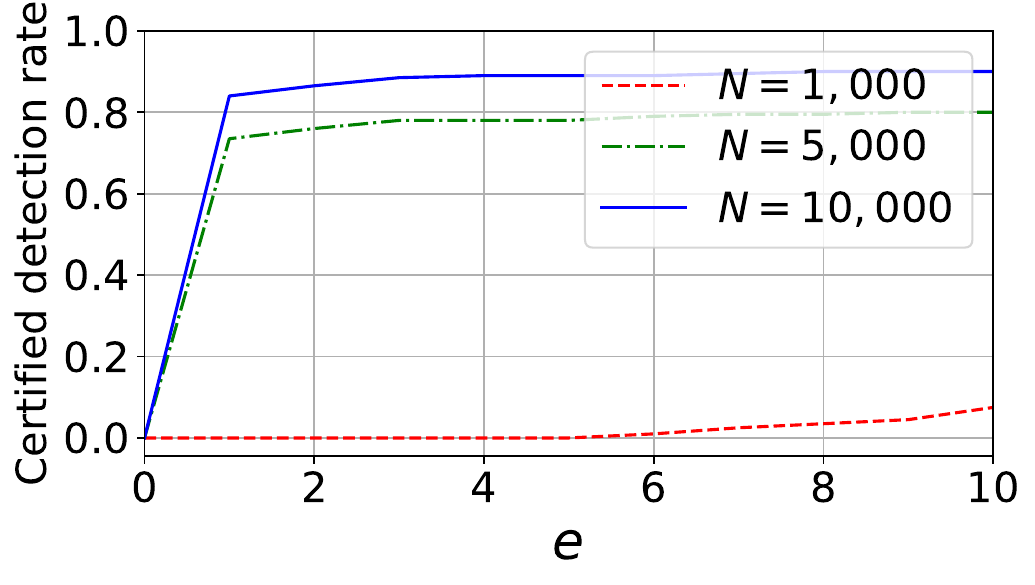}}

{\includegraphics[width=0.3\textwidth]{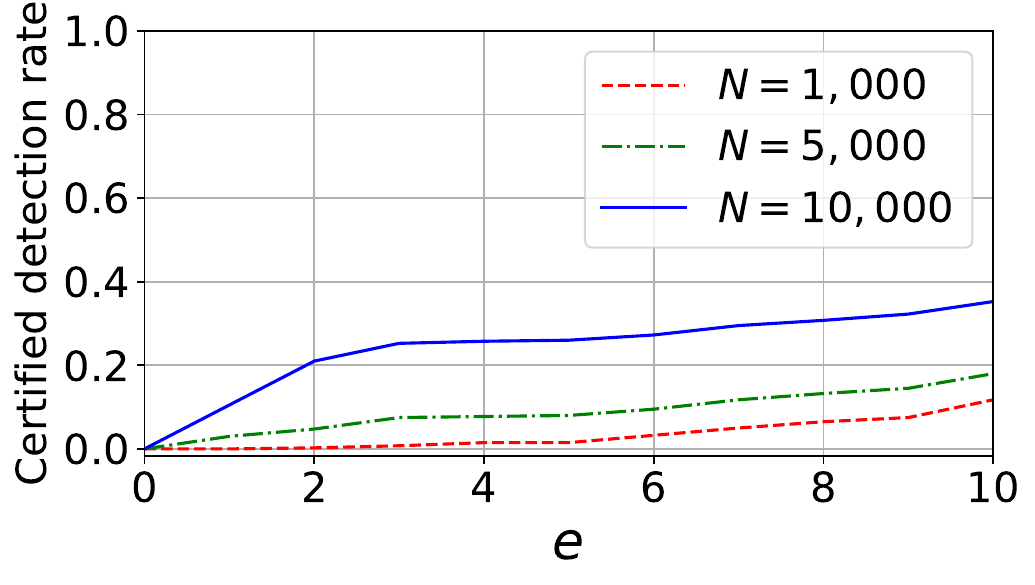}}
{\includegraphics[width=0.3\textwidth]{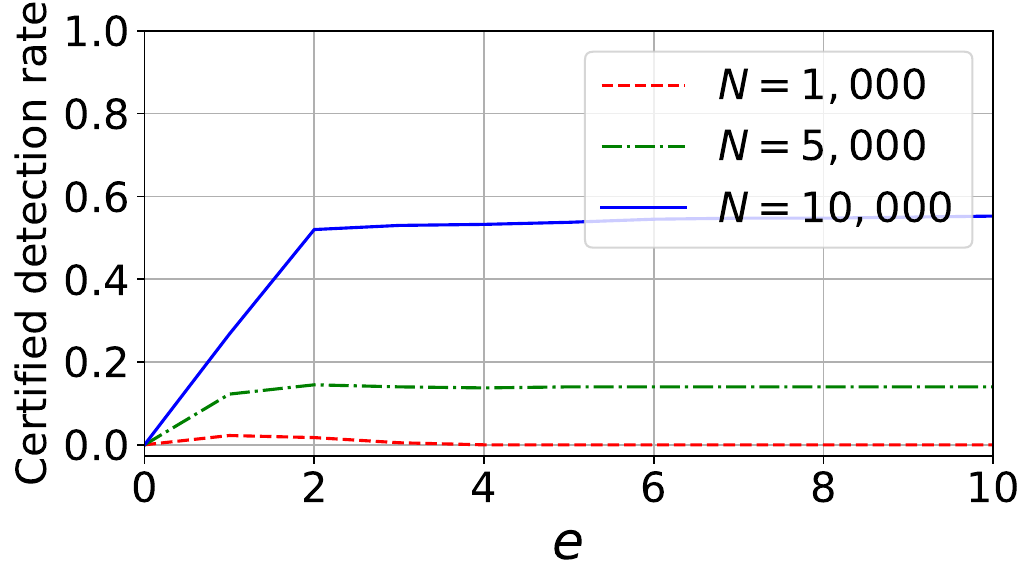}}
{\includegraphics[width=0.3\textwidth]{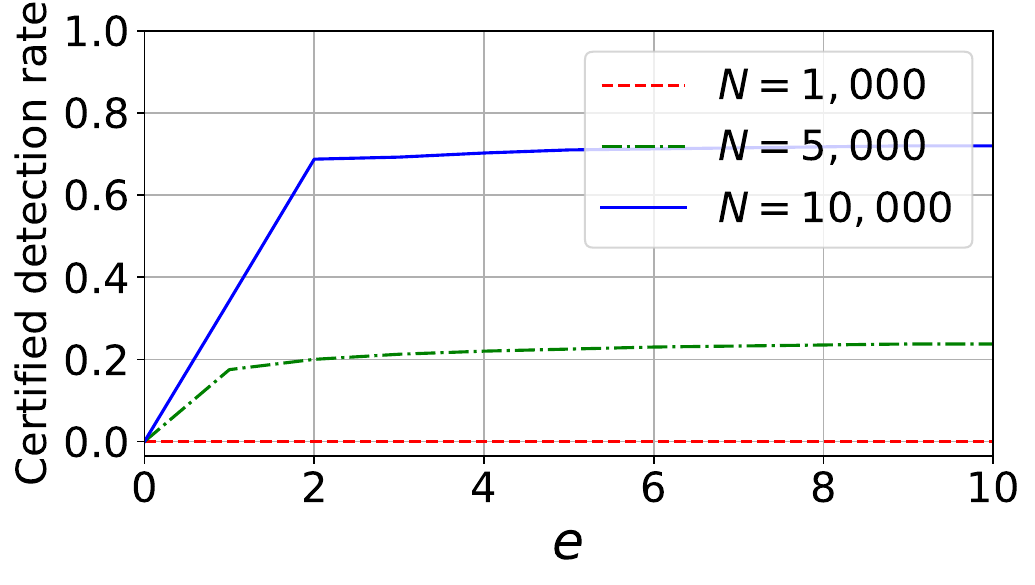}}

\subfloat[SST-2]{\includegraphics[width=0.3\textwidth]{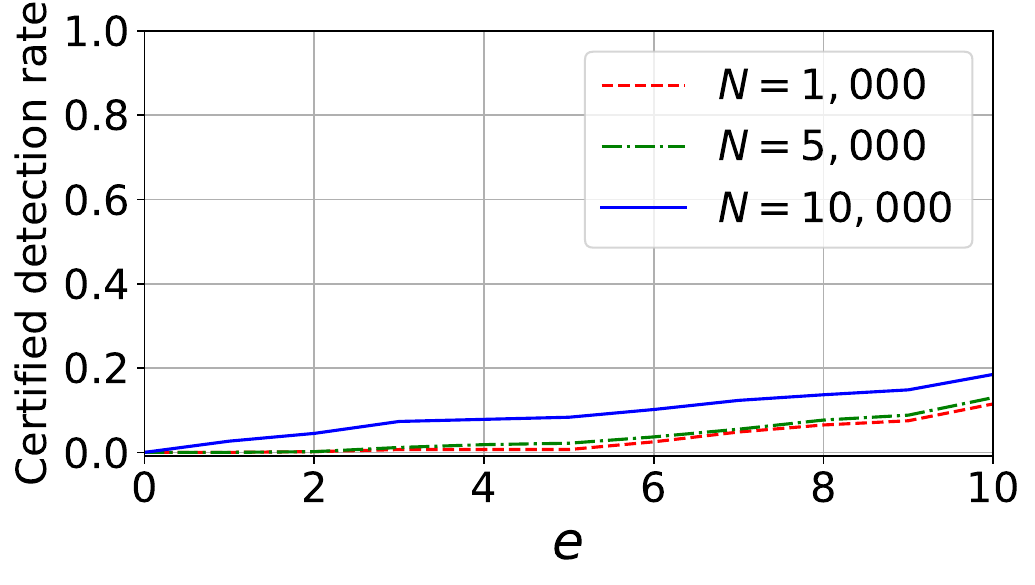}}\vspace{1mm}
\subfloat[IMDb]{\includegraphics[width=0.3\textwidth]{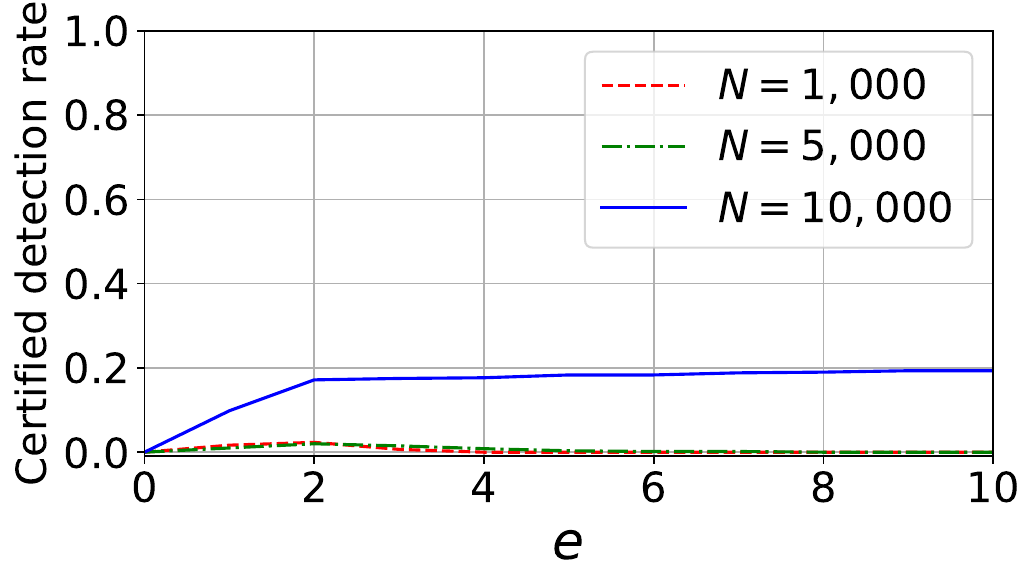}}\vspace{1mm}
\subfloat[AG-news]{\includegraphics[width=0.3\textwidth]{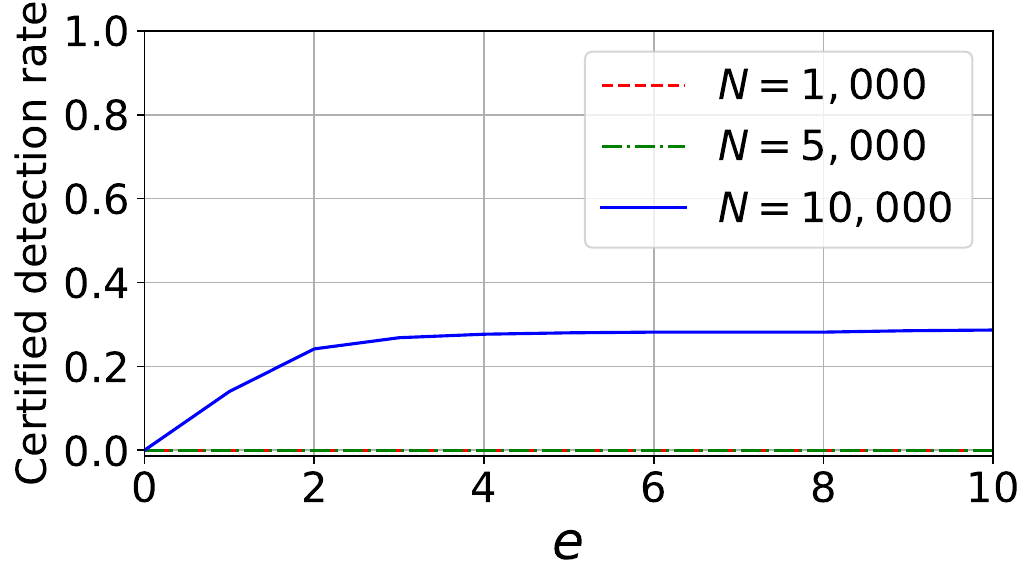}}\vspace{1mm}

\vspace{-4mm}
\caption{Impact of $N$ on certified detection rate for varying number of modified features (denoted by $T$). First row: $T=1$. Second row: $T=2$. Third row: $T=3$.
}
\label{fig-ablation-N-certified-detection}
\vspace{-7mm}
\end{figure*}

\begin{figure*}
\vspace{5mm}
\centering
{\includegraphics[width=0.3\textwidth]{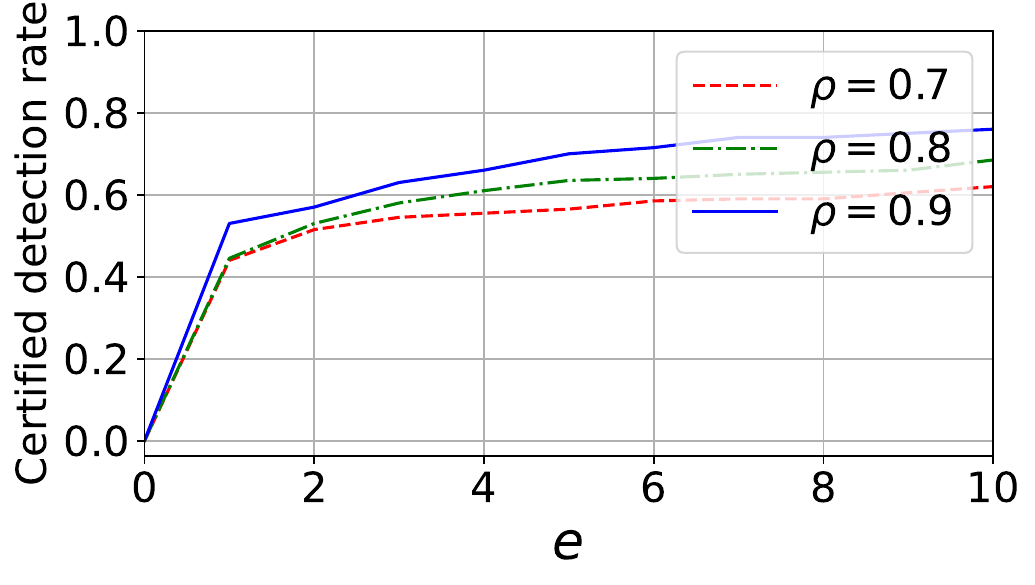}}
{\includegraphics[width=0.3\textwidth]{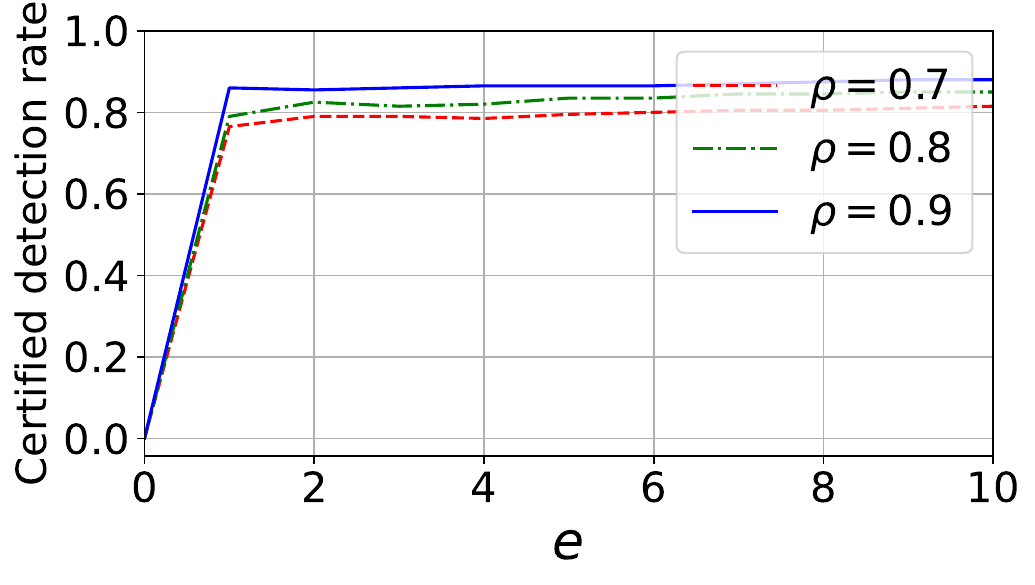}}
{\includegraphics[width=0.3\textwidth]{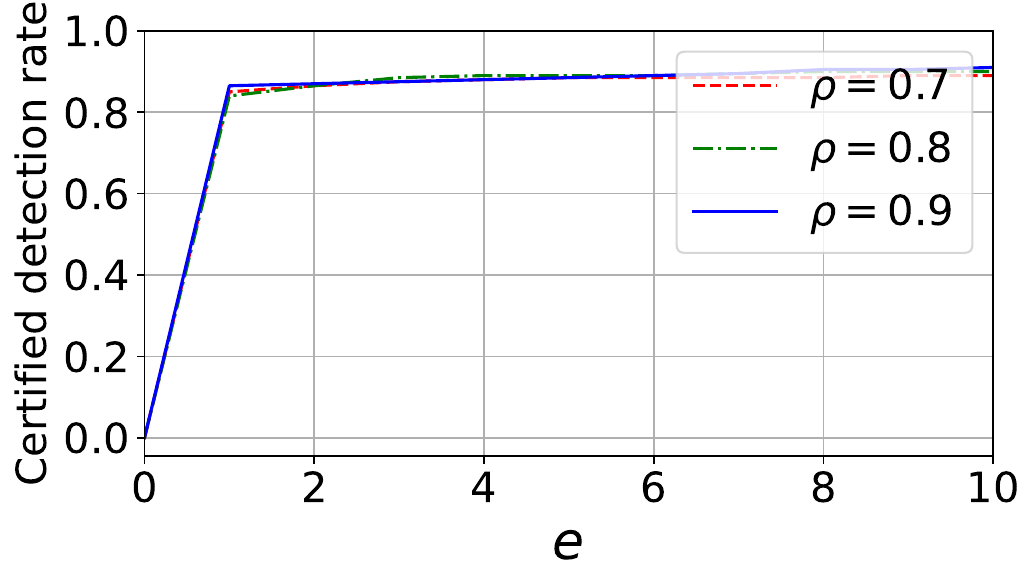}}

{\includegraphics[width=0.3\textwidth]{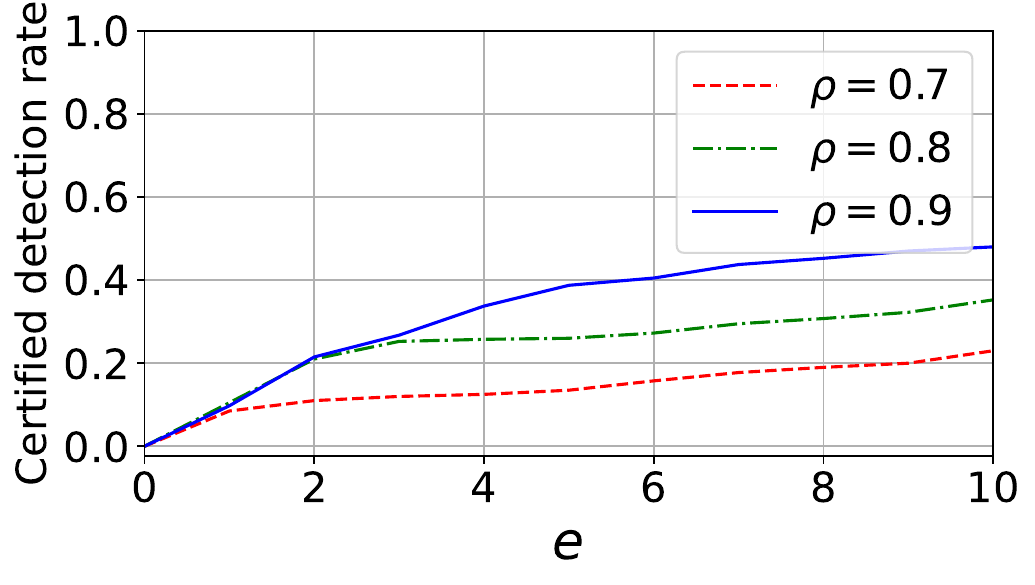}}
{\includegraphics[width=0.3\textwidth]{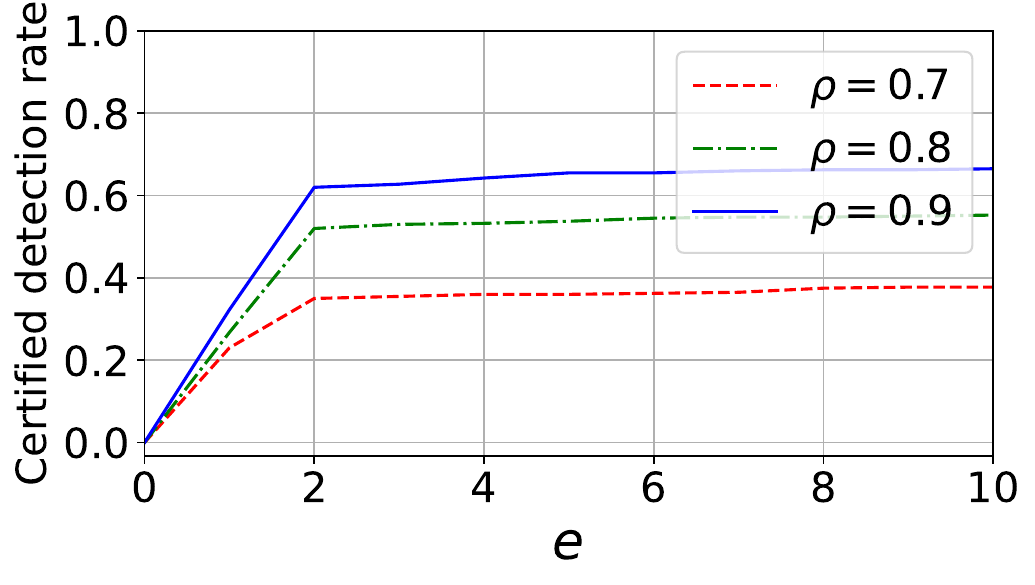}}
{\includegraphics[width=0.3\textwidth]{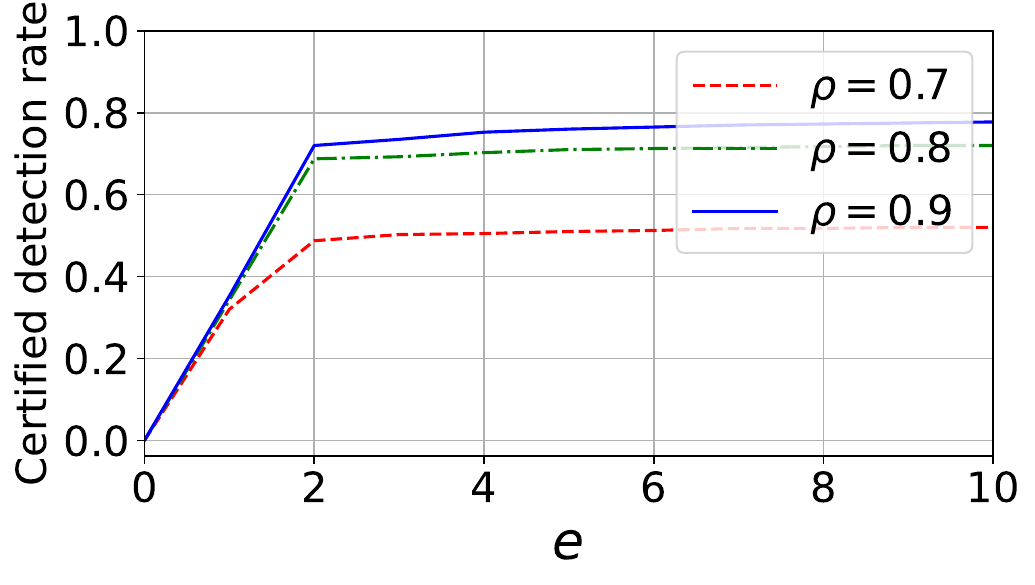}}

\subfloat[SST-2]{\includegraphics[width=0.3\textwidth]{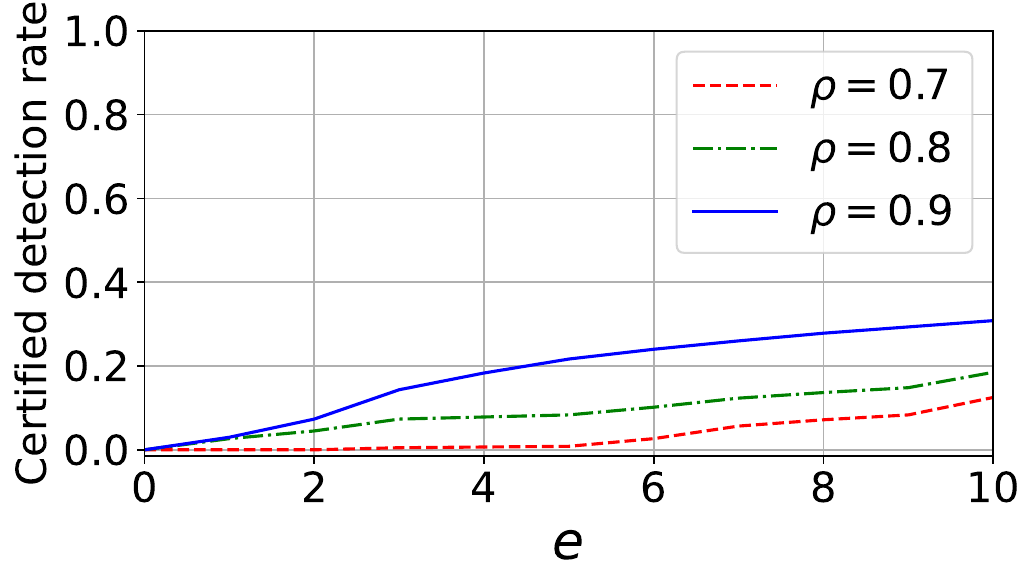}}\vspace{1mm}
\subfloat[IMDb]{\includegraphics[width=0.3\textwidth]{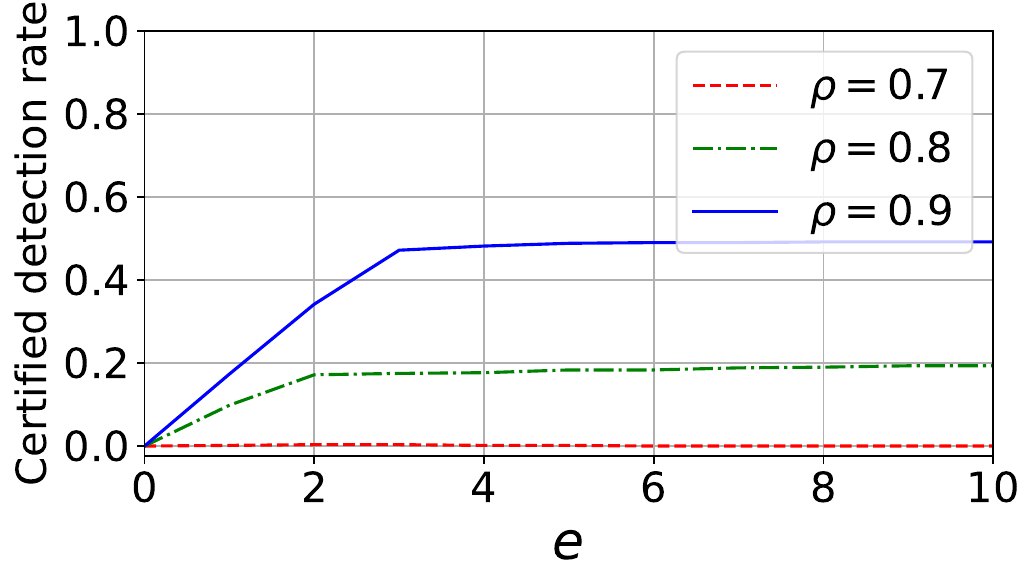}}\vspace{1mm}
\subfloat[AG-news]{\includegraphics[width=0.3\textwidth]{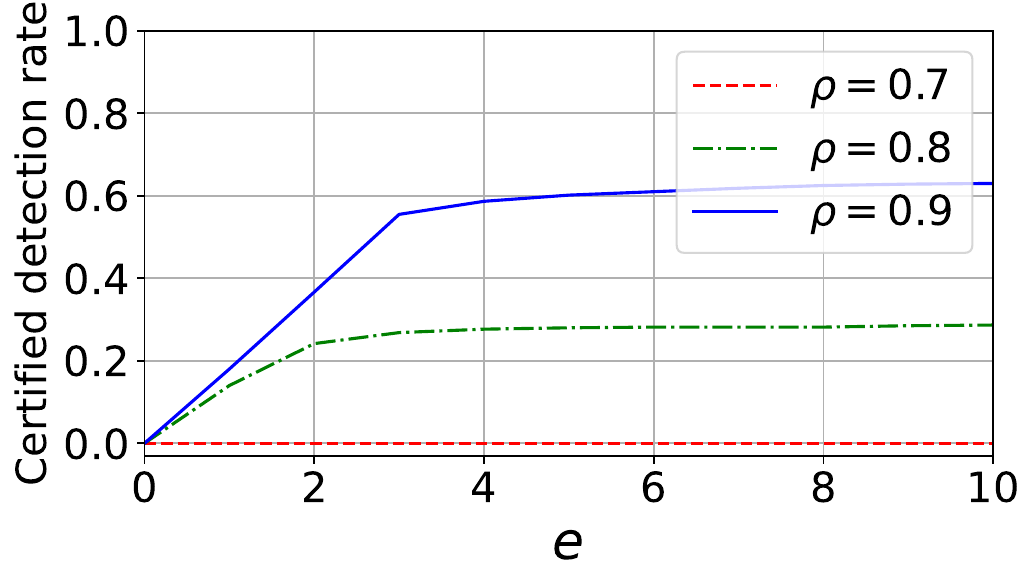}}\vspace{1mm}

\vspace{-4mm}
\caption{Impact of $\rho$ on certified detection rate for varying number of modified features (denoted by $T$). First row: $T=1$. Second row: $T=2$. Third row: $T=3$.
}
\label{fig-ablation-rho-certified-detection}
\vspace{-7mm}
\end{figure*}

\end{document}